\title{Alea Iacta Est: Auctions, Persuasion, Interim Rules, and Dice}
\author{Shaddin Dughmi \\shaddin@usc.edu \and David Kempe \\
   dkempe@usc.edu \and Ruixin Qiang\\ rqiang@usc.edu}
\newcommand{\camera}[2]{#1}
\newcommand{\vc}[1]{\ensuremath{\bm{#1}}\xspace}
\newcommand{\ConstructDice}{\ensuremath{\mbox{\textsc{Contruct Dice}}}\xspace}
\newcommand{\FindBarrierSet}{\ensuremath{\mbox{\textsc{Find Barrier Set}}}\xspace}
\newcommand{\Decrement}{\ensuremath{\mbox{\textsc{Decrement}}}\xspace}
\newcommand{\ConstructSymmetricDice}{\ensuremath{\mbox{\textsc{Contruct Symmetric Dice}}}\xspace}
\newcommand{\FindBarrierSetIID}{\ensuremath{\mbox{\textsc{Find Barrier Set IID}}}\xspace}
\newcommand{\DecrementSymmetricDice}{\ensuremath{\mbox{\textsc{Decrement Symmetric Dice}}}\xspace}
\newcommand{\DIESET}{\ensuremath{\mathcal{D}}\xspace}
\newcommand{\DIESETP}{\ensuremath{\mathcal{D}'}\xspace}
\newcommand{\DIE}[1][]{\ensuremath{%
\ifthenelse{\equal{#1}{}}{D}{D_{#1}}}\xspace}
\newcommand{\die}[2][]{\ensuremath{%
\ifthenelse{\equal{#1}{}}{D_{#2}}{D_{#1,#2}}}\xspace}
\newcommand{\DIEP}[1][]{\ensuremath{%
\ifthenelse{\equal{#1}{}}{D'}{D'_{#1}}}\xspace}
\newcommand{\diep}[2][]{\ensuremath{%
\ifthenelse{\equal{#1}{}}{D'_{#2}}{D'_{#1,#2}}}\xspace}
\newcommand{\INTE}{\ensuremath{\vc{x}}\xspace}
\newcommand{\INTEP}{\ensuremath{\vc{x'}}\xspace}
\newcommand{\INTEH}{\ensuremath{\vc{\hat{x}}}\xspace}
\newcommand{\interim}[2][]{\ensuremath{%
\ifthenelse{\equal{#1}{}}{x(#2)}{x_{#1}(#2)}}\xspace}
\newcommand{\interimp}[2][]{\ensuremath{%
\ifthenelse{\equal{#1}{}}{x'(#2)}{x'_{#1}(#2)}}\xspace}
\newcommand{\interimh}[2][]{\ensuremath{%
\ifthenelse{\equal{#1}{}}{\hat{x}(#2)}{\hat{x}_{#1}(#2)}}\xspace}
\newcommand{\SINTE}{\ensuremath{\vc{X}}\xspace}
\newcommand{\sinte}[3]{\ensuremath{x_{#1,#2,#3}}\xspace}
\newcommand{\TP}{\ensuremath{\vc{t}}\xspace}
\newcommand{\type}[1]{\ensuremath{t_{#1}}\xspace}
\newcommand{\PMF}[1][]{\ensuremath{%
\ifthenelse{\equal{#1}{}}{f}{f_{#1}}}\xspace}
\newcommand{\pmf}[2][]{\ensuremath{%
\ifthenelse{\equal{#1}{}}{f(#2)}{f_{#1}(#2)}}\xspace}
\newcommand{\PMFP}[1][]{\ensuremath{%
\ifthenelse{\equal{#1}{}}{f'}{f'_{#1}}}\xspace}
\newcommand{\pmfp}[2][]{\ensuremath{%
\ifthenelse{\equal{#1}{}}{f'(#2)}{f'_{#1}(#2)}}\xspace}
\newcommand{\VALF}[1][]{\ensuremath{%
\ifthenelse{\equal{#1}{}}{v}{v_{#1}}}\xspace}
\newcommand{\valf}[2][]{\ensuremath{%
\ifthenelse{\equal{#1}{}}{v_{#2}}{v_{#1,#2}}}\xspace}
\newcommand{\score}[1]{\ensuremath{\bar{v}_{#1}}\xspace}
\newcommand{\slack}{\ensuremath{\sigma}\xspace}
\newcommand{\NUMC}{\ensuremath{n}\xspace}
\newcommand{\NUMT}[1][]{\ensuremath{%
\ifthenelse{\equal{#1}{}}{m}{m_{#1}}}\xspace}
\newcommand{\NUMTP}[1][]{\ensuremath{%
\ifthenelse{\equal{#1}{}}{m'}{m'_{#1}}}\xspace}
\newcommand{\ALGO}[1][]{\ensuremath{%
\ifthenelse{\equal{#1}{}}{\mathcal{A}}{\mathcal{A}(#1)}}\xspace}
\newcommand{\cross}{\times}
\newcommand{\set}[1]{\left\{ #1 \right\}}
\newcommand{\intersect}{\cap}
\newcommand{\sm}{\setminus}
\renewcommand{\hat}{\widehat}
\renewcommand{\tilde}{\widetilde}
\renewcommand{\bar}{\overline}
\newcommand{\supp}{\text{supp}}
\DeclareMathOperator{\poly}{poly}
\DeclareMathOperator{\candidate}{\nu}
\def\min{\qopname\relax n{min}}
\def\max{\qopname\relax n{max}}
\newcommand{\NN}{\mathbb{N}}
\newcommand{\suf}{\bm{\beta}}
\newcommand{\pre}{\bm{\alpha}}
\def\sse{\subseteq}
\def\C{\mathcal{C}}
\def\D{\mathcal{D}}
\def\E{\mathcal{E}}
\def\I{\mathcal{I}}
\def\M{\mathcal{M}}
\begin{document}

\maketitle
\begin{abstract}
To select a subset of samples or ``winners'' from a population of
candidates, order sampling \cite{rosen1997asymptotic} and the $k$-unit
Myerson auction \cite{myerson} share a common scheme:
assign a (random) score to each candidate, then select the $k$
candidates with the highest scores.
We study a generalization of both order sampling and Myerson's allocation rule,
called \emph{winner-selecting dice}.
The setting for winner-selecting dice is similar to auctions with
feasibility constraints: candidates have random types drawn from
independent prior distributions,
and the winner set must be feasible subject to certain constraints.
Dice (distributions over scores) are assigned to each type,
and winners are selected to maximize the sum of the dice rolls,
subject to the feasibility constraints.
We examine the existence of winner-selecting dice that implement
prescribed probabilities of winning (i.e., an interim rule) for all types. 

Our first result shows that when the feasibility constraint is a matroid,
then for any feasible interim rule, there always exist
winner-selecting dice that implement it.
Unfortunately, our proof does not yield an efficient algorithm for
constructing the dice.
In the special case of a $1$-uniform matroid, i.e., only one winner
can be selected, we give an efficient algorithm that constructs
winner-selecting dice for any feasible interim rule.
Furthermore, when the types of the candidates are drawn in an 
i.i.d.~manner and the interim rule is symmetric across candidates,
unsurprisingly, an algorithm can efficiently construct symmetric dice that only
depend on the type but not the identity of the candidate.

One may ask whether we can extend our result to ``second-order'' interim rules,
which not only specify the winning probability of a type,
but also the winning probability conditioning on each other candidate's
type.
We show that our result does not extend, by exhibiting an
instance of \emph{Bayesian persuasion} whose optimal scheme is
equivalent to a second-order interim rule,
but which does not admit any dice-based implementation. 
\end{abstract}


\section{Introduction}
\label{sec:intro}


Economic design often features scenarios in which choices must be made
based on stochastic inputs. In auction design, bidders drawn from a
population interact with an auction mechanism, and the mechanism must
then choose winners and losers of the auction. In information
structure design, a principal observes information pertinent to the
various actions available to one or more decision makers, and must use
this information to recommend actions to the decision makers.
In such settings, an important framing device is the notion of
an \emph{interim rule} (also called a \emph{reduced form}) of an (ex-post) winner
selection rule, summarizing the probability for each candidate to be selected.

We focus on the simplest and most natural class of such decision
making scenarios, one which includes auctions and Bayesian persuasion
\citep{Kamenica11} as special cases. In a \emph{winner selection}
environment, there is a set of \emph{candidates} $\C$,
each equipped with a random attribute known as its \emph{type}.
A \emph{winner selection rule} is a
randomized function (or algorithm) which maps each profile of types,
one per candidate, to a choice of winning candidates,
subject to the requirement that the set of winners must belong to a
specified family $\I\subseteq 2^\C$ of feasible sets.
The winner selection rule is also referred to as an ex-post rule since
it specifies the winning probabilities conditioned on every realized
type profile.
In auctions, candidates correspond to bidders, and a winner selection
rule is an allocation rule of the auction.
In Bayesian persuasion, candidates correspond to actions available to
a decision maker, and a winner selection rule corresponds to a
persuasion scheme used by a principal to recommend one of the actions
to the decision maker.
We restrict our attention to winner selection scenarios in which the
types of different candidates are independently distributed.

We distinguish two classes of interim rules: \emph{first-order} and
\emph{second-order}. The former is the traditional notion from auction
theory, while the latter is the notion better suited for persuasion.
A \emph{first-order interim rule} specifies, for each candidate $i$ and
type $t$ of candidate $i$, the conditional probability of $i$ winning
given that his type is $t$. A \emph{second-order interim rule}
specifies more information:
for each pair of candidates $i,j$ and type $t$ of candidate $j$,
it specifies the conditional probability of $i$ 
winning given that $j$ has type $t$. 
First-order interim rules, when
combined with a payment rule, suffice for evaluating the welfare,
revenue, and incentive-compatibility of a single-item auction. For
Bayesian persuasion, second-order interim rules are needed for
evaluating the incentive constraints of a persuasion scheme. 

%

Our motivation for studying winner selection at this level of
generality stems from the success of Myerson's \cite{myerson}
famous and elegant characterization of revenue-optimal single-item
auctions when bidder type distributions are independent. In that
special case, Myerson showed that the optimal \emph{single-item} auction
features a particularly structured winner-selection rule: each type is
associated with a virtual value, and given a profile of reported
types, the rule selects the bidder with the highest (non-negative)
virtual value as the winner.

Interestingly, order sampling \cite{rosen1997asymptotic} works in a
similar way as Myerson's auction, in the special case when the feasible
sets are sets of $k$ or fewer candidates,
and each candidate has only one type.
Order sampling assigns each candidate a random score variable (a die).
The winners are the candidates with the $k$ highest score variables.
In the language of order sampling, virtual value functions
define a single-sided die for each type.
Unlike Myerson's auction, there is no notion of ``revenue'' or
``incentive constraints'' in order sampling.
The task
is simply to find dice that will induce a
prescribed first-order interim rule. 

As a generalization of order sampling and Myerson's virtual-value approach, a \emph{dice-based winner-selection rule} assigns each type a die, and selects the feasible set of winners maximizing the sum of the dice rolls. 
\emph{We explore the extent to which dice-based rules are applicable
  beyond single-parameter auctions or single-type environments, to
  winner selection with independent type distributions
  under more complex constraints.}
In particular, we examine whether dice-based winner-selection rules
exist for winner selection subject to matroid constraints and for
Bayesian persuasion.

\subsection{Our Results}


As mentioned previously, all of our results are restricted to settings
in which the candidates' type distributions are independent. 
It follows from Myerson's characterization that every first-order
interim rule \emph{corresponding to some optimal auction} admits a
dice-based implementation.%
\footnote{As we show in \camera{the appendix}{the full version}, there
  are interim rules which are not optimal for any auction.}
Our main result (in Section~\ref{sec:exist})
is an existential proof showing that \emph{every}
feasible first-order interim rule with respect to a matroid constraint
admits a dice-based implementation.
This illustrates that the structure revealed by Myerson's
characterization is more general, and applies to other settings 
in which only first-order interim information is relevant.
For example, single-item auctions with (public or private) budgets are
such a setting (see, e.g., \cite{pai:vohra}).
Our result also provides a generalization of order sampling from the
$k$-winner setting to the general matroid, multi-type setting.

Beyond the existential proof of dice-based implementations,
we show (in Section~\ref{sec:interim-to-dice})
that for single-winner environments,
an algorithm can construct the dice-based rule \emph{efficiently}.
When the types are identically distributed, we also constructively
show (in \camera{Section~\ref{sec:iid}}{the full version})
that every first-order interim rule
which is symmetric across candidates admits a symmetric dice implementation;
i.e., different candidates have the same die for the same type.
This is consistent with Myerson's symmetric characterization of
optimal single-item auctions with i.i.d.~bidders, and generalizes it
to any other first-order single-winner selection setting in which
candidates are identical.
Single-item auctions with identically distributed budgeted bidders are
such a setting, and a symmetric dice-based implementation of the
optimal allocation rule was already known from \citep{pai:vohra}.

For single-winner cases, we also show the converse direction:
how to efficiently compute the first-order
interim rule of a given dice-based winner selection rule.
In effect, these results show that collections of dice are a
computationally equivalent description of single-winner first-order
interim information.
This implies a kind of equivalence between the two dominant approaches
for mechanism design: the Myersonian approach based on virtual values
(i.e., dice), and the Borderian approach based on optimization over interim rules. 

In an attempt to leverage the same kinds of insights for
  Bayesian persuasion, we examine
(in Sections~\ref{sec:dice-to-interim} and \ref{sec:persuasion})
the dice implementability of second-order interim rules.
When the candidate type distributions are non-identical,
we show an impossibility result.
We construct an instance of Bayesian persuasion with independently
distributed non-identical actions,
and show that no optimal persuasion scheme for this instance can be
implemented by dice.
Since second-order interim rules are sufficient for evaluating the
objective and constraints of Bayesian persuasion,
this implies that there exist second-order interim rules which are not
dice-implementable.
This rules out the Myersonian approach for characterizing and
computing optimal schemes for Bayesian persuasion 
with independent non-identical actions, complementing the negative
result of \cite{dughmi2016algorithmic} which rules out the Borderian
approach for the same problem. 

Our impossibility result disappears when the actions are i.i.d.,
since second-order interim rules collapse to first-order interim rules in
symmetric settings.
In particular, as we show in Section~\ref{sec:persuasion},
our results for first-order interim rules,
combined with those of \cite{dughmi2016algorithmic},
imply that Bayesian persuasion with i.i.d.~actions admits an optimal
dice-based scheme, which can be computed efficiently.

\subsection{Additional Discussion of Related Work}
\label{sec:related}
\citet{myerson} was the first to characterize revenue-optimal single-item auctions; this characterization extends to single-parameter mechanism design settings more generally (see, e.g., \cite{hartlinebook}). 
The (first-order) interim rule of an auction, also known as its reduced form, was first studied by \citet{maskin_riley_84} and \citet{matthews_84}. The inequalities characterizing the space of feasible interim rules were described by \citet{border,border2007reduced}. Border's analytically tractable characterization of feasible interim rules has served as a fruitful framework for mechanism design, since an optimal auction can be viewed as the solution of a linear program over the space of interim rules. Moreover, this characterization has enabled the design of efficient algorithms for recognizing interim rules and optimizing over them, by \citet{cai2012algorithmic} and \citet{alaei12}. This line of work has served as a foundation for much of the recent literature on Bayesian algorithmic mechanism design in multi-parameter settings.

It is important to contrast our dice-based rule with the characterization of \citet{cai2012algorithmic}. In particular, the results of \citet{cai2012algorithmic} imply that every first-order interim rule can be efficiently implemented as a distribution over virtual value maximizers. In our language, this implies the existence of an efficiently computable dice-based implementation \emph{in which the dice may be arbitrarily correlated}. Our result, in contrast, efficiently computes a family of \emph{independent dice} implementing any given first-order interim rule in single-winner settings, and shows the existence of a dice-based rule in matroid settings. This is consistent with Myerson's characterization, in which virtual values are drawn independently.
  
\citet{alaei12} also studied winner-selection environments, under the
different name ``service based environments.'' For single-winner
settings, they proposed a mechanism called stochastic sequential
allocation (SSA). The mechanism also implements any feasible
first-order interim rule, by creating a token of winning and
transferring the token sequentially from one candidate to another,
with probabilities defined by an efficiently computed transition
table.
Dice can be considered as the special case of SSA in which the
  transition probabilities are independent of the current owner of the token.
  
As another motivation for our focus on dice-based rules,
order sampling studies how to sample $k$ winners from $n$ candidates
with given inclusion probabilities (i.e., implement an interim rule),
by assigning a random score variable (die) to each candidate.
\citet{rosen1997asymptotic} showed that parameterized Pareto
distributions can be used to implement a given interim rule asymptotically.
\citet{aires2002order} proved the existence of an order sampling
scheme that exactly implements any feasible interim rule.
Our existential proof is a generalization of the proof of \cite{aires2002order} to settings with multiple types and matroid constraints. 

The Bayesian persuasion model is due to \citet{Kamenica11}, and is the most influential model in the general space of information structure design (see the survey by \citet{shaddin_ISD_survey} for references). Bayesian persuasion was examined algorithmically by \citet{dughmi2016algorithmic}, who observed its connection to auction theory and interim rules, and examined the computational complexity of optimal schemes through the lens of optimization over interim rules. 

Of particular relevance to our work is the negative result of \citet{dughmi2016algorithmic} for Bayesian persuasion with independent non-identical actions: it is \#P-hard to compute the interim rule (first- or second-order) of the optimal scheme, or more simply even the sender's optimal utility. Most notable about this result is what it \emph{does not} rule out: an algorithm implementing the optimal persuasion scheme ``on the fly,'' in the sense that it efficiently samples the optimal scheme's (randomized) recommendation when given as input the profile of action types. Stated differently, the negative result of \citet{dughmi2016algorithmic} merely rules out the Borderian approach for this problem, leaving other approaches --- such as the Myersonian one --- viable as a means of obtaining an efficient ``on the fly'' implementation.  This would not be unprecedented: \citet{gopalan} exhibit a simple single-parameter auction setting for which the optimal interim rule is \#P hard to compute, yet Myerson's virtual values can be sampled efficiently and used to efficiently implement the optimal auction. Our negative result in Section~\ref{sec:persuasion} rules out such good fortune for Bayesian persuasion with independent non-identical actions: there does not exist a (Myersonian) dice-based implementation of the optimal persuasion scheme in general.


\section{Preliminaries} \label{sec:preliminaries}

\subsection{Winner Selection}
Consider choosing a set of winners from among \NUMC \emph{candidates}.
Each candidate $i$ has a type $\type{i} \in T_i$,
drawn independently from a distribution \PMF[i].
A winner-selection rule \ALGO maps each type profile
$\TP = (\type{1}, \ldots, \type{\NUMC})$, possibly randomly,
to one of a prescribed family of feasible sets $\I\subseteq 2^{[n]}$.
When $i\in \ALGO(\TP)$, we refer to $i$ as a
\emph{winning candidate}, and to $t_i$ as his \emph{winning type}.
Writing $\PMF = \PMF[1] \times \cdots \times \PMF[n]$ for the
  (independent) joint type distribution,
we also refer to $(\PMF, \I)$ as the \emph{winner-selection environment}. 
When $\I$ is the family of singletons, as in the setting of the single
item auction,
we call $(\PMF,\I)$ a \emph{single-winner environment}.

This general setup captures the allocation rules of general
auctions with independent unit-demand buyers,
albeit without specifying payment rules or imposing incentive constraints.
Moreover, it captures Bayesian persuasion with independent action
payoffs, albeit without enforcing persuasiveness (also called \emph{obedience})
constraints.


\subsection{Matroids}

In this paper, we focus on settings in which the feasible sets $\I$ are
the independent sets of a matroid.
We use the standard definition of a matroid $\M$ as a pair $(E, \I)$,
where $E$ is the \emph{ground set} and $\I\subseteq 2^E$ is a family
of so-called \emph{independent sets}, satisfying the three matroid axioms.
We also use the standard definitions of a \emph{circuit} and
\emph{rank function} $r_\M : 2^E \to \NN$.
The \emph{restriction} $\M|S$ of $\M=(E,\I)$ to some $S \sse E$ is the
matroid $(E, \I \intersect 2^S)$.\footnote{Note that we deviate
  slightly from the standard definition in that we do not restrict the
  ground set.}
For details on matroids, we refer the reader to \citet{oxley2006matroid}. 

A matroid $\M = (E, \I)$ is \emph{separable} if it is a \emph{direct sum}
of two matroids $\M_1=(E_1, \I_1)$ and $\M_2=(E_2, \I_2)$.
Namely, $E = E_1\uplus E_2$, $\I = \Set{A\cup B}{A\in \I_1, B\in \I_2}$. 
Note that if $\M$ is non-separable, then $r_\M(E) < |E|$;
otherwise $\M$ is the direct sum of singleton matroids.
We use the following theorem.

\begin{theorem}[\citet{whitney1935abstract}]
  (1) When $\M=(E, \I)$ is a non-separable matroid, for every $a,b\in E$, there is a circuit containing both $a$ and $b$.
  (2) Any separable matroid $\M$ is a direct sum of two or more
  non-separable matroids called the \emph{components} of $\M$. 
\end{theorem}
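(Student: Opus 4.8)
\emph{Proof proposal.} The plan is to derive both parts from the behavior of the \emph{circuit-connectivity} relation on the ground set. Define a relation $\approx$ on $E$ by declaring $a \approx b$ whenever $a = b$ or $\M$ has a circuit containing both $a$ and $b$. This relation is manifestly reflexive and symmetric, so the whole theorem reduces to a single claim: $\approx$ is \emph{transitive}, hence an equivalence relation. I expect this transitivity to be the main obstacle, and I treat it last.

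Granting transitivity, the remainder is essentially formal. Let $E_1,\dots,E_k$ be the equivalence classes of $\approx$. Since any two elements of a circuit are $\approx$-related, every circuit of $\M$ lies inside a single class $E_i$. A matroid is determined by its circuits, and the circuits of the restriction $\M|E_i$ are exactly the circuits of $\M$ contained in $E_i$; hence a set $S$ is independent in $\M$ if and only if $S \cap E_i$ is independent in $\M|E_i$ for every $i$, which is precisely the statement that $\M = \M|E_1 \oplus \cdots \oplus \M|E_k$. Each $\M|E_i$ is non-separable: if it were a direct sum $\M|A \oplus \M|B$ with $A$ and $B$ non-empty, then no circuit inside $E_i$ could meet both $A$ and $B$, so no circuit would witness $a \approx b$ for $a \in A$ and $b \in B$, contradicting that $E_i$ is a single $\approx$-class. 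This gives part (2), with the $E_i$ the components; the decomposition is moreover unique, since in any direct-sum decomposition of $\M$ into non-separable matroids, part (1) forces the ground set of each summand to be a single $\approx$-class. Finally, if $\M$ itself is non-separable, the decomposition above cannot split $\M$ into two non-empty summands, so $k = 1$; then $E$ is a single $\approx$-class, which says exactly that every two elements lie on a common circuit, i.e., part (1).

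It remains to prove that $\approx$ is transitive, which I would reduce to the following circuit-exchange statement: if $C_1,C_2$ are distinct circuits with $C_1 \cap C_2 \neq \emptyset$, and $x \in C_1 \setminus C_2$ and $y \in C_2 \setminus C_1$, then $\M$ has a circuit $C_3$ with $x,y \in C_3 \subseteq C_1 \cup C_2$. Indeed, if $a \approx b$ via a circuit $C_1 \ni a,b$ and $b \approx c$ via a circuit $C_2 \ni b,c$ with $a,b,c$ distinct, then either $a \in C_2$ or $c \in C_1$ (in which case we are done), or $a \in C_1 \setminus C_2$ and $c \in C_2 \setminus C_1$, and applying the exchange statement with $x=a$ and $y=c$ yields a circuit through $a$ and $c$. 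The exchange statement itself I would prove by a minimal-counterexample argument on $|C_1 \cup C_2|$: picking a common element $e$ and applying strong circuit elimination gives a circuit $C_3 \ni x$ contained in $(C_1 \cup C_2) \setminus \{e\}$; if $y \in C_3$ we are done, and otherwise one argues that $C_3$ must contain an element of $C_2 \setminus C_1$, so that pairing $C_3$ with $C_1$ or with $C_2$ produces two circuits whose union is strictly smaller than $C_1 \cup C_2$, contradicting minimality. The delicate point is exactly this last step --- choosing things so that the union genuinely shrinks; this circuit-exchange lemma is a standard consequence of the circuit axioms, and I would follow the textbook treatment (e.g., \cite{oxley2006matroid}). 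As an aside, part (2) on its own can be obtained without transitivity, by noting that the family of sets $S \subseteq E$ such that every circuit of $\M$ lies inside $S$ or inside $E \setminus S$ is a sublattice of $2^E$ closed under union, intersection, and complement; its minimal non-empty members partition $E$ into the components. But part (1) still needs the circuit-exchange input.
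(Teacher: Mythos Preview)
The paper does not prove this theorem; it is quoted in the preliminaries as a classical result of \citet{whitney1935abstract} and used as a black box, so there is no in-paper argument to compare against.

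Your proposal is a correct and standard route to Whitney's theorem. Defining the circuit-connectivity relation, proving it is an equivalence via a strong circuit-exchange lemma, and reading off both statements from the equivalence classes is exactly the textbook development (essentially the one in \cite{oxley2006matroid}). The one place you flag as delicate---making the union genuinely shrink in the minimal-counterexample argument---is real: after eliminating $e \in C_1 \cap C_2$ to get $C_3 \ni x$ with $C_3 \subseteq (C_1 \cup C_2)\setminus\{e\}$ and $y \notin C_3$, pairing $C_3$ with $C_2$ need not drop $e$ from the union since $e \in C_2$. The usual fix is to choose $C_3$ more carefully (e.g., among all circuits in $C_1 \cup C_2$ through $x$ avoiding $y$, take one maximizing $|C_3 \cap C_2|$, or run the induction on $|C_1 \triangle C_2|$ instead of $|C_1 \cup C_2|$), and your deferral to Oxley is appropriate here. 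The derivation of parts (1) and (2) from transitivity is clean and correct; in particular, your argument that $\M$ is the direct sum of the restrictions to the $\approx$-classes (via ``every circuit lies in a single class'') and that each restriction is non-separable is exactly right.
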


In most of the remainder of the paper, we focus on winner selection
environments $(\PMF,\I)$ where $\I$ is the family of independent sets of
a matroid $\M$.
We therefore also use $(\PMF,\M)$ to denote the environment.

\subsection{Interim Rules and Border's Theorem} \label{prelim:interim}
A \emph{(first-order) interim rule} \INTE specifies the winning probability
$\interim[i]{t} \in [0,1]$ for all $i \in [\NUMC], t \in T_i$ in an
environment $(\PMF, \I)$.
More precisely, we say that a winner-selection rule \ALGO
\emph{implements} the interim rule \INTE for a prior \PMF if it
satisfies the following:
if the type profile $\TP = (\type{1}, \ldots, \type{n})$
is drawn from the prior distribution
$\PMF = \PMF[1] \cross \cdots \cross \PMF[\NUMC]$,
then $\ProbC{i\in \ALGO[\TP] }{\type{i} = t} = \interim[i]{t}$.
An interim rule is \emph{feasible} (or \emph{implementable}) within an
environment $(\PMF, \I)$
if there is a winner-selection rule implementing it that always
outputs an independent set of $\I$.

\subsubsection{Border's theorem and implications for  single-winner environments}
The following theorem characterizes the space
of feasible interim rules for single-winner settings.

\begin{theorem}[\citet{border,border2007reduced}]
  \label{thm:border}
An interim rule \INTE is feasible for a single-winner setting if and only if for all possible type
subsets
$S_1\subseteq T_1, S_2\subseteq T_2, \ldots,
 S_n\subseteq T_{\NUMC}$,  
\begin{equation}
  \sum_{i=1}^\NUMC \sum_{t \in S_i} \pmf[i]{t} \interim[i]{t}
  \le 1 - \prod_{i=1}^\NUMC \left( 1 - \sum_{t\in S_i} \pmf[i]{t} \right).
\label{eqn:border-constraint}
\end{equation}
\end{theorem}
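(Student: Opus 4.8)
The plan is to prove both directions of this classical characterization: necessity by a union bound, and sufficiency via the standard max-flow/min-cut argument, in which the minimum cuts of an auxiliary network correspond exactly to the type subsets $S_1,\ldots,S_\NUMC$ appearing in~\eqref{eqn:border-constraint}. For necessity, suppose a winner-selection rule \ALGO implements \INTE while always outputting a singleton or the empty set, and let $W$ be the induced random winner (with $W=\nowin$ when there is none). For any $S_1\sse T_1,\ldots,S_\NUMC\sse T_\NUMC$,
\begin{align*}
\sum_{i=1}^\NUMC \sum_{t \in S_i} \pmf[i]{t}\,\interim[i]{t}
&= \sum_{i=1}^\NUMC \Pr\left[\, W = i,\ \type{i} \in S_i \,\right] \\
&= \Pr\left[\, \exists i :\ W = i,\ \type{i} \in S_i \,\right] \\
&\le \Pr\left[\, \exists i :\ \type{i} \in S_i \,\right] \\
&= 1 - \prod_{i=1}^\NUMC\left(1 - \sum_{t \in S_i}\pmf[i]{t}\right),
\end{align*}
using in turn that $\interim[i]{t}$ is the winning probability conditioned on $\type{i}=t$, that the events $\{W=i\}$ are pairwise disjoint (at most one winner), that ``$W=i$ and $\type{i}\in S_i$'' implies candidate $i$ has a type in $S_i$, and that the types are independent.

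For sufficiency, assume \INTE satisfies every inequality~\eqref{eqn:border-constraint}; I construct an implementing rule from a network flow. The network has a source $s$, a sink $z$, one node for each type profile $\TP\in T_1\times\cdots\times T_\NUMC$, and one node for each pair $(i,t)$ with $i\in[\NUMC]$ and $t\in T_i$. It has arcs $s\to\TP$ of capacity $\pmf{\TP}=\prod_i\pmf[i]{\type{i}}$; arcs $\TP\to(i,\type{i})$ of capacity $+\infty$, one per candidate $i$; and arcs $(i,t)\to z$ of capacity $\pmf[i]{t}\,\interim[i]{t}$. A fractional $s$--$z$ flow that saturates every arc into $z$ is exactly a winner-selection rule implementing \INTE: on profile \TP, declare $i$ the winner with probability equal to the flow on the arc $\TP\to(i,\type{i})$ divided by $\pmf{\TP}$ (these values are nonnegative and sum to at most $1$, by the source-arc capacity and flow conservation at the profile node), and otherwise declare no winner; flow conservation at $(i,t)$ together with saturation of $(i,t)\to z$ then yields $\Pr\left[W=i\mid\type{i}=t\right]=\interim[i]{t}$. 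Such a flow exists if and only if the maximum flow value equals $V:=\sum_i\sum_{t\in T_i}\pmf[i]{t}\,\interim[i]{t}$, the total capacity into $z$; since the cut placing every profile-node and every pair-node on the source side has capacity exactly $V$, the minimum cut is at most $V$, and by max-flow/min-cut it therefore suffices to show that \emph{every} $s$--$z$ cut has capacity at least $V$.

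The crux is the min-cut computation. Since the arcs $\TP\to(i,\type{i})$ have infinite capacity and so cannot be cut, a finite cut is determined by the family $S_i\sse T_i$ of types whose pair-node $(i,t)$ lies on the sink side: every profile \TP with $\type{i}\in S_i$ for some $i$ must then lie on the sink side, while any other profile may be left on the source side, which is cheapest. Such a cut severs the arcs $(i,t)\to z$ with $t\notin S_i$, of total capacity $\sum_i\sum_{t\notin S_i}\pmf[i]{t}\,\interim[i]{t}$, together with the arcs $s\to\TP$ over all profiles with $\type{i}\in S_i$ for some $i$, of total capacity $1-\prod_i\left(1-\sum_{t\in S_i}\pmf[i]{t}\right)$. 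Requiring the sum of these two quantities to be at least $V$, and cancelling the common ``$t\notin S_i$'' terms, leaves precisely~\eqref{eqn:border-constraint} for the subsets $(S_i)$. Hence all of Border's inequalities hold if and only if every cut has capacity at least $V$, which in turn holds if and only if \INTE is feasible.

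The step I expect to be the main obstacle is this min-cut characterization: arguing that optimal cuts necessarily have the product structure indexed by $(S_1,\ldots,S_\NUMC)$ --- which is what the infinite-capacity profile arcs force --- and then recognizing that the resulting condition is, term for term, Border's inequality. The remaining ingredients are routine: treating the infinite capacities rigorously (standard), extracting from a maximum flow a concrete and rational winner-selection rule, and --- were the type spaces continuous rather than finite --- approximating by finite discretizations and passing to a limit, which is unnecessary here since each $T_i$ is finite.
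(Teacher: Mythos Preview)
The paper does not prove this theorem; it is stated in the preliminaries (Section~\ref{prelim:interim}) as a known result attributed to \citet{border,border2007reduced}, and is used only as a tool. There is therefore no ``paper's proof'' to compare against.

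Your argument is correct. The necessity direction is the standard disjointness/union-bound calculation, and your sufficiency direction is the well-known max-flow/min-cut proof of Border's theorem: the bipartite network with profile nodes on one side and $(i,t)$ nodes on the other, infinite-capacity middle arcs forcing any finite cut to be indexed by the sets $S_i$ of sink-side pair nodes, and the resulting cut-capacity inequality collapsing to~\eqref{eqn:border-constraint}. The one point worth making explicit in a polished write-up is that, once the $S_i$ are fixed, placing on the source side every profile \TP with $\type{i}\notin S_i$ for all $i$ is indeed optimal (it only removes cut arcs $s\to\TP$ and adds none, since all outgoing arcs from such \TP go to source-side pair nodes); you state this but only parenthetically. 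Everything else is routine.
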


The following result leverages Theorem~\ref{thm:border} to show that
efficient algorithms exist for checking the feasibility of an interim
rule, and for implementing a feasible interim rule. 

\begin{theorem}[\citep{cai2012algorithmic,alaei12}] \label{thm:borderpolytime}
Given explicitly represented priors $\PMF[1], \ldots, \PMF[\NUMC]$ and
an interim rule \INTE in a single-winner setting, the feasibility of \INTE can be checked
in time polynomial in the number of candidates and types.
Moreover, given a feasible interim rule \INTE,
an algorithm can find a winner-selection rule implementing \INTE in time
polynomial in the number of candidates and types. 
\end{theorem}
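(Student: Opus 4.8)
The plan is to build everything on Border's characterization (Theorem~\ref{thm:border}). Although it has exponentially many inequalities, I would show that its \emph{separation problem} is solvable in polynomial time; this immediately gives the feasibility test (by the ellipsoid method, or in fact by direct enumeration), and the implementation then follows either from LP duality applied to the same oracle, or from an explicit sequential construction.

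\medskip\noindent\textbf{A polynomial separation oracle.} Rewrite the Border inequality for a profile $(S_1,\dots,S_{\NUMC})$ as $\Phi(S_1,\dots,S_{\NUMC})\le 1$, where, writing $\mu_i(S_i)=\sum_{t\in S_i}\pmf[i]{t}$,
\[
\Phi(S_1,\dots,S_{\NUMC}) \;=\; \sum_{i=1}^{\NUMC}\sum_{t\in S_i}\pmf[i]{t}\,\interim[i]{t} \;+\; \prod_{i=1}^{\NUMC}\bigl(1-\mu_i(S_i)\bigr).
\]
Since $\Phi(\emptyset,\dots,\emptyset)=1$, the rule $\INTE$ is feasible iff $\max_{(S_1,\dots,S_{\NUMC})}\Phi = 1$, so separation amounts to maximizing $\Phi$. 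The crucial structural fact is proved by a one-coordinate exchange argument: fixing all $S_j$ for $j\ne i$ and letting $P=\prod_{j\ne i}(1-\mu_j(S_j))$, we have $\Phi = \mathrm{const} + P + \sum_{t\in S_i}\pmf[i]{t}(\interim[i]{t}-P)$, so the optimal $S_i$ is exactly the set of types $t$ with $\interim[i]{t}$ above the threshold $P$. Hence at any maximizer each $S_i$ is a threshold set, with the threshold for candidate $i$ equal to $Q/(1-\mu_i(S_i))$, where $Q=\prod_j(1-\mu_j(S_j))$ is the overall product. This collapses the search to a one-parameter family: for each value of $Q\in[0,1]$, each $\mu_i$ is pinned down (up to ties) as a monotone step function $\mu_i(Q)$, and $\prod_i(1-\mu_i(Q))$ is itself a step function of $Q$ with only $O(\sum_i|T_i|)$ breakpoints, found by sorting the values $\{\interim[i]{t}\}$. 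Evaluating $\Phi$ at the polynomially many resulting threshold profiles and checking whether any exceeds $1$ solves separation in polynomial time.

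\medskip\noindent\textbf{From the oracle to a feasibility test and an implementation.} Feasibility of $\INTE$ is now decidable in polynomial time (via ellipsoid, or directly via the one-parameter search above). For the implementation, note that the Border polytope of feasible interim rules is the image, under the linear ``interim'' map, of the polytope of ex-post single-winner rules (distributions over ``select at most one candidate'' for each type profile); applying the ellipsoid method to the dual together with the separation oracle produces a polynomial number of vertex ex-post rules whose convex combination realizes $\INTE$, and each vertex rule is a trivial ``pick the heaviest candidate'' computation. Alternatively, one can give a direct construction: pass a single ``winning token'' through the candidates in a suitable order, handing it to candidate $i$ upon seeing type $t$ with a probability that depends only on $t$ and on the probability the token is still free when $i$ is reached; these transition probabilities are computed by a polynomial-size dynamic program so that each type's realized winning probability equals $\interim[i]{t}$, and the Border inequalities are exactly what guarantees every transition probability lies in $[0,1]$.

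\medskip\noindent\textbf{Main obstacle.} The heart of the argument is the separation step: establishing that worst-case Border constraints are threshold constraints and, more delicately, taming the product $\prod_i(1-\mu_i)$ so that the per-candidate search does not blow up exponentially --- this is precisely what the reduction to the single scalar $Q$ accomplishes. For the implementation, the subtle point is showing that the sequential token scheme's transition probabilities never leave $[0,1]$ for \emph{every} Border-feasible rule, which needs the right processing order and an inductive use of the Border inequalities; the ellipsoid/duality route instead reduces this to the routine task of bounding the support via Carath\'eodory's theorem.
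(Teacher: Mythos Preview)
The paper does not prove Theorem~\ref{thm:borderpolytime} at all: it is stated in the preliminaries with a citation to \cite{cai2012algorithmic,alaei12} and used as a black box. So there is no ``paper's own proof'' to compare your proposal against; you are essentially reconstructing the cited results.

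Your reconstruction is along the right lines. The threshold structure of worst-case Border constraints is exactly the content of Theorem~\ref{thm:prefix} (also imported from \cite{cai2012algorithmic}), and your two implementation routes correspond precisely to the two cited sources: the ellipsoid/duality decomposition into virtual-value maximizers is \cite{cai2012algorithmic}, and the sequential token scheme is the stochastic sequential allocation of \cite{alaei12} (the paper mentions both in Section~\ref{sec:related}).

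One point in your separation argument deserves tightening. Your exchange step correctly shows that at a maximizer each $S_i$ is a threshold set with threshold $P_i=\prod_{j\ne i}(1-\mu_j(S_j))$, but these thresholds are \emph{different} across $i$. Your attempt to collapse this to a single parameter $Q=\prod_j(1-\mu_j(S_j))$ is circular as written: the threshold for candidate $i$ is $Q/(1-\mu_i(S_i))$, which depends on $\mu_i(S_i)$, which in turn depends on the threshold. You can resolve this by a fixed-point argument per candidate, but the cleaner route---and the one actually stated as Theorem~\ref{thm:prefix}---is to prove directly that a \emph{common} threshold $\alpha$ across all candidates suffices, i.e., it is enough to check the $O(\sum_i|T_i|)$ profiles $S_i(\alpha)=\{t\in T_i:\interim[i]{t}>\alpha\}$. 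That eliminates the circularity and gives a direct enumeration without invoking the ellipsoid method at all for the feasibility test.
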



In our efficient construction for single-winner settings, we utilize a structural result  
which shows that checking only a subset of Border's
constraints suffices \cite{border,mierendorff2011asymmetric,cai2012algorithmic}.
This subset of constraints can be identified efficiently. 

\begin{theorem}[Theorem 4 of \cite{cai2012algorithmic}] \label{thm:prefix}
An interim rule \INTE is feasible for a single-winner setting if and only if for all possible
$\alpha \in [0,1]$, the sets
$S_i(\alpha) = \Set{t \in T_i}{\interim[i]{t} > \alpha}$
satisfy the following Border's constraint: 
\[
  \sum_{i=1}^{\NUMC} \sum_{t\in S_i(\alpha)} \pmf[i]{t} \interim[i]{t}
  \leq 1 - \prod_{i=1}^{\NUMC} \left(1 - \sum_{t\in S_i(\alpha) } \pmf[i]{t}\right).
\]
\end{theorem}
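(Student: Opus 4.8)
The plan is to show the equivalence between the full family of Border constraints (Theorem~\ref{thm:border}) and the restricted family indexed by thresholds $\alpha$. One direction is immediate: the sets $S_i(\alpha) = \Set{t \in T_i}{\interim[i]{t} > \alpha}$ are a special case of arbitrary type subsets, so feasibility of \INTE (hence all Border constraints) implies in particular the threshold constraints. The substance is the converse: assuming the threshold constraints hold for all $\alpha \in [0,1]$, I must derive the Border constraint for an \emph{arbitrary} profile $S_1 \sse T_1, \ldots, S_n \sse T_n$.

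The key idea is an averaging / layer-cake argument. First I would rewrite both sides of the desired Border inequality as integrals over $\alpha$. For a fixed profile $(S_i)_i$, the left-hand side $\sum_i \sum_{t \in S_i} \pmf[i]{t}\interim[i]{t}$ can be seen as a weighted count, but the cleanest route is this: for each candidate $i$ and threshold $\alpha$, observe that replacing $S_i$ by $S_i \intersect S_i(\alpha)$ only helps, because types $t \in S_i$ with $\interim[i]{t} \le \alpha$ contribute little to the LHS relative to their probability mass, which is exactly the regime where the RHS product term is most favorable. More precisely, I would argue that the ``worst'' profiles --- those for which the Border constraint is tightest --- are exactly of threshold form: given any profile $(S_i)_i$, one can find $\alpha$ such that the constraint for $(S_i(\alpha))_i$ is at least as tight. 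This is a discrete optimization statement: fixing all but one coordinate $S_i$ and asking which subset of $T_i$ maximizes $\mathrm{LHS} - \mathrm{RHS}$, the objective is separable enough in the elements of $T_i$ (the RHS depends on $S_i$ only through $\sum_{t \in S_i}\pmf[i]{t}$) that the maximizer is a superlevel set of $\interim[i]{\cdot}$; doing this coordinate by coordinate and reconciling the thresholds across coordinates yields a single $\alpha$.

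The main obstacle I anticipate is the reconciliation step: optimizing coordinate-by-coordinate naively could produce a \emph{different} threshold $\alpha_i$ for each candidate, whereas Theorem~\ref{thm:prefix} wants one common $\alpha$ with the sets $S_i(\alpha)$. To handle this I would either (a) show that the function $\alpha \mapsto \mathrm{LHS}(\alpha) - \mathrm{RHS}(\alpha)$ evaluated on $(S_i(\alpha))_i$ is maximized at a single point and that this dominates all mixed-threshold profiles --- using a monotonicity/convexity property of $\alpha \mapsto \sum_{t \in S_i(\alpha)} \pmf[i]{t}$ together with the structure $1 - \prod_i(1-p_i)$ --- or (b) appeal directly to an exchange argument showing that if $\alpha_i < \alpha_j$ for two coordinates, then pushing the smaller threshold up (or the larger down) does not decrease $\mathrm{LHS} - \mathrm{RHS}$, so one may assume all $\alpha_i$ equal. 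Once a common threshold is secured, the inequality for the arbitrary profile $(S_i)_i$ follows by the chain $\mathrm{LHS}((S_i)_i) - \mathrm{RHS}((S_i)_i) \le \mathrm{LHS}((S_i(\alpha))_i) - \mathrm{RHS}((S_i(\alpha))_i) \le 0$, the last inequality being the assumed threshold constraint. Finally, since there are only finitely many distinct superlevel sets as $\alpha$ ranges over $[0,1]$ (the breakpoints being the finitely many values $\interim[i]{t}$), checking the threshold constraints reduces to checking one constraint per breakpoint, which is what makes the characterization algorithmically useful downstream; I would note this explicitly.
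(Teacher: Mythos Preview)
The paper does not prove Theorem~\ref{thm:prefix}; it is quoted verbatim as Theorem~4 of \cite{cai2012algorithmic} and used as a black box throughout Sections~\ref{sec:interim-to-dice} and~\ref{sec:iid}. So there is no ``paper's own proof'' to compare against, and I evaluate your proposal on its merits.

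Your coordinate-wise step is correct and is the natural opening: writing the slack as $\sigma(S)=\pmf{S}-\sum_{t\in S}\pmf{t}\interim{t}$ and fixing $S_{-i}$, the $S_i$ minimizing $\sigma$ is the superlevel set with threshold $\alpha_i=\prod_{k\ne i}\bigl(1-\sum_{t\in S_k}\pmf[k]{t}\bigr)$. But the reconciliation obstacle you flag is genuine, and neither of your proposed fixes closes it. For route~(b), note that at a minimizer $S^*$ the thresholds satisfy $\alpha_i\,(1-p_i^*)=\prod_k(1-p_k^*)$ with $p_i^*=\sum_{t\in S_i^*}\pmf[i]{t}$; since the maps $\alpha\mapsto \alpha\cdot\bigl(1-\sum_{t:\,\interim[i]{t}>\alpha}\pmf[i]{t}\bigr)$ differ across candidates, the $\alpha_i$ are in general unequal. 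Raising the smallest $\alpha_i$ (removing a type $t$ with $\interim[i]{t}\ge\alpha_i$) or lowering the largest $\alpha_j$ (adding a type $t'$ with $\interim[j]{t'}\le\alpha_j$) each changes $\sigma$ by a nonnegative amount, by exactly the marginal computation you would use; and a simultaneous cross-candidate swap $t\leftrightarrow t'$ changes $\sigma$ by $f(t')[\alpha_j-\interim[j]{t'}]+f(t)[\interim[i]{t}-\alpha_i]+f(t)f(t')\prod_{k\ne i,j}(1-p_k^*)\ge 0$ as well. So no local exchange equalizes thresholds while staying in the minimizer set. For route~(a), convexity or monotonicity of $\alpha\mapsto\sigma(S(\alpha))$ only controls that one-parameter slice; it says nothing about $\sigma$ at sets that are not common-threshold level sets, which is precisely what you need.

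The proofs in \cite{border,mierendorff2011asymmetric,cai2012algorithmic} do not go through an exchange on thresholds. One route is constructive: assuming only the level-set inequalities, build a winner-selection rule (e.g.\ a sequential rule that processes types in decreasing order of $\interim{t}$ with suitable passing probabilities) that implements \INTE; feasibility---and hence all Border constraints---then follows. Another route exploits the polymatroid structure of the Border polytope more directly. If you want to repair your argument, the missing piece is not a pairwise exchange but a global comparison showing $\min_S\sigma(S)=\min_\alpha\sigma(S(\alpha))$; this genuinely requires the specific product form of $\pmf{S}$ and does not follow from submodularity plus coordinate-wise optimality alone.
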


When the candidates' type distributions are i.i.d., i.e.,
$T_i$ and \PMF[i] are the same for all candidates $i$,
it is typically sufficient to restrict attention to \emph{symmetric}
interim rules.
For such rules, \interim[i]{t} is equal for all candidates $i$.
In the i.i.d.~setting, we therefore notationally omit the dependence on
the candidate and let $T$ refer to the common type set of all candidates,
\PMF to the candidates' (common) type distribution,
and \interim{t} to the probability that a particular candidate wins
conditioned on having type $t$.
In the i.i.d.~setting, only the symmetric constraints from
Theorem~\ref{thm:border} suffice to characterize feasibility
\citep{border}; namely,
\begin{equation}
\NUMC \cdot \sum_{t\in S} \pmf{t} \interim{t}
  \; \le \; 1 - \left( 1-\sum_{t \in S} \pmf{t} \right)^{\NUMC} ,
\label{eqn:symmetric-border}
\end{equation}
for all $S \sse T$.
Theorem~\ref{thm:prefix} then implies that it suffices to check
Inequality~\eqref{eqn:symmetric-border} for sets of the form
$S(\alpha) = \set{ t \in T : \interim{t} > \alpha}$ with $\alpha \in [0,1]$. 

\subsubsection{Border's Theorem for matroid environments}
For general settings with matroid constraints, \citet{alaei12}
established the following generalized ``Border's Theorem.''

\begin{theorem}[Theorem 7 of \citep{alaei12}]
  Let $\candidate$ map each type $t$ to the (unique) candidate $i$ with
  $t \in T_i$.
  An interim rule \INTE is feasible within an environment $(\PMF, \M)$
  if and only if for all possible type subsets
  $S_1\subseteq T_1, S_2\subseteq T_2, \ldots, S_n\subseteq T_{\NUMC}$,
\camera{%
\[
    \sum_{i=1}^n \sum_{t\in S_i}\pmf[i]{t} \interim[i]{t}
    \le \Expect[\vc{t} \sim \PMF]{r_{\M}(\candidate(\vc{t}\cap S))}, 
\]}{%
  $
    \sum_{i=1}^n \sum_{t\in S_i}\pmf[i]{t} \interim[i]{t}
    \le \Expect[\vc{t} \sim \PMF]{r_{\M}(\candidate(\vc{t}\cap S))}, 
  $}
  where $S=\bigcup_{i=1}^n S_i$.
\end{theorem}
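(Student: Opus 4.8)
The plan is to prove necessity by a short counting bound and sufficiency by a polyhedral/LP-duality argument: an arbitrary linear functional that separates an infeasible interim rule from the feasible region is decomposed, through its super-level sets, into a nonnegative combination of exactly the stated Border-type inequalities. This parallels the flow- and duality-based proofs of the classical Border theorem, with the matroid polytope playing the role of a single capacity constraint. \textbf{Necessity.} Suppose \ALGO\ implements \INTE\ and always outputs an independent set of \M. Fix subsets $S_i \sse T_i$ and set $S = \bigcup_i S_i$, noting that $\candidate(\TP \cap S) = \Set{i}{\type{i} \in S_i}$ for every profile \TP. Multiplying the defining identity $\ProbC{i \in \ALGO[\TP]}{\type{i} = t} = \interim[i]{t}$ by $\pmf[i]{t}$ and summing over $i$ and over $t \in S_i$ gives
\[
\sum_{i=1}^{\NUMC}\sum_{t \in S_i}\pmf[i]{t}\,\interim[i]{t} \;=\; \Expect[\TP \sim \PMF]{\card{\ALGO[\TP] \cap \candidate(\TP \cap S)}} .
\]
Since $\ALGO[\TP]$ is independent, the size of its intersection with any set is at most that set's rank, so the integrand is at most $r_{\M}(\candidate(\TP \cap S))$; taking expectations (also over the internal randomness of \ALGO) yields the claimed inequality.

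\textbf{Sufficiency: setup.} Let $P(\M) = \Set{y \in \RRp^{[\NUMC]}}{\sum_{i \in A} y_i \le r_{\M}(A) \text{ for all } A}$ be the matroid polytope, whose extreme points are the indicator vectors of independent sets; thus a (possibly randomized) choice of an independent set for a fixed profile \TP\ is exactly a point of $P(\M)$, and the interim rule depends only on the marginals $y_i(\TP)$. Consequently the set $\mathcal X$ of implementable interim rules is the image of $\prod_{\TP} P(\M)$ under the linear ``interim'' map sending $(y(\TP))_{\TP}$ to the rule with $\interim[i]{t} = \sum_{\TP : \type{i} = t} (\pmf{\TP}/\pmf[i]{t})\, y_i(\TP)$. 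Hence $\mathcal X$ is a polytope containing the origin, and it is downward closed: from a rule realizing \INTE, independently demoting each tentative winner with the appropriate probability produces any coordinatewise-smaller interim rule while keeping the winner set --- a subset of the old one --- independent. Now suppose \INTE\ satisfies every stated inequality but is not in $\mathcal X$; then, by a separating-hyperplane argument that uses downward-closedness to make the normal nonnegative, there are weights $w_i(t) \ge 0$ with $\sum_{i,t} w_i(t)\,\interim[i]{t} > \max_{\INTEP \in \mathcal X} \sum_{i,t} w_i(t)\,\interimp[i]{t}$.

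\textbf{Sufficiency: evaluating the functional and concluding.} Substituting the interim map and using independence of the priors, $\sum_{i,t} w_i(t)\,\interimp[i]{t} = \Expect[\TP \sim \PMF]{\sum_i (w_i(\type{i})/\pmf[i]{\type{i}})\, y_i(\TP)}$, where for each fixed \TP\ the bracketed quantity is a nonnegative linear functional of $y(\TP) \in P(\M)$, so the maximum over $\mathcal X$ decouples across profiles. By Edmonds' description of the matroid polytope (the greedy algorithm), $\max_{y \in P(\M)} \sum_i c_i y_i = \int_0^\infty r_{\M}(\Set{i}{c_i > \lambda})\, d\lambda$ whenever $c \ge 0$. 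Applying this with $c_i = w_i(\type{i})/\pmf[i]{\type{i}}$ and setting, for a threshold $\lambda$, $S_i^\lambda = \Set{t \in T_i}{w_i(t)/\pmf[i]{t} > \lambda}$ and $S^\lambda = \bigcup_i S_i^\lambda$ --- so that $\Set{i}{c_i > \lambda} = \candidate(\TP \cap S^\lambda)$ --- gives
\[
\max_{\INTEP \in \mathcal X} \sum_{i,t} w_i(t)\,\interimp[i]{t} \;=\; \int_0^\infty \Expect[\TP \sim \PMF]{r_{\M}(\candidate(\TP \cap S^\lambda))}\, d\lambda .
\]
Bounding the integrand below, for each $\lambda$, by the assumed inequality for the tuple $(S_i^\lambda)_i$, integrating, and using $\int_0^\infty \mathbf{1}[w_i(t)/\pmf[i]{t} > \lambda]\, d\lambda = w_i(t)/\pmf[i]{t}$, the right-hand side is at least $\sum_{i,t} w_i(t)\,\interim[i]{t}$, contradicting the choice of $w$. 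Hence \INTE\ lies in $\mathcal X$, i.e.\ it is implementable.

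\textbf{Where the difficulty lies.} The substitutions, the use of independence, and the final $\lambda$-integration are routine bookkeeping. The two steps carrying the real content --- and the ones I expect to require the most care --- are: (i) the structural claim that $\mathcal X$ is a downward-closed polytope containing the origin, which is what makes a \emph{nonnegative} separating functional available and thus lets the super-level sets of $w$ be read off as a legitimate family of index tuples $(S_i)$; and (ii) the greedy/Edmonds identity expressing $\max_{y \in P(\M)} \langle c, y\rangle$ as $\int_0^\infty r_{\M}(\{c > \lambda\})\, d\lambda$, which is precisely the bridge from an arbitrary linear test to the ``expected rank of a random level set'' form in which the theorem's constraints are phrased.
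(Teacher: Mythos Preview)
The paper does not supply its own proof of this statement: it is quoted in the preliminaries as Theorem~7 of \citet{alaei12} and used as a black box thereafter, so there is no in-paper argument to compare your proposal against.

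That said, your proof is correct and follows what has become the standard LP-duality route to Border-type characterizations. The necessity direction is the obvious counting bound. For sufficiency, the two ingredients you flag are indeed the ones that carry the weight: downward-closedness of $\mathcal X$ lets you upgrade an arbitrary separating hyperplane to a nonnegative one (zero out the negative coordinates of $w$; for any maximizer $\INTEP \in \mathcal X$ of the truncated functional, zeroing the coordinates where $w$ was negative keeps you in $\mathcal X$ and shows the maximum did not increase), and the Edmonds/greedy identity $\max_{y \in P(\M)} \langle c, y\rangle = \int_0^\infty r_{\M}(\Set{i}{c_i > \lambda})\, d\lambda$ is exactly what turns the per-profile maxima into expected ranks of level sets. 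One small point worth making explicit when you write this up: you divide by $\pmf[i]{t}$, so you are implicitly assuming every type has positive probability; this is harmless since zero-probability types can be discarded without affecting feasibility or the constraints.
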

In later sections, we omit the function $\candidate$,
and for any type set $S$ just write $r_\M(S)$ instead of $r_\M(\candidate(S))$.

\subsection{Winner-Selecting Dice}

We study winner-selection rules based on \emph{dice}, as a generalization of order sampling to multiple types and
general constraints.
A \emph{dice-based rule} fixes, for each type $t\in T_i$,
a distribution \die[i]{t} over real numbers, which we call a \emph{die}.
Given as input the type profile $\TP = (\type{1}, \ldots, \type{\NUMC})$,
the rule independently draws a score
$v_i \sim \die[i]{\type{i}}$ for each candidate $i$ by ``rolling
his die;'' it then selects the feasible set of candidates maximizing
the sum of scores as the winner set, breaking ties with a predefined rule. 
In this paper, we will mainly discuss matroid feasibility constraints,
for which a feasible set maximizing the sum of scores can be found by
a simple greedy algorithm:
candidates are added to the winner set in decreasing order of their
scores, breaking ties uniformly at random, as long as the new winner
set is still an independent set of the matroid and their scores are
positive.
When candidates have the same type sets,
we call a dice-based rule \emph{symmetric} if \die[i]{t} is the same
for all $i$. 


Myerson's optimal auction is a dice-based winner-selection rule.
In Myerson's nomenclature, \score{i} is candidate $i$'s \emph{virtual value},
and \die[i]{t} is a single-sided die with the virtual value.


Let $T$ be the set of all types of all candidates and
$\DIESET = (\DIE[t])_{t \in T}$ be a vector of dice, one per type.
Given an interim rule \INTE, and a winner-selection environment
$(\PMF, \I)$, we say that \DIESET \emph{implements} \INTE,
or \DIESET describes \emph{winner-selecting dice} for \INTE in
$(\PMF, \I)$, if the dice-based rule given by \DIESET implements
\INTE within the environment $(\PMF, \I)$.

\subsection{Second-order Interim Rules}
\label{prelim:secondorder_interim}

A (first-order) interim rule, as defined in Section~\ref{prelim:interim},
specifies, for each candidate $i$, the conditional type distribution
of $i$ in the event that $i$ is chosen as the winner.
We define a \emph{second-order interim rule}%
\footnote{Our notion of second-order interim rules is
different from the notion defined in \cite{cai2012optimal}.
Because \citet{cai2012optimal} consider correlation in types,
their notion of second-order interim rules is aimed at capturing the
allocation dependencies arising through such type
correlation, rather than solely through the mechanism's choice.}
which maintains strictly more information,
as needed for describing the incentive constraints of Bayesian persuasion.
Such a rule specifies, for each pair of candidates $i$ and $i'$
(where $i'$ may or may not be equal to $i$),
the conditional type distribution of $i'$ in the event that $i$
is chosen as the winner.
Formally, a second-order interim rule \SINTE specifies
$\sinte{i}{i'}{t} \in [0,1]$ for each pair of candidates 
$i, i' \in [\NUMC]$, and type $t \in T_{i'}$.
We say that a winner-selection rule \ALGO 
\emph{implements} \SINTE for a prior \PMF if it satisfies the following:
if the type profile $\TP = (\type{1}, \ldots, \type{\NUMC})$
is drawn from the prior distribution
$\PMF=\PMF[1] \cross \cdots \cross \PMF[\NUMC]$,
then $\ProbC{i\in\ALGO(\TP) }{\type{i'} = t} = \sinte{i}{i'}{t}$.
A second-order interim rule is \emph{feasible} if there is a winner
selection rule implementing it.

\section{Existence of Dice Implementation for Matroids}\label{sec:exist}
\newcommand{\inductionpoint}{\theta_k, \suf_{k+1}}
\newcommand{\BT}{\ensuremath{K}\xspace}

\camera{In this section, we prove our first theorem:}{In this section,
  we outline the proof of our first theorem:}

\begin{theorem}\label{thm:finite-face}
Let $(\PMF, \M)$ be a matroid winner selection-environment
with a total of $m$ types,
and let \INTE be an interim rule that is feasible within $(\PMF, \M)$.
There exist winner-selecting dice $\D$, each of which has at
most $m+1$ faces, which implement \INTE.
\end{theorem}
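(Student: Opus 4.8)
The plan is to follow the strategy of \citet{aires2002order}, generalized to matroids and multiple types, by exhibiting winner-selecting dice as an extreme point of a suitable polytope and then arguing that extreme points are ``finitely supported.'' Concretely, I would fix a large finite grid of possible die faces---say, the reals partitioned into finitely many cells, which in the limit becomes all of $\RR$---and consider the set of all joint distributions over score profiles $\vc{v} = (v_1, \ldots, v_n)$ that (i) are product distributions, with the marginal for candidate $i$ depending only on $\type{i}$ through the die $\DIE[\type{i}]$, and (ii) implement the interim rule \INTE, meaning that when winners are selected greedily to maximize $\sum_i v_i$ subject to \M, candidate $i$ wins with conditional probability $\interim[i]{t}$ given $\type{i}=t$. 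The key structural claim is that this feasible set, viewed appropriately, is a bounded polyhedron (or can be reduced to one), so it has an extreme point; and that any extreme point uses at most $m+1$ distinct faces per die, because the implementation constraints amount to $m$ linear equalities (one per type) plus the constraint that each die is a probability distribution.

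The first step is to establish that \emph{some} dice implementation exists at all---here I would use the generalized Border's theorem of \citet{alaei12} quoted above: since \INTE is feasible, the rank inequalities hold, and I need to turn a feasibility certificate into a dice-based rule. The natural route is a fixed-point or continuity argument: parametrize dice by a vector of ``scores,'' show that the map from dice to induced interim rules is continuous and surjective onto the Border-feasible polytope (monotone in an appropriate sense), and invoke a topological fixed-point theorem, exactly as \cite{aires2002order} do for the $k$-uniform case. The matroid greedy algorithm behaves continuously in the scores away from ties, and ties occur on a measure-zero set, so the induced interim rule is a continuous function of the die parameters.

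The second step is the dimension-counting argument that bounds the number of faces. Once existence is known, consider the set of dice implementing \INTE as a subset of a product of simplices (one simplex per type, over a finite candidate face set that we then refine). This set is cut out by the $m$ linear equations ``$\interim[i]{t}$ is achieved'' together with the simplex constraints. At an extreme point of this polytope, the number of faces with strictly positive probability on die $\DIE[t]$ is at most one plus the number of \emph{active} implementation constraints touching type $t$; summing, and being slightly careful about how the $n$ normalization constraints and $m$ interim constraints interact, yields the bound of $m+1$ faces per die. To make this rigorous over the continuum of possible face values rather than a finite grid, I would take a limit of extreme points over finer and finer grids and use compactness (each die lives in a weak-$*$ compact set of probability measures on a bounded interval, since scores can be clipped to a bounded range without changing the greedy outcome) to extract a limiting collection of dice, each supported on at most $m+1$ points, still implementing \INTE.

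The main obstacle I anticipate is the existence half rather than the counting half: showing that the induced-interim-rule map is onto the full Border-feasible polytope for matroids. The $k$-uniform argument of \cite{aires2002order} relies on a clean monotonicity/exchange structure that must be reproduced for general matroids, presumably by an induction on the matroid structure (using Whitney's decomposition into non-separable components, and the circuit axiom to perform local exchanges), and by carefully handling ties in the greedy algorithm so that the map remains continuous and the fixed point actually lands in the interior where the interim rule is exactly \INTE. Ensuring the degrees of freedom match---so that the fixed-point domain has the right dimension---is where most of the technical care will go.
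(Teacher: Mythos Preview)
Your two-step plan matches the paper's: first prove existence of (possibly continuous) winner-selecting dice, then reduce to finitely many faces. On the existence half you correctly identify both the main obstacle and the key ingredients---continuity and monotonicity of the induced-interim map, the circuit structure of non-separable matroids, and Whitney's decomposition into components---and this is indeed the route the paper takes (via iterated applications of the Intermediate Value Theorem and a ``peeling off'' of tight sets, rather than a single fixed-point theorem, but the spirit is the same).

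The finite-face half, however, has a real gap. You write that the set of dice implementing \INTE, as a subset of a product of simplices, ``is cut out by the $m$ linear equations.'' But the interim winning probability $\interim[i]{t}$ is a \emph{multilinear} function of the dice distributions---it involves an expectation over a product of the other candidates' die rolls---so jointly over all dice the feasible set is not a polytope, and your extreme-point dimension count does not apply. The grid-refinement-and-compactness workaround inherits the same problem: without joint linearity there is no reason a ``joint extreme point'' (even if well-defined) should have small support on each individual die; the $m$ equality constraints only bound the total number of active coordinates, not the number per die.

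The fix---which is precisely what the paper does---is to work one die at a time. Fix all dice except $D_{t^*}$; then each of the $m$ interim equations becomes an affine constraint on the single measure $D_{t^*}$, and a Carath\'eodory-type argument (the paper invokes the infinite-dimensional version, Theorem~B.11 of \cite{lasserre2010moments}) replaces $D_{t^*}$ by a measure supported on at most $m+1$ points while preserving all $m$ interim values. Iterating over all types gives the claimed bound directly, with no discretization or compactness limit needed.
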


The proof consists of two parts.
First, we generalize the result of \cite{aires2002order},
which showed the existence of continuous winner-selecting dice for feasible
interim rules for a $k$-uniform matroid with fixed types, to general
matroids and multiple types. 
Second, we convert the continuous dice to dice with at most $m+1$
faces each, while keeping the interim probabilities unchanged.

\subsection{Continuous Winner-Selecting Dice}
\begin{theorem}\label{thm:exist}
  Let $\M$ be a matroid, and $\INTE$ a feasible interim rule within the
  winner-selection environment $(\PMF, \M)$.
  There exist winner-selecting dice $\D$ over $\mathbb{R}$
  that implement $\INTE$ in $(\PMF, \M)$. 
\end{theorem}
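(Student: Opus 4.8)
The goal is to show that any feasible interim rule $\INTE$ in a matroid environment $(\PMF,\M)$ can be implemented by independent dice over $\RR$. My plan is to mimic the structure of the Aires--Rosén argument for $k$-uniform matroids, but recast it so that the matroid axioms (rather than the special $k$-uniform structure) do the work. The natural object is the map $\Phi$ that takes a profile of dice $\D = (\DIE[t])_{t\in T}$ to the induced first-order interim rule $\INTE(\D)$ of the greedy score-maximizing winner-selection rule. I would first restrict attention to dice supported on a common finite grid (or, better, view a die as a nonincreasing right-continuous ``quantile'' function on $[0,1)$), so that the space of dice profiles is a convex, compact set, and argue that $\Phi$ is continuous. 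Then I want to invoke a fixed-point / topological argument (Brouwer, or a degree-theoretic argument as in Aires et al.) to conclude that every point in the image of the feasible region is attained.

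\medskip

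\textbf{Key steps.} \emph{(1) Parametrize dice by thresholds.} Encode each die $\DIE[t]$ by a nondecreasing function $q_t:[0,1)\to\RR\cup\{-\infty\}$, so that a ``roll'' is obtained by drawing a uniform $u_t\in[0,1)$ and outputting $q_t(u_t)$; allowing the value $-\infty$ (or any sufficiently negative sentinel) lets a candidate ``abstain'' from the competition, which is needed because a matroid rank can be strictly below $|S|$. \emph{(2) Monotone coupling.} Observe that the greedy algorithm on scores has a crucial monotonicity: if we raise candidate $i$'s score (holding others fixed), $i$'s chance of being selected does not decrease; more generally, the winner set is a monotone function of the score vector in the sense compatible with the matroid exchange property. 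This is exactly what makes a continuity/fixed-point scheme go through. \emph{(3) Continuity and compactness.} Show $\Phi$ is continuous on the compact convex set of threshold-function profiles (with an appropriate topology, e.g.\ $L^1$ on $[0,1)$), using that the set of tie events has measure zero for generic profiles, and that ties can be broken consistently. \emph{(4) Surjectivity onto the feasible polytope.} Using Alaei's matroid Border theorem (quoted in the excerpt) to identify the feasible region as an explicit polytope $P$, argue that $\Phi$ maps the dice space onto all of $P$. The cleanest route: set up a potential/flow argument as in \cite{aires2002order} — start from dice implementing an extreme, ``nested'' interim rule (which corresponds to a deterministic priority order, realizable by degenerate one-sided dice), and then continuously deform the thresholds, tracking how $\Phi$ moves, to reach the target $\INTE$; a topological-degree or path-following argument shows the target is hit. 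Alternatively, a direct Brouwer fixed-point formulation: define a correction map on the threshold profile that nudges each $q_t$ up or down according to the sign of $x_i(t)-\Phi(\D)_i(t)$, and show it has a fixed point which must implement $\INTE$.

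\medskip

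\textbf{Main obstacle.} The delicate point is Step (4): turning ``feasible'' (a polytope membership condition) into ``in the image of $\Phi$.'' For the $k$-uniform case, Aires et al.\ exploit a fairly explicit correspondence between order-statistics of the dice and inclusion probabilities; for a general matroid, the greedy algorithm's output depends on the full ranking of scores in a way entangled with the matroid's circuit structure, so the clean bijective bookkeeping is lost. I expect the resolution to require the decomposition of $\M$ into non-separable components (the Whitney theorem quoted above): on each component, use the existence of a circuit through any two elements to set up exchange/swap moves on the dice that transfer winning probability between types in a controlled way, then glue the components via the direct-sum structure. Handling boundary cases — types with $x_i(t)\in\{0,1\}$, and candidates who must sometimes ``drop out'' so that $r_\M(S)<|S|$ can be matched — will also need the $-\infty$ faces from Step (1), and care must be taken that the resulting dice are genuinely over $\RR$ (the $m+1$-face truncation in Theorem~\ref{thm:finite-face} is then a separate, easier discretization step, not needed here). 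A secondary technical nuisance is tie-breaking: the argument should fix a deterministic tie-break (e.g.\ by candidate index) and verify the interim probabilities are continuous across the measure-zero tie locus, or alternatively perturb to avoid ties and pass to a limit.
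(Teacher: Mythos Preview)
Your proposal has the right peripheral ingredients (monotonicity of greedy under score increases, continuity away from ties, the Whitney decomposition into non-separable components), but Step~(4) has a genuine gap, and the paper closes it with a structural idea you do not mention. The Brouwer/degree route does not go through as stated: dice over $\RR$ (or quantile functions on $[0,1)$) do not form a compact set without a severe a~priori restriction, and once you restrict, you must still show your correction map $D\mapsto D + c(\INTE-\Phi(D))$ stays inside the domain---which is exactly the boundary analysis you have not supplied. Path-following from a ``nested'' extreme point and degree arguments both reduce to the same unresolved boundary question. More fundamentally, if you try to solve for all dice at once, the monotonicity you cite is only \emph{weak} (raising type $t$'s parameter need not strictly lower another type's winning probability unless a circuit links their candidates), and the limits of $\Phi$ as a parameter tends to $0$ or $\infty$ depend delicately on which Border constraints are tight at $\INTE$; without controlling that, Intermediate-Value or degree arguments have no traction.

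The missing idea is the lattice of \emph{tight sets} $S$ (those with $\sum_{t\in S} f(t)x(t)=R(S)$). The minimal nonempty tight sets of active types---\emph{barrier sets}---must receive strictly higher die faces than everything outside them (a maximum independent subset of a tight set must always win). The paper first builds dice for a single barrier set $S$ in isolation, using a finite-dimensional parametric family $\theta_t v_t-\tau$ (one scale per type plus one global shift) and an inductive Intermediate Value Theorem in the $\theta$'s; the barrier property (\emph{no proper} tight subset of $S$) is precisely what forces the limits of the induced winning probability $g_k$ as $\theta_k\to 0$ and $\theta_k\to\infty$ to lie on the correct sides of $x(k)$, and non-separability (a circuit through any two candidates) is what upgrades weak monotonicity to strict, so the IVT yields a \emph{unique} solution at each stage. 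The construction is then extended one tight-set layer at a time (rescale existing dice into $(1,2)$, build new dice for the next layer in $(0,1)$), and finally separable matroids are handled componentwise. This layered ``peeling'' along the tight-set lattice is the engine of the proof; without it, the global surjectivity argument you sketch does not close.
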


We assume without loss of generality that the candidates' type sets
are disjoint, 
and use $T =\biguplus_{i=1}^{\NUMC} T_i$ to denote the set of all types.
We use \pmf{t} and \interim{t} as shorthand for \pmf[i]{t} and \interim[i]{t},
where $i = \candidate(t)$ is the candidate for whom $t \in T_i$.
Moreover, given a set of types $S \sse T$,
we write $S_i = S \intersect T_i$. 
Recall the Border constraints 
\begin{equation}
  \label{eq:border-matroid}
  \sum_{i=1}^n \sum_{t\in S_i}\pmf[i]{t} \interim[i]{t} \le R(S),   
\end{equation}
where $R(S)=\Expect[\vc{t}\sim \PMF]{r_{\M}(\vc{t}\cap S)}$ is the
expected rank of types in $S$ which show up,
a submodular function over the type set $T$. 
The Border constraints can therefore be interpreted as follows: 
An interim rule $\INTE: T \to [0,1]$ is feasible for \PMF and $\M$
if and only if $\tilde{\INTE}$ is in the polymatroid given by
$R(S)$, where $\tilde{x}(t) := \pmf{t} \interim{t}$.
Equivalently, \INTE is feasible if and only if the submodular
\emph{slack function}
$\slack(S) = R(S) - \sum_{t \in S} \pmf{t} \interim{t}$
is non-negative everywhere.

When \INTE is feasible, we call a set $S \sse T$ \emph{tight}
for $(f, \INTE, \M)$ if the Border constraint \eqref{eq:border-matroid} corresponding to $S$ is tight at $\INTE$,
i.e., $\sum_{i=1}^n \sum_{t\in S_i}\pmf{t} \interim{t} = R(S)$.
By definition, $S=\emptyset$ is always tight. 
The family of tight sets, being the family of minimizers of the 
submodular slack function, forms a lattice:
the intersection and the union of two tight sets is a tight set. 

\begin{remark}\label{remark:tight}
  The tightness of a set $S$
  means that the expected number of winners from $S$ equals the expected
  rank of  types in $S$ which show up. In other words, $S$ is tight if
  and only if a maximum independent subset of $\vc{t}\cap S$ 
  is always selected as winners.
\end{remark}

By \camera{Remark~\ref{remark:tight}}{the preceding remark},
the types in minimal non-empty tight sets need to be treated preferentially,
i.e., assigned higher faces on their dice, compared to types outside them.
Because they play such an important role, we define them as \emph{barrier sets}.
Formally, we define the set of \emph{active} types
$T^+= \set{ t \in T: \pmf{t} \interim{t} > 0}$
to be the types who win with positive probability.
Barrier sets are subsets of $T^+$.
If there is at least one non-empty tight set of active types,
we define the barrier sets as the (inclusion-wise) minimal such sets.
Otherwise, we designate the entire set $T^+$ of active types as the
unique barrier set.

To prove Theorem~\ref{thm:exist}, we first assume that the matroid $\M$ is non-separable. Separable matroids will be handled in the proof of Theorem~\ref{thm:exist} by combining the dice of their non-separable components. Because barrier sets get precedence, we first show how to construct dice for barrier sets with Lemma~\ref{lem:exist-barrier}. Once we have dice for barrier sets, we can repeatedly ``peel off'' the tight sets and combine their dice, which is captured in Lemma~\ref{lem:exist-lattice}. We start with the existence of dice for barrier sets:

\begin{lemma}\label{lem:exist-barrier}
  Let $\M$ be a non-separable matroid, and 
  $\INTE > \vc{0}$ a feasible interim rule within the
  winner-selection environment $(\PMF, \mathcal{M})$.
  Let $S$ be a barrier set for $(\PMF, \INTE, \M)$, and define
  $\INTE_S$ to be  $(\interim{t})_{t\in S}$.
  There exists a vector of 
  distributions $\D=(\DIE[t])_{t\in S}$ over $\mathbb{R}$,
  such that $\D$ implements $\INTE_S$ in $(\PMF, \M|S)$. 
\end{lemma}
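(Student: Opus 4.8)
The plan is to construct the dice for a barrier set $S$ directly and explicitly, exploiting the fact that barrier sets are essentially "maximally tight" — by Remark~\ref{remark:tight}, a maximum independent subset of $\vc{t} \cap S$ is always selected. So within the restricted environment $(\PMF, \M|S)$, every type in $S$ must win whenever it can be part of such a maximum independent subset; the only freedom is in how ties are resolved among the active types of $S$ when not all of them can simultaneously win. Since $\M$ is non-separable, Whitney's theorem says any two elements of $S$ lie on a common circuit, so $S$ is "tangled" enough that the ordering of dice rolls genuinely matters. The natural approach is to let each die $\DIE[t]$ take values in a single bounded interval (or even be atomic on finitely many values) and choose the distributions so that, under the greedy selection, the induced interim probabilities come out to $\interim{t}$.

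The key steps I would carry out: (i) Reduce to the case where $S$ itself is tight and minimal — if $T^+$ is the barrier set because there is no non-empty tight set of active types, handle that separately, since then there is slack everywhere and one has more room. (ii) Set up a continuity/degree-theory or fixed-point argument: parametrize a family of candidate dice profiles by a compact convex set (e.g., assign to each type $t$ a "priority level" or a distribution supported on $[0,1]$ parametrized by its CDF), define the map sending a dice profile to the vector of induced interim probabilities, show this map is continuous, and argue via a topological fixed-point / intermediate-value argument (as in \cite{aires2002order}) that the target vector $\INTE_S$ is in its image. The tightness of $S$ pins down the total expected number of winners from $S$ exactly as $R(S) = \Expect{r_\M(\vc{t}\cap S)}$, which should be exactly the "budget" that makes the image set contain $\INTE_S$ on its boundary. (iii) Verify feasibility is preserved: because $S$ is a minimal tight set, no proper non-empty subset of active types in $S$ is tight, which gives strict slack on all proper subsets and is what allows the continuity argument to reach the boundary point $\INTE_S$ without getting stuck.

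I expect the main obstacle to be step (ii): showing surjectivity (or that $\INTE_S$ is attained) of the dice-to-interim map. In the $k$-uniform case handled by \cite{aires2002order}, the combinatorics of "top-$k$ scores" is simple enough that an explicit monotone construction works; for a general non-separable matroid the greedy algorithm's output depends intricately on the order statistics of all the rolls, so the map is harder to analyze. The crucial leverage is minimality of the barrier set: it forces the slack function $\slack$ to be strictly positive on every proper non-empty subset of the active types of $S$, which should translate into the dice-to-interim map being, in an appropriate sense, "locally surjective" near the tight configuration — and then Whitney's theorem (common circuits) is what guarantees there are no hidden separable blocks inside $S$ that would obstruct transferring winning probability between types. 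Combining these, a Brouwer-type fixed point or a boundary-degree argument should deliver the desired dice; converting to finitely many faces is deferred to the passage from Theorem~\ref{thm:exist} to Theorem~\ref{thm:finite-face}.
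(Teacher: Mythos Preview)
Your high-level intuition is sound and matches the paper: the minimality of the barrier set gives strict slack on every proper non-empty subset of $S$, and non-separability (Whitney) is what prevents hidden decoupling. But the actual mechanism the paper uses is quite different from the Brouwer/degree-theory route you sketch, and your step (ii) as written is where the real content is missing.

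The paper does \emph{not} parametrize by a compact convex set and invoke a fixed-point theorem. Instead it fixes one continuous distribution $D$ supported on $[0,\infty)$ and parametrizes each die by a single positive scalar $\theta_t$ together with one global shift $\tau$: the roll of type $t$ is $\theta_t v_t - \tau$ with $v_t\sim D$. The parameter space $(0,\infty)^{|S|}\times[0,\infty)$ is not compact, so Brouwer is not the tool. What makes the argument go through is an \emph{inductive} application of the one-dimensional Intermediate Value Theorem: one shows that, having already matched the first $k-1$ interim probabilities by choosing $\tau,\theta_1,\ldots,\theta_{k-1}$ as functions of $(\theta_k,\ldots,\theta_K)$, the induced winning probability $g_k$ is continuous and strictly monotone in $\theta_k$, and its limits as $\theta_k\to 0$ and $\theta_k\to\infty$ bracket $x(k)$. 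The bracketing is exactly where the strict slack on proper subsets (from minimality of the barrier set) enters, and the \emph{strict} monotonicity is where non-separability enters, via a coupling argument on circuits containing two given candidates. These monotonicity lemmas (the paper's Lemmas~\ref{lem:monotone-reduced} and~\ref{lem:inter-candidate}) are the technical heart and are what your proposal does not supply.

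Your Brouwer plan is not obviously wrong, but you have not specified the compact set, the self-map, or why its fixed points correspond to dice implementing $\INTE_S$; and parametrizing by CDFs on $[0,1]$ puts you in an infinite-dimensional space where additional care is needed. If you want to pursue that route, you would still need an argument that the image of the dice-to-interim map actually hits $\INTE_S$, and the natural way to get that is precisely the monotonicity-plus-limit analysis the paper carries out --- at which point the IVT induction is more direct than a topological wrapper around it.
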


\camera{%
Without loss of generality, assume that the barrier set $S$ contains
\BT types, and number them by $1, 2, \dots, \BT$.

\begin{proof}
  When $|\candidate(S)|=1$ and $S$ is tight, all types in $S$ belong
  to the same candidate, so there is no competition
  between candidates. Because of the tightness of $S$, we can simply
  assign a single-sided die with face value $1$ to each type in $S$. Thus in the
  proof, we assume that either there is more than one candidate with a
  type in $S$ or that $S$ is not tight.%
\footnote{Recall that $S$ could be the (possibly non-tight) barrier
  set of all active types, if no non-empty set is tight.}
In both cases, this implies that $\interim{t} < 1$ for all $t\in S$.
For if $\interim{t} = 1$, the singleton $\SET{t}$ would
be a proper tight subset of $S$,
which contradicts the assumption that $S$ is a barrier set.

We create \DIE[t] from any continuous distribution%
\footnote{For example, one can use an exponential distribution.}
$D$ with support $[0, \infty)$.
For each type $t$, we assign a parameter $\theta_t > 0$.
Also, we choose a global parameter $\tau$.
To sample from \DIE[t],
we draw a \emph{primitive roll} $v_t\sim D$ and output
$\theta_tv_t - \tau$. 

We define $g_t(\tau, \theta_1, \dots, \theta_\BT)$ to be the
interim probability of type $t$ winning in the environment $(\PMF,
\M|S)$ when the winner-selecting dice \DIE[t] with parameters
$\theta_t$ and $\tau$ are used for $t\in S$.
Each $g_t$ is continuous in all of its parameters.
Using $\vc{\theta}_S = (\theta_t)_{t\in S}$ for short,
consider the following system of equations with variables $\vc{\theta}_S$:
  \begin{align}
  \sum_{t\in S} \interim{t} g_t(\tau, \vc{\theta}_S)
    &= \sum_{t\in S} \pmf{t} \interim{t}\label{eq:interim-system-sum}\\
  g_t(\tau, \vc{\theta}_S)
    &= \interim{t}, \quad \mbox{ for all } t\in [\BT]. \label{eq:interim-system}
  \end{align}
  The objective of the proof is to show that the system of equations
  \eqref{eq:interim-system} admits a solution.
  This solution will directly induce a dice system that implements
  $\INTE$ for types in $S$.  
  Notice that the system is redundant:
  Equation~\eqref{eq:interim-system-sum} is implied by the Equations~\eqref{eq:interim-system}.
  Throughout the proof, we use $\pre_i$ to denote the prefix
  $(\tau, \theta_{1},\dots, \theta_{i})$ and $\suf_i$ to denote the suffix
  of parameters $(\theta_{i}, \theta_{i+1}, \dots, \theta_{n})$. In particular, $\pre_0 = \tau$.
  We prove the following inductively for $k=1,\dots, \BT$:
  \begin{quote}
    For any
    positive suffix $\suf_k > \vc{0}$, there is a prefix $\pre_{k-1}$ so
    that $(\pre_{k-1}(\suf_k), \suf_k)$ satisfies the first $k$ equations
    in the system
    \eqref{eq:interim-system-sum}-\eqref{eq:interim-system}. 
  \end{quote}
  
  Applying the claim with $k=\BT$, for every positive $\theta_\BT$, there is a
  solution that satisfies the first \BT equations.
  Because of the redundancy of the system, satisfying the first \BT equations
  guarantees that the last equation is also satisfied. 

  For the base case $k=1$,
  consider an arbitrary positive $\suf_1 = \vc{\theta}_S$.
  We need to prove the existence of a $\tau$ such that
  $\sum_{t\in S} \pmf{t} g_t(\tau, \vc{\theta}_S)
  = \sum_{t\in S} \pmf{t} \interim{t}$.
  When $\tau = 0$, all types get non-negative die rolls.
  Therefore, a maximum independent subset of $\vc{t}\cap S$ is always
  selected.
  Thus, $S$ is tight at $(g_t(0, \vc{\theta}_S))_{t\in S}$, i.e.,
  $\sum_{t\in S} \pmf{t} g_t(0, \vc{\theta}_S) = R(S)
  \ge \sum_{t\in S} \pmf{t} \interim{t}$.
  On the other hand, as $\tau$ increases,
  the probability that all die rolls are negative goes to 1,
  meaning that in the limit, no agent wins.
  Thus, $\lim_{\tau\to \infty} g_t(\tau, \vc{\theta}_S) = 0$.
  Furthermore, each $g_t(\tau, \vc{\theta}_S)$
  strictly and continuously decreases with $\tau$.
  Because $0 < \sum_{t\in S} \pmf{t} \interim{t} \leq R(S)$,
  by the Intermediate Value Theorem,
  for every  $\vc{\theta}_S$, there is a unique $\tau$ such that
  $\sum_{t\in S} \pmf{t} g_t(\tau, \vc{\theta}_S)
  = \sum_{t\in S} \pmf{t} \interim{t}$.
  We denote this unique $\tau$ by $h_0(\vc{\theta}_S)$;
  notice that $h_0(\vc{\theta}_S)$ is a continuous function of $\vc{\theta}_S$.
  When $S$ is a tight barrier set, i.e., $R(S) = \sum_t f(t) x(t)$,
  the equation is satisfied for $\tau = 0$, so $h_0(\vc{\theta}_S) = 0$.
  This establishes the base case $k=1$ of the induction hypothesis.  

  For the inductive step, fix an arbitrary $k \ge 1$,
  and let $\suf_{k+1} > \vc{0}$ be arbitrary.
  For any fixed $\theta_k > 0$, by induction hypothesis,
  there is a unique $\pre_{k-1} (\theta_k, \suf_{k+1})$ such that
  $(\pre_{k-1} (\theta_k, \suf_{k+1}), \theta_k, \suf_{k+1})$
  satisfies the first $k$ equations
  of \eqref{eq:interim-system-sum}--\eqref{eq:interim-system}.
  Lemma~\ref{lem:range}
  below shows that there is a unique $\theta_{k} = h_k(\suf_{k+1})$
  such that
  $(\pre_k (h_k(\suf_{k+1}), \suf_{k+1}), h_k(\suf_{k+1}), \suf_{k+1})$
  satisfies the first $k+1$ equations of
\eqref{eq:interim-system-sum}-\eqref{eq:interim-system}.
The inductive claim now follows by defining
$\pre_{k+1}(\suf_{k+1}) = (\pre_k(h_k(\suf_{k+1}),\suf_{k+1}), h_k(\suf_{k+1})).$
\end{proof}

  
\begin{lemma}\label{lem:range}
  Fix any feasible \INTE.
  For every $k$ and suffix $\suf_{k+1}$,
  there is a unique $\theta_k\in (0, \infty)$ such that
  $g_k(\pre_{k-1}(\theta_k, \suf_{k+1}), \theta_k, \suf_{k+1}) = \interim{k}$.
\end{lemma}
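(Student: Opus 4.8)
The goal is to show that, once the prefix $\pre_{k-1}$ is fixed as a function of $(\theta_k,\suf_{k+1})$ by the induction hypothesis, the single remaining equation $g_k(\pre_{k-1}(\theta_k,\suf_{k+1}),\theta_k,\suf_{k+1}) = \interim{k}$ has a unique solution $\theta_k\in(0,\infty)$. The natural tool is again the Intermediate Value Theorem together with a monotonicity argument, exactly mirroring the base case of the induction in Lemma~\ref{lem:exist-barrier}. So the plan is: (i) argue that the map $\theta_k \mapsto g_k(\pre_{k-1}(\theta_k,\suf_{k+1}),\theta_k,\suf_{k+1})$ is continuous; (ii) identify the limiting behavior as $\theta_k\to 0^+$ and as $\theta_k\to\infty$, showing the value sweeps through an interval that strictly contains (or has as an endpoint) $\interim{k}$; and (iii) establish strict monotonicity so that the solution is unique.

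**The two limits.**
First I would analyze $\theta_k\to\infty$. Intuitively, scaling up $\theta_k$ blows up type $k$'s primitive roll, so type $k$ wins whenever it is feasible to include it — but the prefix $\pre_{k-1}$ must simultaneously keep the first $k$ equations satisfied, in particular keeping the interim values of types $1,\dots,k-1$ pinned at $\interim{1},\dots,\interim{k-1}$ and keeping Equation~\eqref{eq:interim-system-sum} (the aggregate identity) in force. The key observation is that in this limit type $k$ essentially always wins when it shows up and the matroid permits it, so $g_k \to$ (roughly) the probability that $k$'s inclusion is feasible, which should exceed $\interim{k}$; here non-separability of $\M$ and the fact that $S$ is a \emph{minimal} tight (or the all-active) set is what guarantees that there is genuine competition and that $\interim{k}<1$ strictly (as established at the start of Lemma~\ref{lem:exist-barrier}), so the gap is real. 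Symmetrically, as $\theta_k\to 0^+$, type $k$'s roll collapses toward the constant $-\tau$, so relative to the other types it becomes the least competitive; combined with the constraint that the aggregate identity \eqref{eq:interim-system-sum} still holds, $g_k$ should drop \emph{below} $\interim{k}$. I expect the cleanest way to see both bounds is to use the aggregate identity: since $\sum_{t\in S}\pmf{t}g_t = \sum_{t\in S}\pmf{t}\interim{t}$ is maintained along the curve, if $g_k$ were to stay $\ge \interim{k}$ in the small-$\theta_k$ limit, some other type's interim would have to overshoot, contradicting that those are pinned — so one extracts the strict inequality from a conservation/exchange argument rather than a direct computation.

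**Monotonicity and uniqueness.**
The harder part will be (iii), strict monotonicity of the composite map, because $\theta_k$ enters $g_k$ both directly (through $k$'s own die) and indirectly (through the induced prefix $\pre_{k-1}(\theta_k,\suf_{k+1})$, which re-tunes $\tau$ and $\theta_1,\dots,\theta_{k-1}$ to keep the earlier equations satisfied). Directly, increasing $\theta_k$ stochastically increases $k$'s roll and hence can only help $k$ win, so $g_k$ increases; the subtlety is showing the compensating adjustment of the prefix cannot reverse this. I would argue this by a coupling/monotone-comparative-statics argument: increasing $\theta_k$ raises $k$'s winning probability, which by the aggregate identity must be offset by lowering the aggregate contribution of types $1,\dots,k-1$; restoring each of those to its target $\interim{t}$ requires shifting the prefix parameters in a direction that, by the same monotonicity logic applied one level down the induction, does not feed back to decrease $g_k$ — essentially because the prefix adjustment acts "below" $k$ in the ordering. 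This is the step I expect to require the most care, and it is plausibly where the paper invokes a lattice/uncrossing property of the tight sets or an implicit-function/Jacobian sign argument. Once continuity, the bracketing limits, and strict monotonicity are in hand, the Intermediate Value Theorem delivers existence of $\theta_k$ and strict monotonicity delivers uniqueness, completing the proof.
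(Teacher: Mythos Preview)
Your overall plan---continuity, limit analysis at $\theta_k\to 0^+$ and $\theta_k\to\infty$, then strict monotonicity and the Intermediate Value Theorem---matches the paper's structure exactly, and the paper does factor strict monotonicity out into separate lemmas (Lemmas~\ref{lem:monotone-reduced} and~\ref{lem:inter-candidate}) just as you anticipate. However, your execution of the two limit arguments has genuine gaps.

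For $\theta_k\to 0^+$, your conservation argument does not go through as stated. You write that if $g_k$ stayed at or above $\interim{k}$, ``some other type's interim would have to overshoot, contradicting that those are pinned.'' But only types $t<k$ are pinned at $\interim{t}$; types $t>k$ are \emph{not} pinned, so the aggregate identity only forces $\sum_{t>k}\pmf{t}g_t \le \sum_{t>k}\pmf{t}\interim{t}$, which is no contradiction. The paper instead tracks the prefix carefully: as $\theta_k\to 0$, some of the prefix parameters $h_t(\theta_k)$ may also collapse to $0$, and the threshold $\tau=h_0(\theta_k)$ may or may not go to $0$. In the case $h_0\to 0$, the set $A$ of types with surviving (positive-limit) parameters becomes tight in the limit; the strict inequality $\lim g_k<\interim{k}$ then comes from applying the Border constraint to this $A$, which is a \emph{proper} subset of the barrier set $S$ and hence strictly non-tight. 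In the case $h_0\not\to 0$, type $k$'s roll is eventually negative almost surely, giving $g_k\to 0$ directly.

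The same issue arises for $\theta_k\to\infty$: you cannot simply say ``$k$ wins whenever feasible,'' because the prefix $\pre_{k-1}(\theta_k,\suf_{k+1})$ moves with $\theta_k$, and some of the $h_t$ for $t<k$ may blow up alongside $\theta_k$. The paper again splits on whether $h_0\to\infty$. When $h_0$ stays finite, the set $A\cup\{k\}$ of types with diverging parameters becomes tight in the limit; since $A\cup\{k\}\subsetneq S$ is a proper subset of the barrier set, it is strictly non-tight for \INTE, and this is precisely what yields $\lim g_k>\interim{k}$. When $h_0\to\infty$, all types $t>k$ eventually roll negative, and the aggregate identity gives $\lim \pmf{k}g_k=\sum_{t\ge k}\pmf{t}\interim{t}>\pmf{k}\interim{k}$. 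So the minimality of the barrier set is used not merely to ensure $\interim{k}<1$, but to supply strict Border inequalities for the specific subsets $A$ and $A\cup\{k\}$ that emerge from the limit analysis.
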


The proof of Lemma~\ref{lem:range} is based on large part on the
following monotonicity properties:

\begin{lemma}\label{lem:monotone-reduced}
  Assume that $S$ contains types of more than one candidate or is not
  tight.
  For any $t \ge k$, consider
  $g_t(\pre_{k-1}(\theta_k, \suf_{k+1}), \theta_k, \suf_{k+1})$
  as a function of $\theta_i$, for $i \ge k$.
  The function $g_t$ satisfies the following properties:
  \begin{enumerate}
  \item It is weakly decreasing in $\theta_{i}$,
    for all $i \ne t, i \ge k$.
  \item It is strictly increasing in $\theta_t$.
  \end{enumerate}
\end{lemma}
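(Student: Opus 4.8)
The plan is to analyze how the interim winning probability $g_t$ of a type $t\in S$ responds to perturbing a single parameter $\theta_i$, while all the downstream parameters are re-solved according to the induction. Recall that rolling $\die[t]{}$ means drawing a primitive roll $v_t\sim D$ and outputting $\theta_t v_t - \tau$. First I would fix the primitive rolls $(v_t)_{t\in S}$ and the realized type profile $\vc{t}$, and observe that for matroid feasibility the greedy algorithm selects, among the active types in $\vc{t}\cap S$, a maximum-weight independent set where the weight of type $t$ is $\theta_t v_t - \tau$ (keeping only positive scores). The key monotonicity observation at the level of a single sample is an exchange argument: if we increase $\theta_t$, type $t$'s score can only go up (its primitive roll is nonnegative since $D$ is supported on $[0,\infty)$), so by the standard matroid exchange property $t$ can only move from ``loser'' to ``winner,'' never the reverse; conversely, increasing some $\theta_i$ with $i\neq t$ raises $i$'s score and so can only displace $t$, never help it. Integrating over the primitive rolls and the type profile gives weak monotonicity of the ``raw'' win probability in each $\theta_i$ holding $\tau$ and all other $\theta$'s fixed.

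The subtlety — and the main obstacle — is that in the statement of the lemma, $\pre_{k-1}$ is not held fixed: it is the vector of earlier parameters determined (via the induction hypothesis of Lemma~\ref{lem:exist-barrier}) so that equations $1,\dots,k$ of the system hold, and this vector moves when we move $\theta_i$ for $i\ge k$. So I need to argue that re-solving for $\pre_{k-1}$ preserves, rather than destroys, the monotonicity. The plan here is to peel off the dependence one coordinate at a time. Each earlier parameter $\theta_j$ (and $\tau=\pre_0$) is pinned down by an equation of the form ``$g_j=\interim{j}$'' (or, for $\tau$, the sum equation~\eqref{eq:interim-system-sum}), and each such equation, by the base-case-style argument already used for $h_0$, has the property that its solution is a \emph{monotone} function of the remaining parameters: increasing a downstream $\theta_i$ forces a compensating change in $\theta_j$ in a fixed direction. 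Chaining these monotone dependencies and composing with the single-sample monotonicity above, the net effect on $g_t$ still has a definite sign. Strictness in the $\theta_t$ direction comes from exhibiting a positive-probability event of primitive rolls and type profiles on which a marginal increase in $\theta_t$ strictly flips $t$ from losing to winning — this is exactly where the hypothesis ``$S$ contains types of more than one candidate or is not tight'' (hence $\interim{t}<1$ for all $t\in S$, and $t$ genuinely faces competition) is used: it guarantees that $t$ is not already winning almost surely, so there is room to increase its win probability.

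Concretely, the steps in order: (1) write $g_t$ as an expectation over primitive rolls and type profiles of an indicator that $t$ is greedily selected; (2) prove the single-sample exchange lemma — increasing $t$'s score weakly helps $t$, increasing any competitor's score weakly hurts $t$ — using the matroid exchange axiom and Theorem~\citet{whitney1935abstract} only insofar as non-separability was already invoked to define barrier sets; (3) show each pinning equation for an earlier coordinate defines a monotone implicit function (same Intermediate-Value-Theorem-plus-strict-monotonicity argument as for $h_0$), and record the sign of each dependence; (4) compose signs along the chain $\theta_i \rightsquigarrow \pre_{k-1} \rightsquigarrow g_t$ to conclude weak monotonicity in each $\theta_i$, $i\ge k$, $i\neq t$, and in the favorable direction for $\theta_t$; (5) upgrade the $\theta_t$ case to strict monotonicity by producing an explicit positive-measure set of $(\vc{v},\vc{t})$ on which the selection of $t$ changes, invoking $\interim{t}<1$. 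I expect step (4) — bookkeeping the signs through the re-solved prefix — to be where the real care is needed, and it may be cleanest to prove it jointly with Lemma~\ref{lem:range} by a simultaneous induction on $k$ rather than as a standalone statement.
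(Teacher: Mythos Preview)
Your plan for Part~1 is essentially the paper's argument: primitive-roll coupling plus rank submodularity gives the single-sample exchange lemma, and an induction on $k$ shows that each coordinate of $\pre_{k-1}$ is weakly increasing in every downstream $\theta_i$, so all indirect effects on $g_t$ (for $i\ne t$) have the same sign as the direct effect. That part is fine.

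The gap is in Part~2. Your step~(4) asserts that sign composition also works ``in the favorable direction for $\theta_t$,'' but it does not: when $\theta_t$ increases, every coordinate of $\pre_{k-1}$ (including $\tau$) weakly \emph{increases} as well, and all of those changes push $g_t$ \emph{down}. So the direct effect ($\theta_t\uparrow\Rightarrow g_t\uparrow$) and the indirect effect through the re-solved prefix ($\pre_{k-1}\uparrow\Rightarrow g_t\downarrow$) are opposed, and sign-chasing gives you nothing. Consequently your step~(5) is also insufficient: exhibiting a positive-measure event where $t$ flips from loser to winner at the \emph{raw} level does not rule out a compensating event in the other direction once the prefix re-solves. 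The paper sidesteps this entirely by an indirect argument: the first $k$ equations force $\sum_{j\ge k} \pmf{j} g_j(\suf_k)$ to be constant, and by Part~1 every $g_i$ with $i\ge k$, $i\ne t$ is weakly decreasing in $\theta_t$; hence it suffices to show that \emph{some} such $g_i$ is \emph{strictly} decreasing in $\theta_t$, which then forces $g_t$ to strictly increase. Establishing that strict decrease is where the hypothesis ``more than one candidate or not tight'' is actually used --- via a circuit through $\candidate(t)$ and $\candidate(i)$ in the non-separable matroid (Lemma~\ref{lem:inter-candidate}) in the multi-candidate case, or via strict growth of $h_0$ in the non-tight case --- not merely to guarantee $\interim{t}<1$.
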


\begin{lemma} \label{lem:inter-candidate}
If $\candidate(t) \ne \candidate(i)$, then $g_t(\vc{\theta}_S)$ is
strictly decreasing in $\theta_{i}$.
If in addition $i \geq k$ and $t \geq k$, then 
$g_t(\pre_{k-1}(\suf_{k}), \suf_{k})$
is also strictly decreasing in $\theta_{i}$.
\end{lemma}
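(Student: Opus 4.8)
The plan is to prove both parts by a coupling over the primitive rolls $(v_c)_c$, exploiting the non-separability of the matroid in force, $\M|S = \M|\candidate(S)$. (If $\M|\candidate(S)$ is separable one decomposes into non-separable components and works within the block containing both $\candidate(t)$ and $\candidate(i)$; the relevant case is thus the non-separable one.) When $S$ is a tight barrier set, non-separability is immediate: were $\M|\candidate(S)$ a proper direct sum $\M_1 \oplus \M_2$, then writing $S_1, S_2$ for the types of $S$ whose candidates lie in $\M_1, \M_2$, the direct-sum structure gives $R(S) = R(S_1) + R(S_2)$ while feasibility of $\INTE$ gives $\sum_{t \in S_j}\pmf{t}\interim{t} \le R(S_j)$; tightness of $S$ then forces both inequalities to be equalities, so $S_1$ and $S_2$ are nonempty tight sets of active types, contradicting minimality of the barrier set $S$. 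Hence, by Whitney's theorem~\citep{whitney1935abstract}, the distinct candidates $\candidate(t)$ and $\candidate(i)$ lie on a common circuit $C$ of $\M|\candidate(S)$; I fix one such $C$ and, for each $c \in C$, a type $s_c \in S$ of candidate $c$ with $s_{\candidate(t)} = t$ and $s_{\candidate(i)} = i$. I also use the weak monotonicity behind Lemma~\ref{lem:monotone-reduced}: raising a single die parameter raises only that candidate's score (and only on the event it realizes the matching type), so by the exchange monotonicity of greedy matroid optimization it can only weakly decrease every other candidate's winning indicator, pointwise over coupled rolls; hence $g_t$ is weakly decreasing in $\theta_i$ for $i \ne t$, both in the free-parameter and (via Lemma~\ref{lem:monotone-reduced}) prefix-reduced forms.

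For Part~1 I would fix $\theta_i < \theta_i'$ with all other parameters equal, couple the rolls, and condition on the positive-probability event $\mathcal{A}$ on which: (i) candidate $c$ realizes type $s_c$ for every $c \in C$; (ii) for each $c \in C \sm \{\candidate(t), \candidate(i)\}$ the roll $v_c$ is large enough that $c$'s score exceeds the score of every candidate outside $C$ and exceeds $\candidate(t)$'s and $\candidate(i)$'s scores in both parameter settings; (iii) every candidate outside $C$ rolls small enough that its score stays below $\candidate(t)$'s; and (iv) $\theta_i\, v_{\candidate(i)} < \theta_t\, v_{\candidate(t)} < \theta_i'\, v_{\candidate(i)}$, with $v_{\candidate(i)}$ large enough to keep $\candidate(t)$'s score positive --- a nonempty, hence positive-measure, range for $(v_{\candidate(t)}, v_{\candidate(i)})$ given the rest. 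On $\mathcal{A}$, under $\theta_i$ the greedy rule first takes $C \sm \{\candidate(t), \candidate(i)\}$ (a proper, hence independent, subset of $C$) and then $\candidate(t)$, since $C \sm \{\candidate(i)\}$ is still independent --- so $t$ wins; under $\theta_i'$, candidate $\candidate(i)$ now outranks $\candidate(t)$, so greedy first takes all of $C \sm \{\candidate(t)\}$ and then rejects $\candidate(t)$ because $C$ is a circuit --- so $t$ loses. Combining this positive-probability gap with the pointwise weak monotonicity everywhere else shows $g_t$ is strictly decreasing in $\theta_i$.

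For Part~2, $i, t \ge k$ and the solved prefix $\pre_{k-1}(\suf_k) = (\tau, \theta_1, \dots, \theta_{k-1})$ now varies with $\theta_i$; I would run the same coupling using two observations. First, because $i, t \ge k$ the parameter $\tau$ enters $\candidate(i)$'s and $\candidate(t)$'s scores identically and cancels from all the rankings in (iv), so that condition is independent of the prefix. Second, $\pre_{k-1}(\cdot)$ is continuous, and strictly positive in its $\theta_1, \dots, \theta_{k-1}$ coordinates (by the uniqueness of the solved prefix, established in tandem with Lemma~\ref{lem:range} by induction on $k$), so over a fixed compact interval of $\theta_i$ all prefix coordinates stay in a compact set, and the "large/small enough" and positivity thresholds in (ii)--(iv) can be chosen uniformly over that interval. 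Then for $\theta_i < \theta_i'$ the event $\mathcal{A}$ with these uniform thresholds still has positive probability and still witnesses $t$ winning at the smaller value and losing at the larger; with the weak monotonicity of the reduced $g_t$ from Lemma~\ref{lem:monotone-reduced}, this gives the strict decrease.

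I expect the main obstacle to be the bookkeeping in Part~1's event $\mathcal{A}$: the thresholds must be set so that, in both parameter settings simultaneously, greedy processes all of $C$ before any outside candidate, processes $\candidate(t)$ just after $C \sm \{\candidate(t), \candidate(i)\}$ in the first setting and just after all of $C \sm \{\candidate(t)\}$ in the second, and so that the sole cause of the flip in $\candidate(t)$'s selection is whether $C \sm \{\candidate(t)\}$ has already been taken --- which is exactly the point at which circuithood of $C$ is used. The truncation of non-positive scores and the random tie-breaking are harmless, since ties occur with probability zero and $\candidate(t)$'s score is kept positive on $\mathcal{A}$; for Part~2 the only extra care is the uniform-in-$\theta_i$ choice of thresholds, which is precisely why continuity and positivity of $\pre_{k-1}(\cdot)$ are invoked.
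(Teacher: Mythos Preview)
Your proposal is correct and follows essentially the same approach as the paper: use non-separability to find a circuit $C$ through $\candidate(t)$ and $\candidate(i)$, apply primitive-roll coupling, exhibit a positive-probability event on which $t$ flips from winning to losing as $\theta_i$ increases, and combine with weak monotonicity. The one substantive difference is in Part~2: the paper handles the moving prefix by invoking the monotonicity already established in Part~1 of Lemma~\ref{lem:monotone-reduced} (each coordinate of $\pre_{k-1}(\suf_k)$ weakly increases in $\theta_i$), so that under the coupling every parameter except $\theta_t$ weakly increases and $t$'s relative rank can only drop---this makes the circuit event at $\vc{\theta}_S$ automatically carry over to $\vc{\theta}'_S$. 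Your alternative via continuity and compactness of $\pre_{k-1}(\cdot)$ to choose thresholds uniformly over $[\theta_i,\theta'_i]$ is valid but more laborious; the paper's monotonicity shortcut is the cleaner route and is available to you since you already cite Lemma~\ref{lem:monotone-reduced}.
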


For notational convenience, we define the shorthand
$g_t(\inductionpoint) :=
g_t(\pre_{k-1}(\theta_k, \suf_{k+1}), \theta_k, \suf_{k+1})$.
Furthermore, for fixed $\suf_{k+1}$ (which will be clear from the context),
we define
$h_t(\theta_k)$ to be the \Kth{t}
component of $\pre_{k-1}(\theta_k, \suf_{k+1})_t$, for any $t < k$.

\begin{extraproof}{Lemma~\ref{lem:range}}
As we did in the proof of Lemma~\ref{lem:exist-barrier} for the base
case of the induction claim,
we will examine the limits of $g_k(\inductionpoint)$ as
$\theta_k \to 0$ and $\theta_k \to \infty$.
By doing so, we will establish that $\interim{k}$
lies between the two limits.
Then, using the continuity and strict monotonicity
(by Lemma~\ref{lem:monotone-reduced}) of $g_k(\inductionpoint)$,
the Intermediate Value Theorem implies that there is a unique $\theta_k$
satisfying $g_k(\inductionpoint) = \interim{k}$.

\begin{itemize}
\item To compute $\lim_{\theta_k\to \infty} g_k(\inductionpoint)$,
  consider the set of types
  $A = \Set{t < k}{\lim_{\theta_k\to \infty} h_t(\theta_k)=\infty}$.
  As $\theta_k \to \infty$,
  with probability approaching $1$,
  $A\cup\SET{k}$ will dominate all other types. 
  We distinguish two cases, based on whether
  $\lim_{\theta_k \to \infty} h_0(\vc{\theta}_S)$ is finite or not. 
\begin{itemize} 
\item If $\lim_{\theta_k\to \infty}h_0(\theta_{k}) \neq \infty$,
  the parameter $\tau$ will be finite in the limit. Therefore,
  a maximum independent subset of $A \cup \SET{k}$ will be selected as
  winners,  implying that $A \cup \SET{k}$ is tight in the limit.
  Because $\pre_{k-1}(\theta_k, \suf_{k+1})$ ensures that the first
  $k$ equations of
  \eqref{eq:interim-system-sum}--\eqref{eq:interim-system} are satisfied,
  each type $t \in A$ wins with probability $\pmf{t} \interim{t}$.
  Because $A \cup \SET{k}$ is tight, 
  $\lim_{\theta_k \to \infty} \pmf{k} g_k(\inductionpoint)
  = R (A\cup \SET{k}) - \sum_{t\in A} \pmf{t} \interim{t}$.
  From the Border constraint for the set $A\cup \SET{k}$,
$\pmf{k} \interim{k} + \sum_{t\in A} \pmf{t} \interim{t} < R(A\cup\SET{k})$;
otherwise $A \cup \SET{k}\subsetneq S$ would be a tight set,
contradicting that $S$ is a barrier set.
Rearranging, we have shown that
$\lim_{\theta_k \to \infty} \pmf{k} g_k(\inductionpoint) > \pmf{k} \interim{k}$.
\item If $\lim_{\theta_k \to \infty} h_0(\theta_{k}) = \infty$,
  then all types $t > k$ will get negative die rolls with probability
  approaching 1, so $\lim_{\theta_k\to\infty} g_t(\inductionpoint) = 0$
  for all types $t > k$.
  Thus,
  $\lim_{\theta_k\to\infty} \pmf{k} g_k(\inductionpoint)
  + \sum_{t<k} \pmf{t} \interim{t} = \sum_{t\in S} \pmf{t} \interim{t}$,
  i.e.,
  $\lim_{\theta_k\to\infty}\pmf{k} g_k(\inductionpoint)
  = \sum_{t\ge k} \pmf{t} \interim{t} > \pmf{k} \interim{k}$,
  because all types in a barrier set are active by definition.
\end{itemize}



\item To compute $\lim_{\theta_k \to 0}g_k(\inductionpoint)$, define
  the type set
  $A = \Set{1 \le t < k}{\lim_{\theta_k\to 0} h_t(\theta_k) > 0}
  \cup \SET{k+1, \dots, \BT}$.
  As $\theta_k\to 0$, with probability approaching $1$,
  $A$ will dominate all other types because for
  $t \notin A$, $h_t(\theta_k)$ goes to $0$.
  Again, we distinguish two cases,
  based on whether $\lim_{\theta_k \to \infty} h_0(\theta_k) = 0$ or not. 
\begin{itemize}
\item If $\lim_{\theta_k \to \infty} h_0(\theta_k) = 0$,
  then in the limit, a maximum independent subset of $A$ must always be
  chosen as winners, so $A$ approaches tightness. 
  Again, $\pre_{k-1}(\theta_k, \suf_{k+1})$ ensures that
  $g_{t}(\inductionpoint) = \interim{t}$ for $t < k$, so we have
  $\lim_{\theta_k \to 0}\sum_{t > k} \pmf{t} g_t(\inductionpoint)
  = R(A) - \sum_{t\in A, t < k} \pmf{t} \interim{t}$.
  Combined with
  $\sum_{t\ge k} \pmf{t} g_t(\inductionpoint)
  = \sum_{t\ge k} \pmf{t} \interim{t}$,
  which is also ensured by $\pre_{k-1}(\theta_k, \suf_k)$,
  we obtain that 
  \[\lim_{\theta_k\to 0} \pmf{k} g_k(\inductionpoint)
    = \sum_{t\ge k} \pmf{t} \interim{t} -
    \left( R(A) - \sum_{t\in A, t<k} \pmf{t} \interim{t} \right). \] 
  Rearranging the Border constraint corresponding to 
  $A\subseteq S \setminus \SET{k}$ gives us that
  $0 > \sum_{t > k} \pmf{t} \interim{t} -
  \left( R(A)-\sum_{t\in A, t < k} \pmf{t} \interim{t} \right)$
  because $A \subsetneq S$ is not tight.
  Adding $\pmf{k} \interim{k}$ on both sides,
  we get $\pmf{k} \interim{k} > \sum_{t \ge k} \pmf{t} \interim{t} -
  \left( R(A) - \sum_{t\in A, t < k}\pmf{t} \interim{t}\right)$.
  Finally, canceling out $\pmf{k}$, we obtain that
  $\interim{k} > \lim_{\theta_k \to 0} g_k(\inductionpoint)$. 
 \item If $\lim_{\theta_k\to 0} h_0(\theta_k) > 0$, as $\theta_k \to 0$,
   type $k$ will get a negative roll with probability
   approaching $1$,
   so $\lim_{\theta_k\to 0} g_k(\inductionpoint) = 0 < \interim{k}$. 
 \end{itemize}
\end{itemize}

In summary, we have shown that
$\lim_{\theta_k \to 0}g_k(\inductionpoint) < \interim{k}
< \lim_{\theta_k\to \infty} g_k(\inductionpoint)$. 
Because  $g_k(\suf_k)$ is continuous and monotone in $\theta_k$,
by the Intermediate Value Theorem,
there exists a unique $\theta_k$ such that $g_k(\suf_k) = \interim{k}$.
\end{extraproof}

In the proofs for Lemmas~\ref{lem:monotone-reduced} and
\ref{lem:inter-candidate},
we will often want to analyze the effect of 
keeping $\theta_{i}$ for all $i \ne t$ unchanged,
while changing $\theta_t$ to $\theta'_t$.
Thereto, we always consider the following coupling of dice rolls in
the two scenarios, which we call \emph{primitive-roll coupling for $t$}:
all dice will obtain the same primitive rolls $\vc{v}$ under the two
scenarios, but scale them differently.
More precisely, we consider the rolls
$(\theta_i v_i - \tau)_{i\in S}$ and $(\theta'_i v_i - \tau')_{i \in S}$,
for any given primitive rolls $\vc{v}$,
where $\theta'_i = \theta_i$ for $i \ne t, i \ge k$.
Recall that all $\theta_i$ for $i < k$ are functions of $\suf_k$. 

\begin{extranoqedproof}{Part 1 of Lemma~\ref{lem:monotone-reduced}}
To show that $g_t(\theta_k, \suf_{k+1})$ is weakly decreasing
in $\theta_i$ for all $i \ge k, i \ne t$,
we again use induction on $k$.
The induction hypothesis is the following:
\begin{enumerate}
\item Each entry of $\pre_{k}(\suf_{k+1})$ is weakly increasing
  in $\theta_i$ for $i > k$.
\item $g_t(\suf_{k+1})$ is weakly decreasing in $\theta_i$
  for $i \ne t, i > k$.
\end{enumerate}

We begin with the base case $k=0$.
The first part of the base case --- that $\pre_0(\vc{\theta}_S)$ is
weakly increasing in $\theta_i$ --- has been shown in the proof of
Lemma~\ref{lem:exist-barrier}. 
To prove the second part of the base case,
we use primitive-roll coupling for $i$ with $\theta_i< \theta'_{i}$. 
For every scenario with primitive rolls $\vc{v}$ in which $t$ is not
a winner with $\vc{\theta}_S$,
we show that $t$ cannot become a winner with $\vc{\theta'}_S$. 
\begin{itemize}
\item Increasing $\theta_i$ to $\theta'_i$ can only (weakly)
  increase the threshold $\tau = \pre_0(\vc{\theta}_S)$,
  by the first part of the base case.
\item Let $A$ be the set of types with rolls higher than the roll
of type $t$ when the rolls are derived from $\vc{v}$ using $\vc{\theta}_S$.
Since $t$ is not winning, we have

\[ 0 \; = \; r_\M(A\cup\SET{t}) - r_\M(A)
     \; \ge \; r_\M(A\cup\SET{t, i}) - r_M(A\cup\SET{i}). \]
The inequality is due to the submodularity of the matroid rank
function, and shows formally that (potentially) adding $i$ to the set
of types with higher rolls than $t$ cannot help $t$ become a winner.
\end{itemize}

\noindent For the induction step, consider some $k \geq 1$ and fix a $\suf_{k+1}$.
By the induction hypothesis,
each entry of $\pre_{k-1}(\theta_k, \suf_{k+1})$ is weakly increasing
in $\theta_i$ for $i \ge k$,
and $g_t(\theta_k, \suf_{k+1})$ is weakly decreasing in $\theta_i$
for $i \ne t, i \ge k$.

\begin{itemize}
\item We first show that each entry of $\pre_k(\suf_{k+1})$
  is weakly increasing in $\theta_i$ for $i > k$.
  The key is component $k$ of $\pre_k(\suf_{k+1})$.
  By the second part of the induction hypothesis, applied with $t=k$,
  $g_k(\pre_{k-1}(\theta_k,\suf_{k+1}), \theta_k, \suf_{k+1})$
  is weakly decreasing in $\theta_i$.
  Since $g_k(\pre_{k-1}(\theta_k,\suf_{k+1}), \theta_k, \suf_{k+1})$
  is defined as the winning probability of $k$ with the given
  parameters, when $\theta_i$ is raised to $\theta'_i > \theta_i$,
  in order to keep
  $g_k(\pre_{k-1}(\theta'_k,\suf'_{k+1}), \theta'_k, \suf'_{k+1})
  = \interim{k} =
  g_k(\pre_{k-1}(\theta_k,\suf_{k+1}), \theta_k, \suf_{k+1})$,
  we require $\theta'_k \geq \theta_k$.
  By applying the induction hypothesis twice, once for $\theta_i$ and
  once for $\theta_k$, 
  all entries of $\pre_{k-1}(\theta_k, \suf_{k+1})$ weakly increase.
  Therefore, all entries of $\pre_{k}(\suf_{k+1})$ are
  weakly increasing in all $\theta_i$ for $i > k$. 
\item To prove the second part of the inductive step,
  recall that
  $g_t(\suf_{k+1}) = g_t(\pre_k(\suf_{k+1}), \suf_{k+1})
    = g_t(h_{k}(\suf_{k+1}), \suf_{k+1})$.
    For $t\le k$, $g_t(\suf_{k+1})=\interim{t}$ is a constant,
    and in particular weakly decreasing in $\theta_i$.
    Consider some $t > k, i > k, t \ne i$.
    First, $h_k(\suf_{k+1})$ is weakly increasing in $\theta_i$
    by the first part of the induction hypothesis.
    By the second part of the induction hypothesis,
    $g_t(\theta_k, \suf_{k+1})$ is weakly decreasing in all of
    its variables except $\theta_t$,
    so substituting $\theta_k = h_k(\suf_{k+1})$ shows that
    $g_t(h_k(\suf_{k+1}),\suf_{k+1})$ is weakly decreasing in $\theta_i$. \QED
\end{itemize}
\end{extranoqedproof}

\begin{extranoqedproof}{Part 2 of Lemma~\ref{lem:monotone-reduced}}
  Recall that $t \geq k$.
  To show that $g_t(\suf_k)$ is strictly increasing in $\theta_t$,
  we will show that at least one type $i \ge k$ has $g_{i}(\suf_k)$
  strictly decreasing in $\theta_t$.
  By Part 1 of the lemma, each $g_{i}(\suf_k)$ for $i \ge k $ is weakly
  decreasing in $\theta_{t}$. 
  By definition, $\pre_{k-1}(\suf_{k})$ ensures that the first $k$
  equations of the system
  \eqref{eq:interim-system-sum}-\eqref{eq:interim-system} are satisfied;
  this implies that
  $\sum_{j\ge k} \pmf{j} g_j(\suf_k) = \sum_{j\ge k} \pmf{j} \interim{j}$.  
  Thus, if at least one of the $g_{i}(\suf_k)$ is strictly decreasing
  in $\theta_t$,
  to keep the summation $\sum_{j\ge k}\pmf{j} g_j(\suf_k)$ unchanged, 
  $g_{t}(\suf_k)$ must increase strictly in $\theta_t$. 
  We consider two possible cases for $S$,
  as permitted by the assumption of the lemma:

  \begin{itemize}
  \item $S$ is not tight.
    In this case, we first show that $h_0(\vc{\theta}_S)$ is a strictly
    increasing function of $\theta_t$.
    Consider the primitive-roll coupling for $\theta'_t > \theta_t$. 
    For any primitive rolls $\vc{v}$, the number of winners will not
    decrease when $\theta_t$ increases to $\theta'_t$.
    Thus, we can bound the summation
    $\sum_{i\in S} \pmf{i} g_i(\tau, \vc{\theta}_S)
    \le \sum_{i\in S} \pmf{i} g_i(\tau, \vc{\theta'}_S)$.
    Because $S$ is not tight, with non-zero probability,
    $\vc{v}$ is such that $v_i\theta_i - \tau < 0$ for all $i$.
    And because $\vc{v}$ is drawn from a continuous distribution,
    with positive probability, $v_t \theta'_t - \tau > 0$ as well.
    In that case, $t$ is the only candidate with a positive die roll.
    This implies that
    $\sum_{i\in S} \pmf{i} g_i(\tau, \vc{\theta}_S)
    < \sum_{i\in S} \pmf{i} g_i(\tau, \vc{\theta'}_S)$.
    Because $h_0(\vc{\theta'}_S)$  is defined as the unique $\tau$
    satisfying the equation
    $\sum_{i\in S} \pmf{i} g_i(\tau, \vc{\theta'}_S)
    = \sum_{i\in S} \pmf{i} \interim{i}$,
    we obtain that $h_0(\vc{\theta'}_S) > h_0(\vc{\theta}_S)$. 

    Now consider any primitive rolls $\vc{v}$ under which a type $i\ne
    t$ wins, and $\vc{v}$ is such that
    $v_{i}\theta_{i} - h_0(\vc{\theta}_S) > 0$ but
    $v_{i}\theta_{i} - h_0(\vc{\theta'}_S) < 0$.
    Such rolls $\vc{v}$ must occur with positive
    probability, because all the dice are fully supported over
    $(0, \infty)$. 
    In that case, $i$ is no longer a winning type under $\vc{\theta'}_S$.
    Thus $g_{i}(\vc{\theta}_S)$  is strictly decreasing in $\theta_t$.
    
  \item There is more than one candidate, i.e.,
     there is a type $i$ with $\candidate(i) \ne \candidate(t)$.
    \begin{itemize}
    \item If there is such a type $i \ge k$
      with $\candidate(i) \ne \candidate(t)$,
      then $g_{i}(\suf_k)$ is strictly decreasing in $\theta_t$
      by the second part of Lemma~\ref{lem:inter-candidate}.
    \item Otherwise, all types $t' \ge k$ have
      $\candidate(t') = \candidate(t)$.
      Define $\suf'_{k} = \suf_{k}$ for all entries except type $t$,
      where it equals $\theta'_t > \theta_t$.
      Consider a type $i < k$ with $\candidate(i) \ne \candidate(t)$.
      By definition of $\pre_{k-1}(\cdot)$, we get that
      $g_{i}(\suf'_k) = \interim{i} = g_{i}(\suf_k)$.
      And by the first part of Lemma~\ref{lem:inter-candidate},
      $g_{i}(\pre_{k-1}(\suf_k), \suf_k) > g_{i}(\pre_{k-1}(\suf_k), \suf'_k)$.
      So $i$ wins with strictly higher probability under
      the parameters $(\pre_{k-1}(\suf'_k), \suf'_k)$ than under
      the parameters $(\pre_{k-1}(\suf_k), \suf'_k)$.
      As shown in Part 1 of Lemma~\ref{lem:monotone-reduced},
      $g_i(\theta_S)$ is weakly decreasing in all of its variable except $\theta_i$. 
      Thus, the only way that the winning probability of agent $i$ can
      increase is for the \Kth{i} component of $\pre_{k-1}(\suf'_k)$
      to be strictly greater than the \Kth{i} component of
      $\pre_{k-1}(\suf_k)$.

      Finally, consider some type $t' \geq k, t' \ne t$ with
      $\candidate(t') = \candidate(t) \ne \candidate(i)$.
      Because all components of $\pre_{k-1}(\cdot_k)$ weakly increase
      going from $\suf_k$ to $\suf'_k$, and the \Kth{i} component
      strictly increases, we get that $g_{t'}(\suf_k) < g_{t'}(\suf'_k)$.
      Thus, we have shown that there is a $t' \ne t$ such that
      $g_{t'}(\suf_k)$ is strictly decreasing in $\theta_t$. \QED
    \end{itemize}
 \end{itemize}
\end{extranoqedproof}

\begin{extraproof}{Lemma~\ref{lem:inter-candidate}}
We prove both statements using primitive-roll coupling.
For the second part of the lemma, notice that when
$\theta_i$ is increased to $\theta'_i$ for $i \geq k$,
all components of $\pre_{k-1}(\suf_k)$ weakly increase.
Thus, all components of $(\pre_{k-1}(\suf_k), \suf_k)$
are weakly larger than those of $(\pre_{k-1}(\suf'_k), \suf'_k)$,
and $\theta'_i > \theta_i$, while $\theta'_t = \theta_t$.
Thus, we can apply primitive-roll coupling in both cases.

Since the matroid is non-separable,
there is a circuit $C$ that contains
$\candidate(t)$ and $\candidate(i)$.
Under the parameter vector $\vc{\theta}_S$,
with non-zero probability (over primitive rolls $\vc{v}$),
the candidates in $C$ get the highest rolls,
$\candidate(t)$ (barely) wins with the second-lowest roll,
and $\candidate(i)$ gets the lowest roll among all candidates in $C$
and does not win.
When $\theta_i$ increases to $\theta'_{i}$,
with non-zero probability, the scaled roll for type $t$ becomes the
lowest so that $t$ ceases to be a winner.
Combining this with Part 1 of Lemma~\ref{lem:monotone-reduced}
(which states that $g_t(\vc{\theta}_S)$ is weakly decreasing in $\theta_i$),
we have that $g_t(\vc{\theta}_S)$ is strictly decreasing in $\theta_{i}$. 
\end{extraproof}
}{The proof is quite technically involved, and due to space
  constraints, it is entirely deferred to the full version.
  The high-level idea is to base the dice upon 
  any full-support continuous distribution for the dice.
  This distribution is scaled by different parameters $\theta_i$ for
  different types $i$, and shifted by a constant $\tau$.
  Matching the prescribed interim winning probabilities \interim{i}
  imposes a system of non-linear equations on the $\theta_i$.
  We establish that the winning probabilities of different types
  satisfy certain key monotonicity properties in the parameters $\theta_i$;
  these monotonicity properties are proved using the fact that the 
  feasible sets form a matroid.
  Then, we apply the Intermediate Value Theorem inductively for all
  types $i$ to non-constructively prove the existence of the desired $\theta_i$.}

To generalize Lemma~\ref{lem:exist-barrier} to arbitrary sets of
types, we will need a construction that allows us to ``scale''
the faces of some dice such that they will always be above/below the faces
of another set of newly introduced dice;
such a construction will allow us to give dice for types in barrier
sets higher faces than other dice.
For the types of full-support distributions over $[0, \infty)$ we have
been using so far, this would be impossible.
There is a simple mapping that guarantees our desired properties:
we map faces from $(0,\infty)$ to the set $(1, 2)$
by mapping all positive $s \mapsto 2 - \frac{1}{1+s}$,
and mapping all negative $s$ to $-1$.
Notice that the new dice implement the same interim rule as the old ones:
in matroid environments, the set maximizing the sum of die rolls is
determined by the greedy algorithm, and hence, only the relative order
between die faces matters.
With the help of this mapping, we prove the following lemma,
similar to Lemma~\ref{lem:exist-barrier}.
\camera{}{The proof is again only given in the full version.}
  
\begin{lemma}\label{lem:exist-lattice}
Let $\INTE > \vc{0}$ be a feasible interim rule within a
winner-selection environment $(\PMF, \M)$,
where $\M$ is a non-separable matroid.
Fix a tight set $S$, and let $\hat{S}$ be a minimal tight
set that includes $S$ as a proper subset, if such a set exists;
otherwise let $\hat{S} = T$.
Given dice $\D =(D_t)_{t\in S}$ that implement $\INTE_S$
in $(\PMF,\M|S)$,
there are dice $\D' = (D'_t)_{t \in \hat{S}}$
which implement
$\INTE_{\hat{S}}$ in $(\PMF,\M|\hat{S})$.
\end{lemma}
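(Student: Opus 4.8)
The plan is to build $\D'$ by keeping a rescaled copy of the given dice $\D$ for the types in $S$ and adding fresh dice for the types in $\hat{S}\setminus S$ that always roll strictly below every positive face of the $S$-dice. Concretely, I would first apply the mapping described just before the lemma to the dice in $\D$, so that their positive faces lie in $(1,2)$ and their negative faces become $-1$; the interim rule they implement in $\M|S$ is unchanged, since in a matroid environment the greedy algorithm only cares about the relative order of faces. If the new $\hat{S}\setminus S$-dice are then forced to have all positive faces in $(0,1)$, the greedy algorithm on $\M|\hat{S}$ splits into two phases: first it fills up a maximal independent subset $I$ of the positively-rolling types among $\vc{t}\cap S$, which, because $S$ is tight (Remark~\ref{remark:tight}), is a basis of $\vc{t}\cap S$ and equals, realization by realization, the winner set that $\D$ produces in $\M|S$, so the interim probabilities of all $t\in S$ are preserved; second, it extends $I$ using the positively-rolling types of $\hat{S}\setminus S$. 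In the second phase the types of $\hat{S}\setminus S$ compete in the contracted matroid $(\M|\hat{S})/I$; by a standard closure argument this depends only on the span of $I$, hence only on $X=\vc{t}\cap S$, and equals the restriction to $\hat{S}\setminus S$ of the matroid with rank function $A\mapsto r_\M(A\cup X)-r_\M(X)$. So it remains to choose the $\hat{S}\setminus S$-dice so that each $t\in\hat{S}\setminus S$ wins with probability $\interim{t}$ conditional on having type $t$ in this ``residual'' environment.

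Next I would check that $\INTE_{\hat{S}\setminus S}$ is feasible for the residual environment and that $\hat{S}\setminus S$ is its unique barrier set. Writing $R_{\mathrm{res}}(A)=\Expect[\vc{t}\sim\PMF]{r_\M((\vc{t}\cap S)\cup(\vc{t}\cap A))-r_\M(\vc{t}\cap S)}=R(S\cup A)-R(S)$ for $A\subseteq\hat{S}\setminus S$, and using tightness of $S$ to replace $R(S)$ by $\sum_{t\in S}\pmf{t}\interim{t}$, the residual slack of $A$ equals $R(S\cup A)-\sum_{t\in S\cup A}\pmf{t}\interim{t}=\slack(S\cup A)\ge 0$. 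Hence $\INTE_{\hat{S}\setminus S}$ is feasible, the tight sets of the residual environment are exactly the $A$ with $S\cup A$ tight for $(\PMF,\INTE,\M)$, and minimality of $\hat{S}$ among the tight sets properly containing $S$ — together with $\INTE>\vc{0}$, so that every $t\in\hat{S}\setminus S$ is active, in the case $\hat{S}=T$ — forces $\hat{S}\setminus S$ to be the sole barrier set.

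With the residual environment in hand, I would construct the $\hat{S}\setminus S$-dice by re-running the argument of Lemma~\ref{lem:exist-barrier}. Base each die on a fixed full-support continuous distribution over $[0,\infty)$; draw a primitive roll $v_t$, scale it by a parameter $\theta_t>0$, subtract a global shift $\tau$, and then pass $\theta_tv_t-\tau$ through an order-preserving map sending positive values into $(0,1)$ and non-positive values to a fixed negative constant, which guarantees the placement below the rescaled $S$-faces. This final map is invisible to the greedy algorithm, so the winning probability $g_t(\tau,\vc{\theta})$ of each $t\in\hat{S}\setminus S$ is exactly its winning probability in the residual environment under the raw faces $\theta_tv_t-\tau$, and the analysis of Section~\ref{sec:exist} transfers almost verbatim. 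In particular, the Intermediate Value Theorem fixes $\tau=h_0(\vc{\theta})$ from $\sum_{t\in\hat{S}\setminus S}\pmf{t}g_t=\sum_{t\in\hat{S}\setminus S}\pmf{t}\interim{t}$, with the limits $\tau\to 0$ and $\tau\to\infty$ controlled respectively by $R_{\mathrm{res}}(\hat{S}\setminus S)\ge\sum_t\pmf{t}\interim{t}$ and by $\sum_t\pmf{t}\interim{t}>0$; an induction over the types of $\hat{S}\setminus S$, mirroring Lemmas~\ref{lem:range}, \ref{lem:monotone-reduced} and \ref{lem:inter-candidate}, then pins down the remaining $\theta_t$. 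Finally I would assemble $\D'$ from the rescaled $S$-dice and the new $\hat{S}\setminus S$-dice and record the two implementation claims: $\interim{t}$ for $t\in S$ by the two-phase argument above, and $\interim{t}$ for $t\in\hat{S}\setminus S$ by construction.

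The main obstacle is re-establishing the monotonicity properties of Lemmas~\ref{lem:monotone-reduced} and \ref{lem:inter-candidate} for the residual environment, above all the strict inter-candidate monotonicity, which requires a circuit through $\candidate(t)$ and $\candidate(i)$ for any two types $t,i\in\hat{S}\setminus S$ of distinct candidates, whereas the ambient matroid $\M|\hat{S}$ (and its random contraction) need not be non-separable even though $\M$ is. I would handle this the way Theorem~\ref{thm:exist} handles separable matroids: if $\hat{S}\setminus S$ is a minimal non-empty tight set, then any splitting of the relevant rank structure would split $\hat{S}\setminus S$ into two non-empty tight pieces and contradict minimality, so the residual structure is effectively connected; the remaining case, where $\hat{S}\setminus S$ is the non-tight set of all active types, is reduced to the connected one by decomposing into non-separable components and gluing the per-component dice, since rolls from different components never interact in the greedy algorithm. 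Once this reduction is in place, Whitney's theorem supplies the required circuit within each component, the weak monotonicity statements follow from submodularity of the matroid rank function exactly as before, and the primitive-roll coupling arguments go through unchanged because the fixed $S$-dice are never perturbed.
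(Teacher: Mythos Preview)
Your proposal is correct and follows essentially the same approach as the paper: rescale the given $S$-dice into the interval $(1,2)$, build fresh $\hat{S}\setminus S$-dice whose positive faces lie in $(0,1)$ via the same scale-shift-then-squash construction, and re-run the Intermediate Value Theorem induction of Lemma~\ref{lem:exist-barrier} with $g_t$ redefined as the winning probability of $t\in\hat{S}\setminus S$ in the enlarged environment with the $S$-dice held fixed. The paper's proof is a short sketch that asserts ``none of the monotonicity properties of the functions $g_t(\cdot)$ will be affected under this new definition'' and records only the modified limits; your write-up supplies considerably more detail, in particular the two-phase greedy decomposition, the explicit residual Border calculation $\slack_{\mathrm{res}}(A)=\slack(S\cup A)$, the verification that $\hat{S}\setminus S$ is the residual barrier set, and the discussion of why strict inter-candidate monotonicity survives even though $\M|\hat{S}$ and its contractions need not inherit non-separability from $\M$ --- an issue the paper does not explicitly address.
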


\camera{%
\begin{noqedproof}
  First, we define the dice $D'_t$ for types $t \in S$:
  they are obtained by applying the transformation described
  previously to the corresponding dice $D_t$.
  Thus, the range of positive faces is $(1,2)$ for these $D'_t$.

  For types $t \in \hat{S} \setminus S$, 
  we construct new dice $D'_t$ in a similar way to the proof of
  Lemma~\ref{lem:exist-barrier}.
  The construction is essentially identical, but all positive die
  faces $s$ are mapped to $1-\frac{1}{1+s}$.
  Notice that this mapping ensures that all die faces are strictly
  less than 1, and hence, these types $t$ will always lose to types
  $t' \in S$.
  The main other change in the proof is to change the definition of
  $g_t(\tau, \vc{\theta}_{\hat{S} \setminus S})$:
  it is now the interim winning probability of the type $t\in S'$ when 
  winners are selected with the die rolls $\D'$,
  assuming all types outside $\hat{S}$ always get negative die rolls. 

  None of the monotonicity properties of the functions $g_t(\cdot)$
  will be affected under this new definition;
  thus, the same proof will go through,
  with only a slight change in the limits of $\pmf{k}g_k(\theta_k, \suf_{k+1})$:
  \begin{align*}
    \lim_{\theta_k \to \infty} \pmf{k} g_k(\theta_k, \suf_{k+1})
    & \in \{ \sum_{t \ge k} \pmf{t} \interim{t},
               R(A\cup \SET{k})-\sum_{t\in A} \pmf{t} \interim{t}\}\\
    \lim_{\theta_k\to 0} \pmf{k} g_k(\theta_k, \suf_{k+1})
    & \in \{ 0, \sum_{t \ge k} \pmf{t} \interim{t} -
           \left(R(A) - \sum_{t\in A \setminus B} \pmf{t} \interim{t} \right)
             \}. \QED
  \end{align*}
\end{noqedproof}
}{}

\begin{extraproof}{Theorem~\ref{thm:exist}}
First consider the case when $\M$ is non-separable.
Let $T$ be the type set with $m$ types.
We define the dice system as follows:
First, for all types $t$ with $\interim{t} = 1$,
assign them a point distribution (single-sided die) at $2$, so they
always win.
Next, for all types $t$ with $\interim{t} = 0$,
assign them a point distribution at $-1$, so they never win. 
Next, we create dice for barrier sets $S$ according to
Lemma~\ref{lem:exist-barrier}.
Then, starting from the barrier sets,
we repeatedly apply Lemma~\ref{lem:exist-lattice} to construct dice
for larger tight sets $\hat{S} \supsetneq S$ (or all of $T$)
implementing $\INTE_{\hat{S}}$.

If $\M$ is separable, let $\M_1, \dots, \M_k$ be the components of
$\M$.
Using the construction from the previous paragraph for each $\M_j$,
let $\D_j$ be the dice set constructed for $\M_j$.
Since there is no circuit containing two candidates from
different components,
the winner set of one component has no effect on the winner set of any
other component.
Thus, the union $\bigcup_{j=1}^k \D_j$ of dice implements the desired
interim rule.
\end{extraproof}

\subsection{Winner-Selecting Dice with polynomially many faces}

The proof is based on the following generalization of the fundamental
theorem of linear programming to uncountable dimensions.

\begin{theorem}[Theorem B.11 from \cite{lasserre2010moments}]
\label{thm:lasserre}
Let $f_1, \dots, f_m : \mathbb{X} \to \mathbb{R}$ be Borel measurable
on a measurable space $\mathbb{X}$,
and let $\mu$ be a probability measure on $\mathbb{X}$ such that $f_i$
is integrable with respect to $\mu$ for each $i=1,\dots, m$.
Then, there exists a probability measure $\varphi$ on $\mathbb{X}$
with finite support, such that
\camera{
\[
  \int_{\mathbb{X}} f_id\varphi = \int_{\mathbb{X}} f_id\mu
\]}{%
$
  \int_{\mathbb{X}} f_id\varphi = \int_{\mathbb{X}} f_id\mu
$}
for all $i$.
Moreover, there is such a $\varphi$ whose support consists of at most
$m+1$ points.
\end{theorem}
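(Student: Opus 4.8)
The plan is to turn the statement into a finite-dimensional convexity fact and then invoke Carathéodory's theorem. Define the ``moment map'' $\Phi : \mathbb{X} \to \mathbb{R}^m$ by $\Phi(x) = (f_1(x), \dots, f_m(x))$. Since each $f_i$ is Borel measurable, $\Phi$ is Borel measurable, and since each $f_i$ is $\mu$-integrable, the vector $c := \int_{\mathbb{X}} \Phi \, d\mu$ is a well-defined point of $\mathbb{R}^m$ whose $i$-th coordinate is $\int_{\mathbb{X}} f_i \, d\mu$. The whole theorem follows once I show $c \in \mathrm{conv}(\Phi(\mathbb{X}))$, the convex hull of the image: by Carathéodory's theorem in $\mathbb{R}^m$ I can then write $c = \sum_{j=1}^{k} \lambda_j \Phi(x_j)$ with $k \le m+1$, points $x_j \in \mathbb{X}$, and weights $\lambda_j \ge 0$ summing to $1$; the probability measure $\varphi := \sum_{j=1}^{k} \lambda_j \delta_{x_j}$ is then supported on at most $m+1$ points and satisfies $\int_{\mathbb{X}} f_i \, d\varphi = \sum_{j} \lambda_j f_i(x_j) = c_i = \int_{\mathbb{X}} f_i \, d\mu$ for every $i$.

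So the work is in proving $c \in \mathrm{conv}(\Phi(\mathbb{X}))$. The easy half is that $c$ belongs to the \emph{closed} convex hull: otherwise the separating hyperplane theorem gives $a \in \mathbb{R}^m$ and $\epsilon > 0$ with $\langle a, \Phi(x)\rangle \le \langle a, c\rangle - \epsilon$ for all $x \in \mathbb{X}$, and integrating against $\mu$ yields the contradiction $\langle a, c\rangle = \int_{\mathbb{X}} \langle a, \Phi(x)\rangle \, d\mu(x) \le \langle a, c\rangle - \epsilon$. To remove the closure I would induct on $d := \dim \mathrm{aff}(\Phi(\mathbb{X}))$. First note $c \in \mathrm{aff}(\Phi(\mathbb{X}))$, since that affine subspace is cut out by linear equations satisfied by every $\Phi(x)$ and hence by the average $c$. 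If $d = 0$, then $\Phi$ is constant and $c$ is literally a point of $\Phi(\mathbb{X})$. If $d \ge 1$, then either $c$ lies in the relative interior of $\mathrm{conv}(\Phi(\mathbb{X}))$ and we are done, or (since a convex set and its closure share the same relative interior) $c$ lies on the relative boundary of the closed convex hull, so there is a hyperplane $H' \subsetneq \mathrm{aff}(\Phi(\mathbb{X}))$ supporting it at $c$: some $a$ with $\langle a, \Phi(x)\rangle \le \langle a, c\rangle$ for all $x$ and $\langle a, \cdot\rangle$ nonconstant on $\mathrm{aff}(\Phi(\mathbb{X}))$. Since $\int_{\mathbb{X}} (\langle a, c\rangle - \langle a, \Phi(x)\rangle) \, d\mu(x) = 0$ with a nonnegative integrand, $\Phi(x) \in H'$ for $\mu$-almost every $x$. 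Restricting to the Borel set $\mathbb{X}' := \Phi^{-1}(H')$ — which has full $\mu$-measure, so $\mu|_{\mathbb{X}'}$ is again a probability measure with the same moment vector $c$ — lowers the affine dimension of the image to at most $d-1$, and the inductive hypothesis gives $c \in \mathrm{conv}(\Phi(\mathbb{X}')) \subseteq \mathrm{conv}(\Phi(\mathbb{X}))$.

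The one genuine obstacle is this second half: upgrading ``$c$ is in the closed convex hull'' to ``$c$ is in the convex hull.'' The mechanism that makes it work is that a supporting-hyperplane inequality for $\Phi(\mathbb{X})$, once integrated against the fixed measure $\mu$ whose average is exactly the supporting point $c$, must hold with equality $\mu$-almost everywhere; this is what confines the mass of $\mu$ to a lower-dimensional slice and lets the induction proceed. Everything else — Borel measurability of $\Phi$ and of $\mathbb{X}'$, the separation and Carathéodory theorems, and assembling the final convex combination — is routine bookkeeping. (This is, in effect, the classical Richter--Rogosinski argument for finitely supported representing measures, here applied to the $m$ test functions $f_1, \dots, f_m$.)
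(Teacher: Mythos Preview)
The paper does not prove this theorem; it is quoted verbatim as Theorem~B.11 from \cite{lasserre2010moments} and used as a black box in the proof of Theorem~\ref{thm:finite-face}. So there is no ``paper's own proof'' to compare against.

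That said, your argument is the standard and correct one (the Richter--Rogosinski approach you name at the end). The reduction to showing $c \in \mathrm{conv}(\Phi(\mathbb{X}))$ followed by Carath\'eodory is exactly right, and your induction on the affine dimension to pass from the closed convex hull to the convex hull is the key step and is handled correctly: the supporting-hyperplane inequality integrates to zero against $\mu$, forcing $\mu$ to concentrate on a lower-dimensional slice. One could phrase the same induction slightly differently (e.g., reducing $m$ rather than the affine dimension of the image), but the substance is identical. There is no gap.
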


\begin{extraproof}{Theorem~\ref{thm:finite-face}}
Theorem~\ref{thm:exist} establishes that there is a vector of probability
measures $\D=(D_t)_{t\in T}$ over $V = (1, 2) \cup \SET{-1}$,
satisfying the following for all $i$ and $t_i \in T_i$:
\begin{equation}\label{eq:integral}
 \int_{V} \sum_{\vc{t}\in T_{-i}}\prod_{j\ne i}
    \pmf{t_j} \idotsint_{V^{n-1}} w_{t_i}(s_1, \dots, s_n)
    dD_{t_1}(s_1)\cdots  dD_{t_n}(s_n)dD_{t_i}(s_i)
  = \interim{t_i}.
\end{equation}
where $w_{t_i}(s_1, \dots, s_n)$ equals $1$ if type $t_i$ is a winning type
with dice rolls $s_1, \dots, s_n$, and equals $0$ otherwise. 

For a fixed type $t^\ast\in T_i$ and an equation corresponding to $(j, t'_j)$,
we can change the order of integration to make $dD_{t^\ast}$
the outermost integral,
and isolate the terms that do not involve $dD_{t^\ast}$.
Specifically, we define
  \[
  q_{j, t'_j}(s_i) =
\sum_{\substack{\vc{t}\in T\\ t_j=t'_j , t_i=t^\ast}}\prod_{k\ne j}
    \pmf{t_k} \idotsint_{V^{n-1}} w_{t_j}(s_1, \dots, s_n)
    dD_{t_1}(s_1)\cdots  dD_{t_n}(s_n)
  \]
to be the inner integral over distributions of $t\ne t^\ast$,
as a function of $s_i$, and
\[
  c_{j, t'_j} =
  \begin{cases}
    0, \text{ for } t'_j=t^\ast\\
    \begin{aligned}
      \int_{V} \sum_{\vc{t}\in T_{-j}, t_i\ne t^\ast} \prod_{k\ne j}
      \pmf{t_k} \idotsint_{V^{n-1}} &w_{t_j}(s_1, \dots, s_n)\\
      &dD_{t_1}(s_1)\cdots  dD_{t_n}(s_n)dD_{t_j}(s_j), \text{ for }  t'_j\ne t^\ast
    \end{aligned}
  \end{cases}
\]
as the component of the integral that does not have $dD_{t^*}$ involved.
Thus, Equation~\eqref{eq:integral} can be rewritten as 
\camera{%
\[
  \int_{V} q_{j, t'_j}(s_i) d D_{t^\ast}(s_i) + c_{j, t'_j}
  = \interim{t'_j}.
\]}{%
$
  \int_{V} q_{j, t'_j}(s_i) d D_{t^\ast}(s_i) + c_{j, t'_j}
  = \interim{t'_j}.
$}
By Theorem~\ref{thm:lasserre},
$D_{t^\ast}$ can be changed to a measure $D'_{t^\ast}$ with support of
size $m+1$, such that for all $j$ and $t_j \in T_j$,
\camera{%
\[
  \int_V q_{j, t_j} (s_i) d D_{t^\ast}(s_i)
  = \int_V q_{j, t_j}(s_i)d D'_{t^\ast}(s_i).
\]}{%
$
  \int_V q_{j, t_j} (s_i) d D_{t^\ast}(s_i)
  = \int_V q_{j, t_j}(s_i)d D'_{t^\ast}(s_i).
$}
In other words, we can replace $D_{t^*}$ with $D'_{t^*}$,
which has at most $m+1$ faces. 
Applying the same procedure to each type $t^*$ in turn,
all dice can be replaced by dice with at most $m+1$ faces. 
\end{extraproof}


\section{Efficient Construction of Dice for
  Single-Winner Environments}
\label{sec:interim-to-dice}

In the preceding section, we proved the existence of winner-selecting
dice by a ``construction.''
However, the construction involves repeated appeals to the
Intermediate Value Theorem, and is thus inherently
non-computational.
It is certainly not clear how to implement it efficiently.  
In this section, we show that when the matroid $\M$ is $1$-uniform,
i.e., at most one winner can be selected,
winner-selecting dice can be computed efficiently.
We prove the following. 

\begin{theorem} \label{thm:dice-independent}
Consider a winner-selection environment with \NUMC candidates,
where each candidate $i$'s type is drawn independently from a prior
\PMF[i] supported on $T_i$, and at most one winner can be selected.
If \INTE is a feasible interim rule for
$\PMF = \PMF[1] \times \cdots\times \PMF[n]$,
an explicit representation of the associated dice can be
computed in time polynomial in \NUMC and $\NUMT = \sum_i |T_i|$. 
\end{theorem}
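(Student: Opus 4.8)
The plan is to construct the dice incrementally by processing the types in increasing order of their prescribed winning probabilities $\interim{t}$, leveraging the structural characterization of feasibility in Theorem~\ref{thm:prefix}. Concretely, I would sort all $\NUMT$ types so that $\interim{t_1} \le \interim{t_2} \le \cdots \le \interim{t_\NUMT}$, and think of building, for each candidate $i$, a single die $\DIE[i]$ whose faces, restricted to the types of $i$, are laid out on the real line in a way consistent with this global order: a type $t$ with a larger target probability should receive a stochastically larger face. The key quantity to track is, for a threshold level $\alpha$, the ``prefix'' type set $S_i(\alpha) = \Set{t \in T_i}{\interim{t} > \alpha}$ from Theorem~\ref{thm:prefix}; feasibility is equivalent to the single family of Border constraints indexed by the distinct values $\alpha$ take on, and there are only $O(\NUMT)$ such breakpoints. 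So the algorithm should only ever need to reason about $O(\NUMT)$ nested events per candidate, which is what keeps everything polynomial.

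The core of the construction is a ``peeling'' procedure. Starting from the top (largest $\interim{t}$), identify the maximal tight prefix set — the analogue of the barrier set from Section~\ref{sec:exist}, but now for the $1$-uniform matroid, where Border's inequality \eqref{eqn:border-constraint} reads $\sum_{i}\sum_{t \in S_i} \pmf{t}\interim{t} \le 1 - \prod_i (1 - \sum_{t \in S_i}\pmf{t})$. For the $1$-uniform case this is tractable in closed form: conditioned on the realized profile, exactly the highest-rolling candidate with a positive face wins, so the winning probability of a type $t$ on candidate $i$ with face interval $[\ell, u)$ is an explicit product-of-CDFs expression in the other candidates' dice. I would set up, for each threshold segment, the requirement that the ``mass'' each type wins equals $\pmf{t}\interim{t}$; because at most one winner is selected, the total mass allocated across all candidates in a given prefix is governed by a single scalar equation, and within a prefix the split among candidates is pinned down by how much ``higher'' on the real line each candidate's segment sits. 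Each such equation is monotone in the location parameter of the segment, so it has a unique solution that can be found in closed form (or by a constant number of arithmetic operations) rather than by an appeal to the Intermediate Value Theorem — this is exactly where the $1$-uniform structure buys us effectiveness over the general matroid case of Theorem~\ref{thm:exist}. Iterating the peeling from the top level down to $\alpha = 0$ processes every breakpoint once; types with $\interim{t} = 0$ get a face at $-\infty$ (never positive), and the slack function being nonnegative everywhere (guaranteed by feasibility, via Theorem~\ref{thm:prefix}, checkable in polynomial time by Theorem~\ref{thm:borderpolytime}) ensures every equation along the way is solvable with the segment parameters in the valid range.

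I expect the main obstacle to be bookkeeping the interaction between candidates cleanly: when we peel off a tight prefix, the faces assigned in that prefix sit above all faces assigned later, so later stages must treat the already-committed mass as a fixed ``handicap,'' and one must verify that the residual interim rule on the un-peeled types is still feasible for the residual environment (this is the $1$-uniform analogue of Lemma~\ref{lem:exist-lattice}, and should follow from the lattice structure of tight sets together with submodularity of the slack function). A secondary obstacle is bounding the face count and bit-complexity: each peeling step adds $O(1)$ faces per involved candidate, so each die ends up with $O(\NUMT)$ faces, and since the probabilities solving each monotone scalar equation are rational functions of the inputs of bounded degree, the encoding stays polynomial — but I would want to argue this carefully, perhaps by representing faces symbolically by their rank in the sorted order and only computing the associated probability masses, which are what \eqref{eqn:border-constraint} constrains. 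Finally, I would double-check the tie-breaking convention (ties broken uniformly at random, as specified in the dice-based rule) never actually triggers on a positive-probability event under the continuous-then-discretized construction, or else handle it by an infinitesimal perturbation of face locations.
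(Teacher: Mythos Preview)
Your high-level structure (peel from the top, maintain feasibility of the residual, use Theorem~\ref{thm:prefix} to bound the number of relevant constraints) is in the right spirit, but the core step is under-specified in a way that hides the real work. You assert that within a tight prefix ``the total mass allocated across all candidates \ldots\ is governed by a single scalar equation, and within a prefix the split among candidates is pinned down by how much `higher' on the real line each candidate's segment sits.'' This is circular: the heights of the segments are exactly what you are trying to determine, and a tight prefix may contain several types belonging to several candidates, each with its own target $\pmf{t}\interim{t}$. Tightness pins down the \emph{sum} of the winning masses in the prefix, but not how they split among types; matching each type's individual target is a system of coupled constraints, not one scalar equation, and there is no reason to expect a closed form (indeed, the continuous-dice proof in Section~\ref{sec:exist} for this very case already needs repeated Intermediate Value Theorem arguments). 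You also implicitly assume that the relevant tight sets are level sets of $\interim{\cdot}$; Theorem~\ref{thm:prefix} only says that \emph{some} tight set is a level set whenever a tight set exists, not that the minimal tight set (the barrier set) is one.

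The paper's algorithm resolves these issues by working one type at a time rather than one level at a time. It selects a single type $t^*$ from a barrier set, gives its die a new strictly highest face with the largest probability $q^*$ that keeps the residual $(f',\INTEP)$ feasible (the \Decrement subroutine just conditions on that face not appearing), and recurses. The structural lemma you are missing is Lemma~\ref{lem:minimal}: in the single-winner setting the barrier set is either unique or all barrier sets are singletons of one candidate. Together with Lemma~\ref{lem:decrement} (decrementing preserves slack/tightness of every constraint containing $t^*$), this yields a clean progress measure (either $|T^+|$ or $|T^*|$ strictly drops), giving the $\NUMT^2$ iteration bound. The computation of $q^*$ is where Theorem~\ref{thm:prefix} is actually used: it reduces the search for the binding constraint to $O(\NUMT^2)$ candidate level sets, each checkable in closed form. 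So the ``closed form'' you were hoping for does exist, but it lives in the choice of $q^*$ for a \emph{single} new face, not in solving for an entire segment at once.
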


We use the same notation \pmf{t} and \interim{t} as in
Section~\ref{sec:exist}. 
In the single-winner setting, the function
$R(S) = \Expect[\vc{t}\sim \PMF]{r_\M(\vc{t}\cap S)}$
can be treated as a natural extension of \PMF to subsets of $T$:
given  $S \sse T$, we let
$R(S) = \pmf{S} = 1- \prod_{i=1}^{\NUMC} (1- \sum_{t \in S_i} \pmf{t})$,
which is the probability that at least one type in $S$ shows up.
Border's constraints can then be written as follows:
$
  \sum_{t \in S} \pmf{t} \interim{t}
  \; \leq \; \pmf{S} \mbox{ for all } S \subseteq T.
$
The slack function becomes
$\slack_{f,\INTE}(S) = \pmf{S} - \sum_{t \in S} \pmf{t} \interim{t}$
and is nonnegative everywhere when $\INTE$ is feasible.

Tight sets and barrier sets are defined as the special case of the
definition for general matroids in  Section~\ref{sec:exist}. 
The algorithm \FindBarrierSet, given as Algorithm~\ref{alg:findbarrier},
simply implements the definition of barrier sets as minimal non-empty
tight sets, and therefore correctly computes a barrier set.

\begin{algorithm} 
  \DontPrintSemicolon
    $T^+ \gets \set{t \in T: f(t) \interim{t} > 0}$.\;
    Define $g(S) = f(S) - \sum_{t \in S} f(t) \interim{t}$ for $S \sse T^+$.\;
    \If{$\min\set{g(S) : \emptyset \subsetneq S \subseteq T^+} \neq 0$ \label{step:min1}}
    {\Return $T^+$.}
    \Else
    {
      $T^* \gets T^+$.\;
       \textbf{while} there is a type $t \in T^*$ such that $\min\set{g(S): \emptyset \subsetneq S \subseteq T^* \sm \set{t}} = 0$ \label{step:min2}
        \textbf{do} $T^* \gets T^* \sm \set{t}$.\;
      \Return $T^*$.\;}
\caption{$\FindBarrierSet (f,\INTE)$ \label{alg:findbarrier}}
\end{algorithm}

The following lemma characterizes the key useful structure of barrier sets for the single-winner setting. 

\begin{lemma}\label{lem:minimal}
Let $\PMF$ and \INTE be such that \INTE is feasible for $f$.
If there are multiple barrier sets for $(f,\INTE)$,
then there is a candidate $i^*$ such that
each barrier set is a singleton $\set{t}$ with $t \in T_{i^*}$.
\end{lemma}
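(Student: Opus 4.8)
\textbf{Proof plan for Lemma~\ref{lem:minimal}.}
The plan is to use the lattice structure of tight sets together with the product structure of $\pmf{\cdot}$ in the single-winner setting. Recall that the tight sets form a lattice under union and intersection, and barrier sets are the minimal non-empty tight sets of active types. Suppose $S$ and $S'$ are two distinct barrier sets. By minimality, neither contains the other, so $S \cap S'$ is a proper subset of both; since it is tight (lattice property) and contained in a barrier set, minimality forces $S \cap S' = \emptyset$. The heart of the argument is then to show that two \emph{disjoint} tight sets can coexist only in a very restricted way: each must consist of a single type, and both types must belong to the same candidate.

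The key computation uses the explicit form $\pmf{S} = 1 - \prod_{i=1}^{\NUMC}(1 - \pmf{S_i})$, writing $\pmf{S_i} = \sum_{t \in S_i}\pmf{t}$. First I would record the identity for disjoint $S, S'$: since the $S_i$ and $S'_i$ are disjoint within each candidate, one gets
\[
  \pmf{S \cup S'} \;=\; 1 - \prod_{i}(1 - \pmf{S_i} - \pmf{S'_i}),
\]
and I would compare $\pmf{S} + \pmf{S'} - \pmf{S \cup S'}$ with $\pmf{S}\cdot\pmf{S'}$ (or more precisely track the cross terms $\pmf{S_i}\pmf{S'_j}$ for $i \neq j$). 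Tightness of $S$, $S'$, and $S \cup S'$ gives $\sum_{t\in S}\pmf{t}\interim{t} = \pmf{S}$, similarly for $S'$, and $\sum_{t\in S\cup S'}\pmf{t}\interim{t} = \pmf{S\cup S'}$; adding the first two and subtracting the third (using disjointness of the supports on the left) yields $\pmf{S} + \pmf{S'} - \pmf{S\cup S'} = 0$. Substituting the product formula, this forces $\sum_{i \neq j}\pmf{S_i}\pmf{S'_j} = 0$ after clearing the common factors, i.e. every cross term vanishes. Hence there is a single candidate $i^*$ with $S, S' \subseteq T_{i^*}$ (all other $S_i, S'_i$ are empty). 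With both $S$ and $S'$ inside one candidate's type set, $\pmf{\cdot}$ restricted there is just additive, so $\pmf{S} = \pmf{S_{i^*}}$; tightness then reads $\sum_{t\in S}\pmf{t}\interim{t} = \sum_{t\in S}\pmf{t}$, forcing $\interim{t} = 1$ for every $t \in S$, and likewise for $S'$. But then each singleton $\set{t}$ with $t\in S$ is itself tight and non-empty, so by minimality $S = \set{t}$; the same holds for $S'$.

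Finally I would close the loop: any barrier set is either the unique set $T^+$ (when no non-empty tight set exists, in which case there is only one barrier set and nothing to prove), or a minimal non-empty tight set. If there are at least two of the latter, the argument above applied to any pair shows each is a singleton inside a common candidate $T_{i^*}$; a short transitivity remark (all pairs share the candidate, since two barrier sets inside $T_{i^*}$ and $T_{i^{**}}$ with $i^* \neq i^{**}$ would again have a vanishing cross term only if one is empty) pins down a single $i^*$ serving all of them. The main obstacle I anticipate is bookkeeping in the cross-term computation — making sure the factorized form is manipulated correctly so that the vanishing of $\pmf{S} + \pmf{S'} - \pmf{S\cup S'}$ really does imply $\pmf{S_i}\pmf{S'_j} = 0$ for all $i\neq j$ rather than some weaker cancellation; once that algebraic step is clean, the rest is immediate from tightness and minimality.
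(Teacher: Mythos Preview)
Your overall structure matches the paper's exactly: use the lattice property to get disjointness of two barrier sets, show they must lie inside a single candidate's type set, then observe that tightness within one $T_{i^*}$ forces $\interim{t}=1$ and hence (by minimality) singletons. The only substantive difference is how you argue the ``same candidate'' step.

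The paper does that step semantically, via the interpretation of tightness (their Remark preceding the lemma): a set $A$ is tight iff, whenever some type in $A$ shows up, the winner's type lies in $A$. If $A$ and $B$ contain types of different candidates, then with positive probability a type from $A$ and a type from $B$ show up simultaneously; the winner would then have to lie in $A\cap B=\emptyset$, a contradiction. That is the whole argument.

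Your algebraic route is correct in spirit, but the specific claim you plan to prove---that $\pmf{S}+\pmf{S'}-\pmf{S\cup S'}$ expands to $\sum_{i\neq j}\pmf{S_i}\pmf{S'_j}$---fails for $\NUMC\geq 3$: the expansion of $\prod_i(1-a_i-b_i)$ produces higher-order mixed terms as well, so the expression is \emph{not} simply that quadratic sum. The clean way through your own computation is to notice, by inclusion--exclusion, that
\[
\pmf{S}+\pmf{S'}-\pmf{S\cup S'}
\;=\;
\Pr[\text{some type in }S\text{ shows up and some type in }S'\text{ shows up}],
\]
a nonnegative quantity that vanishes iff there is no pair $i\neq j$ with $S_i\neq\emptyset$ and $S'_j\neq\emptyset$ (recall barrier sets consist of active types, so $\pmf{t}>0$ throughout). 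This immediately yields $S,S'\subseteq T_{i^*}$. But observe that this probabilistic reading \emph{is} the paper's argument; so the ``bookkeeping obstacle'' you anticipated is best resolved by abandoning the term-by-term expansion in favor of the two-line semantic proof.
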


\camera{In other words, either there is a unique barrier set, or all
barrier sets are singletons of types from a single candidate.}{}

\begin{proof}
Let $A, B$ be any two barrier sets.
Because $A$ and $B$ are both tight,
the lattice property of tight sets implies that $A \cap B = \emptyset$.

We first show that there is a candidate $i^*$
with $A \subseteq T_{i^*}$ and $B \subseteq T_{i^*}$.
Suppose not for contradiction; then, there exist $i \neq j$
and types $t_i \in A \cap T_i$ and $t_j \in B \cap T_j$.
With non-zero probability, the types $t_i$ and $t_j$ show up at the
same time.
However, according to the definition of a tight set,
when a type in $A$ shows up,
the winner must be a candidate with type in $A$,
and the same must hold for $B$.
Then, the winner's type would have to be in $A \intersect B$ with
non-zero probability.
This contradicts the disjointness of $A$ and $B$. 

It remains to show that all barrier sets are singletons.
Since $A$ is tight and all types in $A$ belong to the same candidate,
$\sum_{t \in A} f(t) \interim{t} = f(A) = \sum_{t \in A} f(t)$.
Hence, $\interim{t} = 1$ for all $t \in A$,
and because barrier sets are minimally tight, $A$ must be a singleton.
\end{proof}

\subsection{Description of the Algorithm}

Given a prior $f$ and a feasible interim rule \INTE,
the recursive procedure \ConstructDice,
shown in Algorithm~\ref{alg:dice-construction},
returns a family of dice \DIESET implementing \INTE for $f$.
It operates as follows.
There are two simple base cases:
when no candidate ever wins,
and when a single type of a single candidate always wins.
In the recursive case, the algorithm carefully selects a type
$t^*$ and awards its die the highest-valued face $M'$.
It assigns this new face a probability as large as possible,
subject to still permitting implementation of \INTE.
We choose $t^*$ as a member of a barrier set; this is important in
order to guarantee that the algorithm makes significant progress.

The subroutine \Decrement, shown as Algorithm~\ref{alg:decrement},
essentially conditions both $f$ and \INTE on the face $M'$ \emph{not}
winning.
Specifically, \Decrement computes the conditional type distribution $f'$,
and an interim rule \INTEP, such that if there were a dice
implementation of \INTEP for $f'$,
then adding $M'$ to the die of $t^*$ would yield a set of dice
implementing \INTE for $f$.

\begin{algorithm}[h]
\DontPrintSemicolon
\Input{PDFs $f_1, \ldots, f_{\NUMC}$ supported on disjoint type sets $T_1,\ldots,T_{\NUMC}$.}
\Input{Interim rule \INTE feasible for $f$.}
\Output{Vector of dice $(\DIE[t])_{ t \in \biguplus_{i=1}^{\NUMC} T_i}$.}
Let $T= \biguplus_i T_i$.\;
Let $T^+_i = \set{t \in T_i : f_i(t) \interim[i]{t} > 0}$,
and let $T^+ = \biguplus_{i=1}^{\NUMC} T^+_i$.\;
\If{$T^+=\emptyset$} {
  \textbf{for all} types $t \in T$, let \DIE[t] be a single-sided die with a $-1$ face.\;
}
\ElseIf{there is a type $t^* \in T^+$ with $f(t^*) \interim{t^*} = 1$}
{
Let \DIE[t^*] be a single-sided die with a $+1$ face.\;
\textbf{for all} other types $t \in T \sm \set{t^*}$,
let \DIE[t] be a single-sided die with a $-1$ face.\;
}
\Else{
  Let $T^* = \FindBarrierSet(f,\INTE)$.\; \label{step:T*}
  Let $t^* \in T^*$ be a type chosen arbitrarily.\; \label{step:t*}
  Let $(f',\INTEP)=\Decrement(f, \INTE, t^*, q^*)$,
  for the largest value of $q^* \in [0, f(t^*) \interim{t^*}]$
  such that \INTEP is feasible for $f'$. \label{step:dec}
  \tcc*[r]{Note that $f(t^*) \interim{t^*} < 1$.} 
  Let $(\DIEP[t])_{ t \in T} \gets \ConstructDice(f', \INTEP)$.\;\label{step:recurse}
  Let $M$ be the maximum possible face of any die \DIEP[t], and $M' := \max(M,0)+1$.\;
  Let $\DIE[t] = \DIEP[t]$ for all types $t \neq t^*$.\; \label{step:newdice1}
  Let \DIE[t^*] be the die which rolls $M'$ with probability $\frac{q^*}{f(t^*)}$,
  and \DIEP[t^*] with probability $1-\frac{q^*}{f(t^*)}$.} \label{step:newdice2} 
\Return $(\DIE[t])_{t \in T}$.
\caption{\ConstructDice($f,\INTE$) \label{alg:dice-construction}}
\end{algorithm}

\begin{algorithm}[h]
\DontPrintSemicolon
\tcc*[l]{$q \geq 0$ is the probability allocated to the highest face.
  Because it is a contribution to the \emph{unconditional} winning
  probability $f(t^*) \interim{t^*}$ of type $t^*$,
  and we separated out the case that a single type has unconditional
  winning probability 1, 
  $q$ satisfies $q \leq f(t^*) \interim{t^*} < 1$.}
\textbf{if} $q=f(t^*)$, \textbf{then}
  let $f'(t^*) \gets 0$
  and $\interimp{t^*} \gets 0$ \;
\textbf{else}
  let $f'(t^*) \gets \frac{f(t^*) - q}{1-q}$
  and $\interimp{t^*} \gets \frac{f(t^*) \interim{t^*} - q}{f(t^*) - q}$.\;
Let $i^*$ be such that $t^* \in T_{i^*}$.\;
\textbf{for all} $t \in T_{i^*}, t \ne t^*$,
let $f'(t) \gets\frac{f(t)}{1-q}$ and $\interimp{t} \gets \interim{t}$.\;
\textbf{for all} $t \in T \sm T_{i^*}$,
let $f'(t) \gets f(t)$ and $\interimp{t} \gets\frac{\interim{t}}{1-q}$.\;
\Return $(f',\INTEP).$
\caption{$\Decrement (f, \INTE, t^*, q)$ \label{alg:decrement}}
\end{algorithm}

We now provide a formal analysis of our algorithm.
Theorem~\ref{thm:dice-independent} follows from
Lemmas~\ref{lem:correctness}--\ref{lem:runtime}.
\begin{lemma}\label{lem:correctness}
  If \ConstructDice terminates, it outputs dice implementing
  \INTE for $f$. 
\end{lemma}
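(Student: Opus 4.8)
The plan is to prove correctness of \ConstructDice by induction on the recursion, following the structure of the algorithm's three cases. The induction will be on a natural measure of progress --- most naturally the number of types $t$ with $\pmf{t}\interim{t} > 0$, or more carefully, a lexicographic combination of this count with the number of distinct faces needed --- and the inductive hypothesis is exactly the statement of Lemma~\ref{lem:correctness}: any recursive call on a smaller instance $(f',\INTEP)$ with \INTEP feasible for $f'$ returns dice implementing \INTEP for $f'$.

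For the base cases, I would argue directly. If $T^+=\emptyset$, then every type has $\pmf{t}\interim{t}=0$, so every type must win with probability zero; assigning every die the single face $-1$ makes the greedy rule select no winner (all rolls are non-positive), so the interim rule $\INTE=\vc 0$ is implemented. If some type $t^*$ has $\pmf{t^*}\interim{t^*}=1$, then $\pmf{t^*}=1$ and $\interim{t^*}=1$, so candidate $\candidate(t^*)$ always has type $t^*$ and must always win; giving $t^*$ the face $+1$ and all other types the face $-1$ achieves this, and feasibility forces $\interim{t}=0$ for all other $t$ (by the Border constraint on $\SET{t,t^*}$ or directly), so these are implemented too.

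The recursive case is the substantive one. Here I would verify two things. First, that the call is legitimate: Step~\ref{step:dec} chooses the largest $q^* \in [0, \pmf{t^*}\interim{t^*}]$ with \INTEP feasible for $f'$, and I must confirm this set is nonempty --- $q^*=0$ always works since \Decrement with $q=0$ returns $(f,\INTE)$ essentially unchanged (one checks $f'=f$ and $\INTEP=\INTE$ when $q=0$), which is feasible by assumption --- and that the set is closed so the maximum is attained (feasibility is characterized by the finitely many prefix Border constraints of Theorem~\ref{thm:prefix}, each a closed condition in $q^*$ as the relevant quantities vary continuously). Also $\pmf{t^*}\interim{t^*}<1$ by the second base case, so the formulas in \Decrement are well-defined. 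Second, and this is the heart of the argument, I must show that if $(\DIEP[t])_t$ implements \INTEP for $f'$, then the modified dice $(\DIE[t])_t$ --- with $\DIE[t^*]$ rolling the fresh top face $M'$ with probability $q^*/\pmf{t^*}$ and $\DIEP[t^*]$ otherwise --- implement \INTE for $f$. The key observations are: (i) $M' = \max(M,0)+1$ strictly exceeds every face of every $\DIEP[t]$, so whenever $t^*$ rolls $M'$ it is the unique maximizer and wins (the $1$-uniform matroid has rank $1$); (ii) conditioned on $t^*$ \emph{not} rolling $M'$, the rolls of all dice have exactly the distribution induced by $(\DIEP[t])_t$, and the type distribution seen by the other candidates, after conditioning candidate $i^*$ on not being ``pre-selected,'' is precisely $f'$ --- this is exactly what the rescalings in \Decrement ($f'(t)=\pmf{t}/(1-q^*)$ for $t\in T_{i^*}\sm\SET{t^*}$, $f'(t^*)=(\pmf{t^*}-q^*)/(1-q^*)$, and $\interimp{t}=\interim{t}/(1-q^*)$ for $t\notin T_{i^*}$) are designed to encode. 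Then a direct conditioning computation: for $t\notin T_{i^*}$, $\interim[i]{t} = (1-q^*)\cdot\interimp{t}$ since $t$ can only win in the ``no $M'$'' branch, which occurs with probability $1-q^*$ over the randomness of $t^*$'s die; for $t\in T_{i^*}$, one splits according to whether $t=t^*$ (getting $\pmf{t^*}\interim{t^*} = q^* + (\pmf{t^*}-q^*)\interimp{t^*}$, which rearranges to the \Decrement formula) or $t\ne t^*$ (where the win probability is inherited from the conditional instance). Matching each case against the \Decrement definitions confirms $\ProbC{i\in\ALGO(\TP)}{\type{i}=t}=\interim[i]{t}$ everywhere.

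The main obstacle I anticipate is the bookkeeping in this conditioning step: carefully tracking how the event ``$t^*$ rolls $M'$'' correlates with the realized type of candidate $i^*$ (it should not, since the die is rolled after the type is fixed, but the \emph{prior} $f'$ must absorb the fact that we are in the complement of a $q^*$-probability slice of the $t^*$ outcomes), and making sure the normalization by $1-q^*$ is threaded consistently through all three groups of types in \Decrement. A secondary subtlety, which I would flag but which properly belongs to the termination/progress lemma rather than correctness, is that termination of \ConstructDice is \emph{assumed} in this lemma's statement --- so I need only prove the conditional ``if it terminates, it is correct,'' and can defer to Lemma~\ref{lem:runtime} (and the role of the barrier set $T^*$ in Steps~\ref{step:T*}--\ref{step:t*}) the fact that it does terminate.
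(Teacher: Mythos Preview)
Your proposal is correct and follows essentially the same approach as the paper: induction on the recursion, with the two base cases handled directly and the inductive step established by conditioning on whether the fresh top face $M'$ is rolled, then matching the resulting expressions against the formulas in \Decrement. The paper's write-up is slightly more uniform in that it works throughout with the \emph{unconditional} winning probabilities $f(t)\interim{t}$ and a single event $\mathcal{E}=\{t_{i^*}=t^*\text{ and }\bar v_{t^*}=M'\}$ (of probability exactly $q^*$), whereas you mix conditional probabilities $\interim[i]{t}$ for $t\notin T_{i^*}$ with unconditional ones for $t^*$; this is only a cosmetic difference, and your extra remarks on the well-definedness of $q^*$ are a reasonable addition that the paper leaves implicit.
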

\begin{lemma}\label{lem:iterations}
\ConstructDice terminates after at most $\NUMT^2$ recursive calls.
(Recall that $\NUMT = \sum_i |T_i|$.)
\end{lemma}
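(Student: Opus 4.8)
The plan is to track a potential function that decreases by at least one with every recursive call, and is bounded above by $\NUMT^2$. The natural quantity to watch is $\sum_{i} |T_i^+| \cdot (\text{something})$, but a cleaner choice is to bound the total number of recursive calls by observing that between consecutive calls either (a) a type leaves $T^+$ permanently, or (b) a barrier-set structure is "used up." Concretely, I would argue that within any maximal run of consecutive recursive calls in which no type leaves $T^+$, the chosen $t^*$ keeps coming from barrier sets, and each such call forces some barrier constraint to become tight and stay tight, shrinking the lattice of tight sets. Since $|T^+| \le \NUMT$ and each type can leave $T^+$ only once, case (a) happens at most $\NUMT$ times; the work is to show case (b) can only happen $\NUMT$ times between two occurrences of case (a), giving the $\NUMT^2$ bound.

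The key steps, in order: First I would analyze what \Decrement does to the slack function. Because $q^*$ is chosen maximally subject to feasibility of $(\INTEP, f')$, after the call there is some set $S$ that is tight for $(f', \INTEP, \M)$ that was not forced before — i.e., the maximality of $q^*$ is "witnessed" by a newly tight Border constraint. I would make this precise: either $q^* = f(t^*)\interim{t^*}$ (in which case $t^*$ now has $\interimp{t^*}=0$, so $t^*$ leaves $T^+$ — this is case (a)), or $q^* < f(t^*)\interim{t^*}$ and some set $S \ni t^*$ becomes tight. Second, I would show that in the latter case the chosen barrier-set type $t^*$ being a member of a minimal tight set, together with $S$ becoming tight, means the family of active types $T^+$ together with the tight-set lattice strictly "simplifies": for instance, the rank $r_\M(T^+)$ or the number of active types not contained in any proper tight set strictly decreases, or a barrier set gets absorbed. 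Third, combining: across the whole recursion, case (a) fires $\le \NUMT$ times, and between any two firings of case (a) the monovariant from the second step strictly decreases and is bounded by $\NUMT$, so case (b) fires $\le \NUMT$ times per "epoch." Hence the total is at most $\NUMT + \NUMT \cdot \NUMT = O(\NUMT^2)$, and a slightly careful accounting gives exactly $\NUMT^2$.

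The main obstacle I anticipate is the second step: pinning down precisely which monovariant strictly decreases when $q^*$ is not large enough to remove $t^*$ from $T^+$. The subtlety is that \Decrement rescales $f$ and \INTE (dividing by $1-q$), so tight sets of $(f,\INTE)$ do not map transparently to tight sets of $(f', \INTEP)$; I would need a lemma relating the slack function $\slack_{f',\INTEP}$ to $\slack_{f,\INTE}$ — presumably $\slack_{f',\INTEP}(S)$ is a positive rescaling of $\slack_{f,\INTE}(S)$ for sets $S$ avoiding $t^*$, with a correction term for sets containing $t^*$ — and then argue that the newly binding constraint from the maximal choice of $q^*$ adds a type of candidate $i^*$ to every barrier set or collapses the barrier structure in a way that cannot be undone without first removing a type from $T^+$. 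Lemma~\ref{lem:minimal} (all barrier sets are either unique or singletons from one candidate) should do most of the heavy lifting here: in the "singletons" case, picking $t^* \in T^*$ and pushing $q^*$ maximally will set $\interimp{t^*}$ so that either $t^*$ exits $T^+$ or a strictly larger tight set forms, and there are at most $\NUMT$ singleton barrier sets to exhaust; in the "unique barrier set" case, decrementing its chosen $t^*$ either shrinks it or merges it upward in the lattice, which can happen at most $\NUMT$ times before a type exits $T^+$.
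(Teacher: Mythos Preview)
Your overall skeleton---case (a) removes a type from $T^+$, case (b) happens at most $\NUMT$ times between two occurrences of case (a), yielding $\NUMT^2$---is exactly the paper's structure. But you have the key technical relation backwards, and this is where your second step stalls.

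You write that $\slack_{f',\INTEP}(S)$ should be a positive rescaling of $\slack_{f,\INTE}(S)$ for sets $S$ \emph{avoiding} $t^*$, with a correction for sets \emph{containing} $t^*$. It is the other way around: a direct calculation (the paper's Lemma~\ref{lem:decrement}) gives $\slack_{f',\INTEP}(S) = \slack_{f,\INTE}(S)/(1-q)$ precisely when $t^* \in S$, while for $S \not\ni t^*$ the slack picks up a subtractive term $-q\, f(S \setminus S_{i^*})$ before rescaling. Consequently, your claim ``some set $S \ni t^*$ becomes tight'' is wrong: sets containing $t^*$ have their tightness/slackness status \emph{preserved} by \Decrement, so the newly tight set witnessing maximality of $q^*$ must \emph{exclude} $t^*$.

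Once you reverse this, the monovariant you are searching for is simply $|T^*|$. When $|T^*|>1$, Lemma~\ref{lem:minimal} says $T^*$ is the unique barrier set; since $T^*$ (containing $t^*$) stays tight after \Decrement, and the newly tight set $S'$ excludes $t^*$, the next barrier set is contained in $T^* \cap S' \subsetneq T^*$. This also shows $|T^*|$ cannot increase. When $|T^*|=1$, tightness of $\{t^*\}$ forces $\interim{t^*}=1$, and one checks (using the slack relation for $S \not\ni t^*$, which reduces to $\slack_{f,\INTE}(S \cup \{t^*\})/(1-f(t^*))$ at $q=f(t^*)$) that $q^*=f(t^*)$ is feasible---so $t^*$ always leaves $T^+$ here, rather than ``either exits $T^+$ or a strictly larger tight set forms'' as you suggest.
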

\begin{lemma}\label{lem:runtime}
  Excluding the recursive call, each invocation of \ConstructDice can
  be implemented in time polynomial in \NUMC and \NUMT.
\end{lemma}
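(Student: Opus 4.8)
The plan is to walk through a single invocation of \ConstructDice (Algorithm~\ref{alg:dice-construction}) step by step and verify that each piece other than the recursive call takes polynomial time. Most steps are immediate: computing $T^+$ and the two base-case die assignments is linear-time bookkeeping, and, given the dice $(\DIEP[t])_{t\in T}$ returned by the recursive call, assembling the output dice in Steps~\ref{step:newdice1}--\ref{step:newdice2} only requires scanning the $O(\NUMT)$ faces of those dice to find $M$. So the work reduces to two places: the call $\FindBarrierSet(f,\INTE)$ in Step~\ref{step:T*}, and the search in Step~\ref{step:dec} for the largest $q^*\in[0,f(t^*)\interim{t^*}]$ such that $\INTEP$ is feasible for $f'$. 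I would handle both by reducing them to polynomially many invocations of the polynomial-time feasibility test of Theorem~\ref{thm:borderpolytime}; equivalently, both reduce to submodular function minimization, since the slack $g(S)=f(S)-\sum_{t\in S}f(t)\interim{t}$ is the submodular coverage-type function $f(\cdot)$ minus a linear function, hence submodular.

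For $\FindBarrierSet$ (Algorithm~\ref{alg:findbarrier}), the only non-trivial primitive is, for a given $U\sse T^+$, deciding whether $\min\set{g(S):\emptyset\subsetneq S\sse U}=0$, i.e., whether $U$ contains a nonempty tight set; $\FindBarrierSet$ issues $O(\NUMT^2)$ such queries --- Step~\ref{step:min2} performs $O(\NUMT)$ removals, each found by scanning $O(\NUMT)$ candidate types, plus the one query in Step~\ref{step:min1}. Each query is a submodular minimization over nonempty subsets of $U$, equivalently $|U|$ unconstrained submodular minimizations (one per forced element), hence polynomial-time by standard submodular-minimization algorithms. If one wishes to rely only on the feasibility test of Theorem~\ref{thm:borderpolytime}, each query can instead be answered with $O(\NUMT)$ feasibility calls: let $\INTEH$ agree with $\INTE$ on $U$ and equal $0$ outside $U$; this is feasible for $f$ (being dominated by $\INTE$), and monotonicity of $f(\cdot)$ shows that any tight set of $\INTEH$ intersected with $U$ is again a nonempty tight set inside $U$, so $U$ has a nonempty tight set iff $\INTEH$ does. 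Finally, $\INTEH$ has a nonempty tight set iff some $t\in U$ already satisfies $\interim{t}=1$ (a trivial case), or some $t\in U$ with $\interim{t}<1$ has the property that raising $\interimh{t}$ by any positive amount destroys feasibility --- and, since positive Border slacks of a bounded-bit-complexity instance are bounded below by an inverse polynomial, the last property is decided by a single feasibility call at a fixed inverse-polynomial perturbation.

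For Step~\ref{step:dec}, the key observation is that, after multiplying through by $(1-q)$, the Border slack of $(f',\INTEP)$ is an \emph{affine} function of $q$ for every fixed type set $S$: from the update rules of $\Decrement$ (Algorithm~\ref{alg:decrement}) one computes $\sum_{t\in S}f'(t)\interimp{t}=\frac{1}{1-q}\bigl(\sum_{t\in S}f(t)\interim{t}-\mathbf{1}[t^*\in S]\,q\bigr)$ and $1-f'(S\cap T_{i^*})=\frac{1}{1-q}\,L_S(q)$ with $L_S$ affine (while $1-f'(S\cap T_i)$ is constant for $i\neq i^*$), so that $(1-q)\,\slack_{f',\INTEP}(S)=(1-q)-L_S(q)\prod_{i\neq i^*}\bigl(1-f(S\cap T_i)\bigr)-\sum_{t\in S}f(t)\interim{t}+\mathbf{1}[t^*\in S]\,q$, which is affine in $q$. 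Hence $q\mapsto\min_S(1-q)\,\slack_{f',\INTEP}(S)$ is concave, and it is nonnegative at $q=0$ (where $(f',\INTEP)=(f,\INTE)$, which is feasible), so the set of feasible $q$ is an interval $[0,q_{\max}]$; in particular the ``largest $q^*$'' the algorithm requires is well defined and equals $q_{\max}$, with $q_{\max}<1$ in the recursive case since $f(t^*)\interim{t^*}<1$. To compute $q_{\max}$ exactly and efficiently I would invoke the prefix characterization (Theorem~\ref{thm:prefix}): as $q$ ranges over $[0,1)$, the only entries of $\INTEP$ that can change their relative order are $\interimp{t}=\interim{t}/(1-q)$ for $t\notin T_{i^*}$ and $\interimp{t^*}$, so there are only $O(\NUMT^2)$ order-change breakpoints, between which the $O(\NUMT)$ threshold sets $S(\alpha)$ are fixed; on each of the resulting $O(\NUMT^2)$ subintervals every relevant Border constraint is a single affine inequality in $q$, so one reads off the feasible portion and takes $q_{\max}$ to be the largest endpoint that arises. (Alternatively, one can binary-search $q_{\max}$ with the feasibility oracle until the binding threshold set stabilizes, then solve a single affine equation exactly.)

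The step I expect to be the main obstacle is Step~\ref{step:dec}: the algorithm needs the \emph{exact} largest feasible $q^*$ as part of the explicit description of the dice, while the very sets certifying feasibility shift as $q$ varies. The affine-in-$q$ reduction, together with the bound of only polynomially many order changes, is what makes an exact polynomial-time computation possible; the remaining pieces --- the constant-work accounting and the $\FindBarrierSet$ reduction --- are routine.
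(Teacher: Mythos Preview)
Your proposal is correct and uses the same two ingredients as the paper: submodular minimization with a forced element for \FindBarrierSet, and the affine-in-$q$ form of $(1-q)\,\slack$ combined with the level-set characterization (Theorem~\ref{thm:prefix}) for Step~\ref{step:dec}. The paper's treatment of Step~\ref{step:dec} is organized a little more directly than your interval decomposition: it first invokes Lemma~\ref{lem:decrement} to discard all sets $S\ni t^*$ outright, then shows that for $S\not\ni t^*$ one has $(1-q)\,\slack_{f_q,\INTE_q}(S)=\slack_{f,\INTE}(S)-q\,f(S\setminus S_{i^*})$, yielding the explicit threshold $h(S)=\slack_{f,\INTE}(S)/f(S\setminus S_{i^*})$; since the relative order of $x_q$ within $T_{i^*}\setminus\{t^*\}$ and within $T\setminus T_{i^*}$ is each independent of $q$, every possible level set not containing $t^*$ is one of at most $\NUMT^2$ sets of the form ``prefix in one group $\cup$ prefix in the other,'' and $q^*$ is obtained by minimizing $h$ over those --- no partition of $[0,f(t^*)\interim{t^*}]$ by order-change breakpoints is needed. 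One minor caution about your \emph{alternative} route for \FindBarrierSet via perturbation: positive slacks of a polynomial-bit instance are in general only bounded below by an inverse \emph{exponential}, not an inverse polynomial, so that part would need a bit-complexity argument; your primary submodular-minimization route (which is exactly the paper's) avoids this.
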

\subsection{Proof of Lemma~\ref{lem:correctness} (Correctness)}

We prove the lemma by induction over the algorithm's calls.
Correctness is obvious for the two base cases:
when $T^+ = \emptyset$ (no type should win),
and when there exists a type $t^*$ with $f(t^*) \interim{t^*}=1$
($t^*$ always shows up and should always win).
For the inductive step, suppose that the recursive call in
step~\ref{step:recurse} returns dice
$\DIESETP = (\DIEP[t])_{ t \in T}$, correctly implementing \INTEP
for $f'$,
and let $\DIESET = (\DIE[t])_{ t \in T}$ be the new dice defined in
steps \ref{step:newdice1} and~\ref{step:newdice2}. 

We analyze the interim winning probability of each type when using the
dice-based winner selection rule given by \DIESET.
For each type $t$, let $\bar{v}_t \sim \DIE[t]$ be a roll of the die for
type $t$, and for each $i$, let $t_i \sim f_i$ be a draw of a type;
all $\bar{v}_t$ and $t_i$ are mutually independent.
In other words, we may assume that the die of \emph{every} type is
rolled (including types that do not show up),
then the type profile is drawn independently.
The winning type is then the type $t_i$ with largest positive
$\bar{v}_{t_i}$;
if all $\bar{v}_{t_i}$ are negative, then no type wins.

Let $t^* \in T_{i^*}$ be as defined in step~\ref{step:t*}.
Let $\E$ be the event that $i^*$ has type $t^*$ and that $\bar{v}_{t^*} = M'$,
and let $\bar{\E}$ be its complement.
By independence of the random choices, the probability of $\E$ is
$f(t^*) \cdot \frac{q^*}{f(t^*)} = q^*$.
Type $t^*$ always wins under the event $\E$.
Conditioned on $\bar{\E}$, each $\bar{v}_t$ (including $\bar{v}_{t^*}$)
is distributed as a draw from \DIEP[t],
the type vector \TP is distributed as a draw from
$f'_1 \cross \cdots \cross f'_{\NUMC}$,
and the $\bar{v}_t$'s and \TP are mutually independent.
By the inductive hypothesis, conditioned on $\bar{\E}$,
each type $t$ wins with probability $f'(t) \interimp{t}$.
Using the definition of $f'$ and \INTEP from the \Decrement subroutine,
the total winning probability for $t^*$ is
$q^* \cdot 1 + (1-q^*) \cdot f'(t^*) \interimp{t^*}
= q^* + (1-q^*) \cdot \frac{f(t^*) \interim{t^*} - q^*}{1-q^*}
= f(t^*) \interim{t^*}$.
For $t \neq t^*$, the total winning probability is
$q^* \cdot 0  + (1-q^*) f'(t) \interimp{t}
= (1-q^*) \cdot \frac{f(t) \interim{t}}{1-q^*}
= f(t) \interim{t}$.
Therefore, the interim winning probability for each type $t$ is \interim{t},
and the dice \DIESET implement \INTE for $f$.

\subsection{Proof of Lemma~\ref{lem:iterations} (Number of Recursive Calls)}

The following lemma is essential in that it shows that invoking
\Decrement maintains feasibility and tightness of sets.

\begin{lemma} \label{lem:decrement}
Let $f$, \INTE, $t^*$ and $q$ be valid inputs for \Decrement,
and $f'$, \INTEP the output of the call to $\Decrement (f, \INTE, t^*, q)$.
Let $S$ be any set of types with $t^* \in S$. Then,
\begin{enumerate}
\item The Border constraint for $S$ is satisfied for $(f', \INTEP)$
  if and only if it is satisfied for $(f, \INTE)$.
\item The Border constraint for $S$ is tight for $(f', \INTEP)$
  if and only if it is tight for $(f,\INTE)$. 
\end{enumerate}
\end{lemma}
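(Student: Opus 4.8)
The plan is to prove a quantitative strengthening of the lemma, namely that the single-winner slack function scales by a fixed positive factor under \Decrement: for every type set $S$ with $t^* \in S$,
\[
  \slack_{f',\INTEP}(S) \;=\; \frac{1}{1-q}\,\slack_{f,\INTE}(S),
\]
where, as recorded at the start of Section~\ref{sec:interim-to-dice}, $\slack_{f,\INTE}(S) = \pmf{S} - \sum_{t\in S}\pmf{t}\interim{t}$ and $\pmf{S} = 1 - \prod_{i=1}^{\NUMC}\bigl(1 - \sum_{t\in S_i}\pmf{t}\bigr)$ is the single-winner instantiation of the expected rank $R(S)$. Given this identity, both parts of the lemma are immediate: the comment inside \Decrement guarantees $0 \le q \le \pmf{t^*}\interim{t^*} < 1$, so $\tfrac{1}{1-q} > 0$, whence $\slack_{f',\INTEP}(S)$ has the same sign as $\slack_{f,\INTE}(S)$ (Part 1, the Border constraint for $S$ holds for one pair iff for the other) and vanishes for one pair iff for the other (Part 2, tightness).

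To establish the identity I would compute the ``mass'' term $\sum_{t\in S}\pmfp{t}\interimp{t}$ and the ``rank'' term $\pmfp{S}$ separately, partitioning $S$ into $\set{t^*}$, $S_{i^*}\setminus\set{t^*}$ (where $i^*$ is the candidate owning $t^*$), and $S\setminus T_{i^*}$, and plugging in the definitions from Algorithm~\ref{alg:decrement}. For the mass term, every $t\in S_{i^*}\setminus\set{t^*}$ and every $t\in S\setminus T_{i^*}$ contributes $\tfrac{1}{1-q}\pmf{t}\interim{t}$ (the prior weight is scaled down in the former group, the interim weight in the latter), while $t^*$ contributes $\pmfp{t^*}\interimp{t^*} = \tfrac{\pmf{t^*}\interim{t^*}-q}{1-q}$; summing gives $\sum_{t\in S}\pmfp{t}\interimp{t} = \tfrac{1}{1-q}\bigl(\sum_{t\in S}\pmf{t}\interim{t} - q\bigr)$. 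The degenerate branch $q=\pmf{t^*}$ forces $\interim{t^*}=1$, hence $\pmf{t^*}\interim{t^*}-q=0$, which is consistent with $\pmfp{t^*}=\interimp{t^*}=0$, so no separate treatment is needed.

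For the rank term, the only factor of $\prod_i\bigl(1-\sum_{t\in S_i}\pmfp{t}\bigr)$ that changes is the one for $i^*$: since $t^*\in S_{i^*}$, the total $f'$-mass on $S_{i^*}$ equals $\tfrac{\sum_{t\in S_{i^*}}\pmf{t}-q}{1-q}$, so $1-\sum_{t\in S_{i^*}}\pmfp{t} = \tfrac{1-\sum_{t\in S_{i^*}}\pmf{t}}{1-q}$, whereas for $i\ne i^*$ the factor is unchanged. Hence $\prod_i\bigl(1-\sum_{t\in S_i}\pmfp{t}\bigr) = \tfrac{1}{1-q}\prod_i\bigl(1-\sum_{t\in S_i}\pmf{t}\bigr) = \tfrac{1-\pmf{S}}{1-q}$, giving $\pmfp{S} = 1-\tfrac{1-\pmf{S}}{1-q} = \tfrac{\pmf{S}-q}{1-q}$. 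Subtracting the mass term from the rank term, the two copies of $\tfrac{q}{1-q}$ cancel and the identity drops out.

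The whole argument is a routine calculation with no genuine obstacle; the one point worth emphasizing is that the hypothesis $t^*\in S$ is used exactly once but essentially — in the rank-term step, where the mass $q$ removed from $\pmf{t^*}$ must cancel against its appearance inside $\sum_{t\in S_{i^*}}\pmf{t}$. If $t^*\notin S$ were allowed, $S_{i^*}$ would instead be rescaled uniformly by $\tfrac{1}{1-q}$ with nothing to cancel, the identity would fail, and indeed the lemma itself would fail (a Border constraint for a set avoiding $t^*$ is not in general preserved by \Decrement).
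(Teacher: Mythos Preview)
Your proposal is correct and follows essentially the same approach as the paper: both prove the scaling identity $\slack_{f',\INTEP}(S)=\tfrac{1}{1-q}\,\slack_{f,\INTE}(S)$ for $S\ni t^*$ and deduce both parts from positivity of $\tfrac{1}{1-q}$. The paper carries out the computation in a single chain of equalities rather than separating the mass and rank terms, but the underlying calculation is identical; your additional remarks on the degenerate case $q=\pmf{t^*}$ and on where the hypothesis $t^*\in S$ is used are accurate and not present in the paper's terser version.
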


\begin{noqedproof}
We will show that the slack for every $S \ni t^*$ satisfies
$\slack_{f',\INTEP}(S) = \frac{\slack_{f,\INTE}(S)}{1-q}$,
which implies both claims.
Let $i^*$ be such that $t^* \in T_{i^*}$.
Using the definitions of $f'$ and \INTEP,
\begin{align*}
\slack_{f',\INTEP}(S)
  & = f'(S) - \sum_{t\in S} f'(t) \interimp{t} \\
  & = 1 - \left( 1 - \frac{f(S_{i^*})-q}{1-q} \right)
             \cdot \prod_{i \neq i^*} (1-f(S_i))
       - \frac{(\sum_{t\in S}f(t)  \interim{t}) - q}{1-q} \\
  & = \frac{1}{1-q} \cdot \left( 1- (1-f(S_{i^*}))
             \cdot \prod_{i \neq i^*} (1-f(S_i))
       - \sum_{t\in S} f(t) \interim{t} \right)\;=\; \frac{\slack_{f,\INTE}(S)}{1-q}.\QED
\end{align*}
\end{noqedproof}

We are now ready to prove Lemma~\ref{lem:iterations}.
We will show that with each recursive invocation of \ConstructDice
(step \ref{step:recurse}), at least one of the following happens:
(1) The number of active types $|T^+|$ decreases;
(2) The size of the barrier set $|T^*|$ decreases.

Notice that the number of active types or the size of a barrier set
never \emph{increase}.
Because the size of the barrier set can only decrease at most \NUMT
times, the number of active types must decrease at least every \NUMT
recursive invocations.
It, too, can decrease at most \NUMT times, implying the claim of the lemma.

Let $T^*$, $t^*$, and $(q^*, f', \INTEP)$ be as chosen in
steps~\ref{step:T*}, \ref{step:t*}, and \ref{step:dec}, respectively.
Let candidate $i^*$ be such that $t^* \in T_{i^*}$.
If $q^*= f(t^*) \interim{t^*}$,
then the type $t^*$ will be inactive in $(f',\INTEP)$,
and there will be one fewer active type in the subsequent invocation
of \ConstructDice (step \ref{step:recurse}).
We distinguish the cases $|T^*|=1$ and $|T^*| > 1$.

If $|T^*|=1$, then $T^*= \set{t^*}$.
By definition of a barrier set, $T^*$ is tight,
implying (for a singleton set) that $\interim{t^*}=1$.
We claim that $q^*$ is set to $f(t^*) = f(t^*) \interim{t^*}$ in
step~\ref{step:dec},
implying that the number of active types decreases.
To prove that $q^* = f(t^*)$, we will show that this choice of $q^*$
is feasible in the invocation of $\Decrement(f, \INTE, t^*, f(t^*))$.
Consider the $\hat{f}, \INTEH$ resulting from such an
invocation of \Decrement.
Lemma~\ref{lem:decrement} implies that the feasibility of each Border
constraint corresponding to a set $S \ni t^*$ is preserved for
$(\hat{f}, \INTEH)$.
For type sets $S$ excluding $t^*$,
\begin{align*}
\slack_{\hat{f}, \INTEH}(S)
  & = \hspace{-5pt}\hat{f}(S) - \sum_{t\in S} \hat{f}(t) \interimh{t} \;=\; 1 - \left(1-\frac{f(S_{i^*})}{1-f(t^*)} \right)
          \cdot \prod_{i \neq i^*} (1-f(S_i))
        - \frac{\sum_{t\in S}f(t) \interim{t} }{1-f(t^*)} \\
  & =  \begin{aligned}[t]
    \frac{1}{1-f(t^*)} \cdot \Biggl( 1 &- (1-f(S_{i^*}) + f(t^*)) \cdot
      \prod_{i \neq i^*} (1-f(S_i))  \\
         &-  \left( \sum_{t\in S}f(t) \interim{t} + f(t^*)\interim{t^*} \right) \Biggr) 
    \end{aligned}\\
  & = \frac{\slack_{f,\INTE}(S\cup\SET{t^*})}{1-f(t^*)},
\end{align*}
which is nonnegative because \INTE is feasible for $f$.
Therefore, step~\ref{step:dec} indeed chooses $q^* = f(t^*)$.

Next, we consider the case $|T^*| > 1$,
and assume that $q^* < f(t^*) \interim{t^*}$
(since otherwise, we are done).
Then, the set of active types $T^+$ is the same for both
$(f,\INTE)$ and $(f',\INTEP)$.

If the instance $(f', \INTEP)$ for the recursive call has multiple
barrier sets, then by Lemma~\ref{lem:minimal},
they are all singletons, and indeed the size of the barrier set in the
next recursive call (which is 1) is strictly smaller than $|T^*|$.
So we assume that $(f', \INTEP)$ has a unique barrier set $T'$.

Because $|T^*| > 1$, Lemma~\ref{lem:minimal} implies that
$T^*$ is the unique barrier set for $(f,\INTE)$.
Therefore, by the definition of barrier sets,
$T^*$ (and hence also $t^*$) is contained in every tight set of active
types for $(f,\INTE)$ (if any).
Because $t^*$ is contained in all tight sets, Lemma~\ref{lem:decrement}
implies that for every $q \in [0, f(t^*) \interim{t^*}]$,
the result of $\Decrement(f,\INTE,t^*,q)$
does not violate any constraints which are already tight for
$(f,\INTE)$, and in fact preserves their tightness.

Because all other constraints have slack,
the optimal $q^*$ is strictly positive.
By assumption, we also have that  $q^* < f(t^*) \interim{t^*}$;
therefore, a non-empty set $S'$ which was not tight for $(f,\INTE)$
must have become tight for
$(f',\INTEP) = \Decrement (f,\INTE,t^*,q^*)$.
By Lemma~\ref{lem:decrement}, this set $S'$ does not include $t^*$.
Because discarding inactive types preserves tightness,  
we may assume without loss of generality that $S' \subseteq T^+$.


We have shown the existence of a non-empty tight set
$S' \subseteq T^+ \sm \set{t^*}$ for $(f',\INTEP)$.
By definition, the barrier set $T'$ is the (unique, in our case)
minimal tight set for $(f',\INTEP)$, so $T' \subseteq S'$. 
We distinguish two cases, based on the possible definitions of
barrier sets:

\begin{itemize}
\item If $T^* = T^+$, then because $S' \subsetneq T^+ = T^*$,
  the barrier set $T'$ is strictly smaller than $T^*$.
\item If $T^*$ is tight for $(f,\INTE)$, then it is also tight for
  $(f',\INTEP)$ by Lemma~\ref{lem:decrement};
  since both $S'$ and $T^*$ are tight for $(f',\INTEP)$,
  we get that $T' \subseteq T^* \cap S' \subseteq T^* \setminus \set{t^*}$
  is strictly smaller than $T^*$.
\end{itemize}

\subsection{Proof of Lemma~\ref{lem:runtime} (Runtime per Call)}

There are only two steps for which polynomial runtime is not
immediate:
the computation of $q^*$ in step \eqref{step:dec} of \ConstructDice,
and finding a non-empty minimizer of a submodular function in
steps \eqref{step:min1} and \eqref{step:min2} of \FindBarrierSet. 
We prove polynomial-time implementability of both steps in the
following lemmas.

\begin{lemma}
In step \ref{step:dec} of \ConstructDice,
$q^*$ can be computed in $\poly(\NUMT)$ time.
\end{lemma}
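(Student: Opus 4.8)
The plan is to characterize $q^*$ as the solution to a one-dimensional optimization problem and show it can be found by testing polynomially many candidate values. Recall from Lemma~\ref{lem:decrement} that for $S \ni t^*$, we have $\slack_{f',\INTEP}(S) = \slack_{f,\INTE}(S)/(1-q)$, so feasibility of those constraints is automatic as long as $q \le f(t^*)\interim{t^*} < 1$. Thus the only constraints that can fail are the Border constraints for sets $S$ with $t^* \notin S$. First I would write out $\slack_{f',\INTEP}(S)$ for such $S$ explicitly in terms of $q$, using the \Decrement formulas: for $t \in T_{i^*}\setminus\set{t^*}$ we have $f'(t) = f(t)/(1-q)$, for $t \notin T_{i^*}$ we have $f'(t) = f(t)$ but $\interimp{t} = \interim{t}/(1-q)$, and $f'(S_{i^*}) = f(S_{i^*})/(1-q)$ (since $t^* \notin S$). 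Substituting, $\slack_{f',\INTEP}(S)$ becomes
\[
1 - \left(1 - \frac{f(S_{i^*})}{1-q}\right)\prod_{i\neq i^*}(1-f(S_i)) - \frac{1}{1-q}\sum_{t\in S} f(t)\interim{t}.
\]
This is a function of the single variable $q$, and one checks it is concave (or at least that its sign behavior is simple) in $q$ on $[0,1)$: writing $P = \prod_{i\neq i^*}(1-f(S_i))$, $a = f(S_{i^*})$, $b = \sum_{t\in S}f(t)\interim{t}$, the expression is $1 - P + \frac{1}{1-q}(aP - b)$, which is monotone in $q$ (increasing if $aP > b$, decreasing if $aP < b$). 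Hence for each fixed $S$, the set of feasible $q$ is an interval, and the upper endpoint $q_S$ where $\slack_{f',\INTEP}(S) = 0$ has a closed-form expression: $q_S = 1 - \frac{b - aP}{1 - P}$ when $aP < b$ (and $q_S$ is unconstrained by $S$ otherwise).

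The difficulty is that there are exponentially many sets $S$, so I cannot simply enumerate. The plan here mirrors the use of Theorem~\ref{thm:prefix} (the prefix-form of Border's constraints): it suffices to check Border constraints on sets of the form $S(\alpha) = \set{t \in T : \interimp{t} > \alpha}$ as $\alpha$ ranges over $[0,1]$, and moreover only the ``threshold'' values of $\alpha$ matter, i.e.\ those equal to some $\interimp{t}$. Since $\INTEP$ depends on $q$, I would argue that as $q$ varies, the relative order of the values $\interimp{t}$ changes only at finitely many breakpoints — specifically, $\interimp{t^*} = \frac{f(t^*)\interim{t^*} - q}{f(t^*) - q}$ is monotone in $q$ while $\interimp{t} = \interim{t}/(1-q)$ for $t \notin T_{i^*}$ all scale by the same factor and $\interimp{t} = \interim{t}$ for $t \in T_{i^*}\setminus\set{t^*}$ stay fixed; so the collection of relevant prefix sets $S(\alpha)$ is drawn from a polynomial-size family that can be enumerated by sorting the $\interim{t}$ values and tracking the finitely many crossings involving $t^*$. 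For each such candidate prefix set $S$ (with $t^* \notin S$), compute the threshold $q_S$ in closed form; then $q^* = \min_S q_S$ clipped to $[0, f(t^*)\interim{t^*}]$. Finally I would verify that this $q^*$ is indeed feasible (all prefix constraints hold, using Theorem~\ref{thm:prefix} applied to $(f',\INTEP)$) and maximal (any larger $q$ violates the binding $S$), which establishes correctness; the whole computation is clearly $\poly(\NUMT)$.

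The main obstacle I anticipate is rigorously controlling how the prefix sets $S(\alpha)$ evolve with $q$: one must ensure that checking the constraints at the $O(\NUMT)$ breakpoint values of $q$ (where some $\interimp{t}$ crosses another, or crosses $0$) is genuinely sufficient, rather than needing to track the constraint sets continuously. The clean way to handle this is to observe that for any \emph{fixed} prefix set $S$ not containing $t^*$, the feasibility region in $q$ is an interval as computed above, and that the union over the (polynomially many) distinct prefix sets that ever arise covers all the constraints we need; taking the intersection of these intervals and intersecting with $[0, f(t^*)\interim{t^*}]$ gives a single interval $[0, q^*]$ whose right endpoint is computable in closed form. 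Everything else — evaluating $f(S_i)$, products over candidates, sums $\sum_{t\in S} f(t)\interim{t}$ — is elementary arithmetic on the explicitly represented input and runs in time polynomial in $\NUMC$ and $\NUMT$.
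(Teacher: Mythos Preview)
Your approach is essentially the same as the paper's: reduce to constraints with $t^*\notin S$, derive the closed-form threshold $q_S$ (the paper writes it as $h(S)=\slack_{f,\INTE}(S)/f(S\setminus S_{i^*})$, which equals your $q_S$), and then use Theorem~\ref{thm:prefix} to restrict attention to level sets of $\INTEP$.

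One point to tighten: the crossings that matter are \emph{not} those ``involving $t^*$'' --- since we only need sets $S\not\ni t^*$, the position of $\interimp{t^*}$ is irrelevant. The relevant changes as $q$ varies are between the types in $T_{i^*}\setminus\{t^*\}$ (whose $\interimp{t}=\interim{t}$ is fixed) and the types in $T\setminus T_{i^*}$ (whose $\interimp{t}=\interim{t}/(1-q)$ scale together). The paper bypasses any breakpoint analysis by observing directly that every level set of every $\INTE_q$ that excludes $t^*$ must be the union of a prefix of $T_{i^*}\setminus\{t^*\}$ (sorted by $\interim{\cdot}$) with a prefix of $T\setminus T_{i^*}$ (sorted by $\interim{\cdot}$); there are at most $\NUMT^2$ such sets, not $O(\NUMT)$, and one simply minimizes $h(S)$ over all of them. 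With that correction your argument goes through.
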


\begin{proof}
Let $f$, \INTE, and $t^*$ be as in step \ref{step:dec},
and let the candidate $i^*$ be such that $t^* \in T_{i^*}$.
For each $q \in [0,f(t^*) \interim{t^*}] \sse [0,1)$,
let $(f_q, \INTE_q) = \Decrement(f, \INTE, t^*, q)$
be the result of running $\Decrement(f, \INTE, t^*, q)$ with parameter $q$.
Lemma~\ref{lem:decrement} implies that all Border constraints
for $S \ni t^*$ remain feasible for $(f_q, \INTE_q)$.
For type sets $S \not\ni t^*$, we can write the slack in the
corresponding Border constraint as a function of $q$ as follows: 
\begin{align*}
\slack_{f_q,\INTE_q}(S)
  & = f_q(S) - \sum_{t\in S} f_q(t) x_q(t) \\
  & = 1 - \left(1-\frac{f(S_{i^*})}{1-q} \right)
    \cdot \prod_{i \neq i^*} \left(1-f(S_i) \right)
    - \frac{\sum_{t\in S}f(t) \interim{t} }{1-q} \\
  & = \frac{1}{1-q} \cdot \left( 1- (1 - f(S_{i^*}) - q)
    \cdot \prod_{i \neq {i^*}} (1-f(S_{i}))
    - \sum_{t\in S}f(t)\interim{t} - q \right)\\
  & = \frac{1}{1-q} \left( \slack_{f,\INTE}(S) - q f(S \sm S_{i^*})\right).
\end{align*}
The preceding expression is nonnegative if and only if
$q \leq h(S) := \frac{\slack_{f,\INTE}(S)}{f(S \sm S_{i^*})}$.
Therefore, $q^*$ is the minimum of $f(t^*) \interim{t^*}$ and
$\min_{ S \sse T \sm \set{t^*}} h(S)$.
The function $h(S)$ does not appear to be submodular,
and hence efficient minimization is not immediate.
 We utilize Theorem~\ref{thm:prefix} to reduce the search
space and compute $q^*$ efficiently. 

For an interim rule $\INTE: T \to [0,1]$,
we call $S \sse T$ a \emph{level set} of \INTE if there exists
an $\alpha \in [0,1]$ such that $S = \set{ t \in T : \interim{t} > \alpha}$.
If $q^*= \min_{S \not\ni t^*} h(S)$,
then at least one Border constraint just becomes tight at $q=q^*$.
Theorem~\ref{thm:prefix} implies that at least one level set of
$\INTE_{q^*}$ corresponds to one of these newly tightened constraints,
and $h$ is minimized by such a level set.
It follows that, in order to compute $q^*$,
it suffices to minimize $h$ over all those sets $S \not\ni t^*$
which could possibly arise as level sets of some
$\INTE_q$ for $q \in [0, f(t^*) \interim{t^*}]$. 

Let $t_1, \ldots, t_K$ be the types in $T_{i^*} \sm \set{t^*}$,
ordered by non-increasing \interim{t};
similarly, let $t'_1, \ldots, t'_L$ be the types in $T \sm T_{i^*}$,
ordered by non-increasing \interim{t}.
The relative order of types in $T_{i^*} \sm \set{t^*}$ is the same
under $x_q(t)$ as under \interim{t}, because $x_q(t) = \interim{t}$;
similarly, the relative order of types in $T \sm T_{i^*}$ is the same
under $x_q(t)$ as under \interim{t}, because $x_q(t) = \frac{\interim{t}}{1-q}$.
Therefore, the family
$\left\{\set{t_1, \ldots, t_k, t'_1, \ldots, t'_\ell} : k \leq K, \ell \leq L\right\}$
includes all level sets of every $\INTE_q$ excluding $t^*$.
There are at most $\NUMT^2$ type sets in this family,
and those sets can be enumerated efficiently to minimize $h$.
\end{proof}

\begin{lemma}
There is an algorithm for computing a non-empty minimizer of a
submodular function in the value oracle model,
with runtime polynomial in the size of the ground set. 
\end{lemma}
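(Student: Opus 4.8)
<br>

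The plan is to reduce the problem to \emph{unconstrained} submodular function minimization (SFM), which is known to be solvable in time polynomial in the size of the ground set in the value oracle model (by the ellipsoid-based method of Gr\"otschel, Lov\'asz and Schrijver, or by the combinatorial algorithms of Schrijver and of Iwata, Fleischer and Fujishige). The only gap between that classical result and the statement we need is that the SFM minimizer may be empty, whereas \FindBarrierSet requires a minimizer among \emph{non-empty} sets (and the corresponding minimum value). We close this gap by a standard contraction argument: we separately search for the best non-empty minimizer that contains each fixed element, and take the best of these.

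Concretely, let $g : 2^E \to \RR$ be the submodular function, presented by a value oracle. For each element $e \in E$, I would define $g_e : 2^{E \sm \set{e}} \to \RR$ by $g_e(S) = g(S \union \set{e})$. This $g_e$ is again submodular: for $S, S' \sse E \sm \set{e}$, applying submodularity of $g$ to the sets $S \union \set{e}$ and $S' \union \set{e}$ gives
\[ g_e(S) + g_e(S') = g(S \union \set{e}) + g(S' \union \set{e}) \ge g(S \union S' \union \set{e}) + g((S \intersect S') \union \set{e}) = g_e(S \union S') + g_e(S \intersect S'). \]
Moreover a value oracle for $g_e$ is implemented by a single call to the oracle for $g$. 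I would then run a polynomial-time SFM algorithm on each $g_e$ to obtain $S_e \in \argmin_{S \sse E \sm \set{e}} g_e(S)$; then $S_e \union \set{e}$ minimizes $g$ over all subsets containing $e$, and is in particular non-empty. Finally, since every non-empty subset of $E$ contains some element, the set $S_{e^\ast} \union \set{e^\ast}$ for $e^\ast \in \argmin_{e \in E} g(S_e \union \set{e})$ is a non-empty minimizer of $g$, and $g(S_{e^\ast} \union \set{e^\ast})$ is the value $\min\set{g(S) : \emptyset \subsetneq S \sse E}$ that \FindBarrierSet needs.

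For the running time, this performs $\card{E}$ invocations of the SFM subroutine, each polynomial in $\card{E}$ (and in the bit-length of the oracle outputs), together with $\card{E}$ extra oracle calls and comparisons, so the total is polynomial in $\card{E}$. I do not expect a genuine obstacle here: the entire novelty is outsourced to the existence of polynomial-time unconstrained SFM, and the reduction above — checking containment of each element via contraction and recombining — is elementary. The one point worth stating carefully is the submodularity of the contracted function $g_e$ and the fact that its oracle is efficiently simulated, which is exactly the short computation displayed above.
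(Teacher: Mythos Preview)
Your proposal is correct and is essentially identical to the paper's own proof: the paper also defines $g_t(S) = g(S \cup \set{t})$, notes that each $g_t$ is submodular and hence minimizable in polynomial time, and returns $S_{t^*} \cup \set{t^*}$ for $t^* \in \argmin_t g_t(S_t)$. Your write-up is somewhat more detailed (explicitly verifying submodularity of the contraction and accounting for the oracle simulation), but the argument is the same.
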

\begin{proof}
For a submodular function $g: 2^T \to \mathbb{R}$,
let $g_t(S) = g (S\cup \SET{t})$, for $t\in T$.
$g_t$ is also submodular,
and can be minimized in time polynomial in $|T|$ \citep{GLSbook}.
Let $S_t$ be a minimizer of $g_t$ and  $t^* \in \arg\min_{t\in T}
g_t(S_t)$. $S_{t^*} \cup \SET{t^*}$ is a non-empty minimizer of $g$. 
\end{proof}


\camera{\section{Symmetric Dice For I.I.D. Candidates in Single-Winner Settings}
\label{sec:iid}
Even when the candidates are identical and the interim rule is symmetric,
Algorithm~\ref{alg:dice-construction} typically produces different
dice for different candidates.
In this section, we design an algorithm specifically for the case of
$\NUMC \geq 2$ i.i.d.~candidates that guarantees symmetry across
candidates in the dice-based implementation. 
\begin{theorem} \label{thm:dice-iid}
Consider a winner-selection environment with \NUMC candidates,
where each candidate has types $T$, each candidate's type is drawn independently from the prior
\PMF on $T$, and at most one candidate is selected as winner.
If $\INTE: T \to [0,1]$ is a feasible symmetric interim rule,
then it admits a symmetric dice-based implementation.
An explicit representation of the associated dice can be
computed in time polynomial in \NUMC and $\NUMT=|T|$. 
\end{theorem}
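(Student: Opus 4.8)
The plan is to build a symmetric analogue \ConstructSymmetricDice of the recursion \ConstructDice from Section~\ref{sec:interim-to-dice}: at each step it maintains a \emph{common} prior \PMF on $T$ and a \emph{symmetric} feasible interim rule \INTE, and it outputs a single die shared by all $\NUMC$ candidates. Two simplifications are special to the i.i.d.\ case with $\NUMC\ge2$. First, the symmetric Border constraint $1-(1-\pmf{t})^{\NUMC}-\NUMC\,\pmf{t}\interim{t}\ge0$ together with the strict form of Bernoulli's inequality forces $\interim{t}<1$ for every type, so the only base case is $T^+=\emptyset$ (assign every type the single-sided die $-1$); there is no ``always-wins'' base case. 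Second, as explained below, the barrier set is always \emph{unique}. The recursive step then mirrors \ConstructDice: compute the barrier set $T^*$ (the symmetric slack function is submodular, so this is efficient), pick any $t^*\in T^*$, give the common die $\DIE[t^*]$ a new top face $M'$ reached with a probability $p^*\in[0,1]$ chosen as large as possible subject to the decremented instance remaining feasible, recurse, and prepend $M'$ to the die returned for $t^*$.

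The one genuinely new ingredient is a symmetric version of \Decrement, call it \DecrementSymmetricDice, which conditions on the event that \emph{no} candidate realizes the pair $(t^*,M')$. Since this event is a product over the independent candidates, the candidates remain i.i.d.\ under the conditioning, with updated common prior $\pmfp{t^*}=\frac{\pmf{t^*}(1-p^*)}{1-\pmf{t^*}p^*}$ and $\pmfp{t}=\frac{\pmf{t}}{1-\pmf{t^*}p^*}$ for $t\ne t^*$. Writing $q^*=1-(1-\pmf{t^*}p^*)^{\NUMC}$ for the probability that $M'$ wins at all, a standard tie-breaking identity for a $\mathrm{Binomial}(\NUMC-1,\pmf{t^*}p^*)$ count of competing top rolls shows that a fixed candidate wins through $M'$ with probability exactly $q^*/\NUMC$; this determines the updated interim rule via $\pmfp{t^*}\interimp{t^*}=\frac{\pmf{t^*}\interim{t^*}-q^*/\NUMC}{1-q^*}$ and $\interimp{t}=\frac{\interim{t}}{(1-\pmf{t^*}p^*)^{\NUMC-1}}$ for $t\ne t^*$. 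Correctness then follows by the induction of Lemma~\ref{lem:correctness}: conditioning on whether $M'$ wins splits each type's winning probability into the $q^*/\NUMC$ term (present only for $t^*$) plus the inductively correct contribution of the recursively constructed dice.

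For the termination bound I would replace Lemma~\ref{lem:minimal} by the statement that, for $\NUMC\ge2$, there is a unique barrier set. Indeed, two disjoint nonempty tight sets $A,B$ of active types would, using tightness of $A$, $B$, and $A\cup B$ (the lattice property of tight sets), force $(1-\pmf{A})^{\NUMC}+(1-\pmf{B})^{\NUMC}-(1-\pmf{A}-\pmf{B})^{\NUMC}=1$, which is impossible when $\pmf{A},\pmf{B}>0$ because the left-hand side is strictly decreasing in each argument and equals $1$ only on the boundary. As in Section~\ref{sec:interim-to-dice}, uniqueness implies that every nonempty tight set contains $t^*$, and the direct computation $\slack_{f',\INTEP}(S)=\slack_{f,\INTE}(S)/(1-q^*)$ for all $S\ni t^*$ --- the symmetric counterpart of Lemma~\ref{lem:decrement} --- shows \DecrementSymmetricDice preserves feasibility and tightness of every such $S$. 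The case analysis of Lemma~\ref{lem:iterations} then carries over: each recursive call either removes an active type (when $p^*=1$, or when $t^*$ is driven inactive) or strictly shrinks the barrier set (when a new tight set, necessarily avoiding $t^*$, appears), giving $O(\NUMT^2)$ recursive calls.

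The step I expect to be the main obstacle is the per-call running time, i.e., computing $p^*$. Because every level set containing $t^*$ stays feasible, $p^*$ is governed by the $O(\NUMT)$ level sets $S$ of \INTE restricted to $T\sm\set{t^*}$, together with the threshold at which $t^*$ becomes inactive. For such an $S$, the condition $\slack_{f',\INTEP}(S)\ge0$ reads $u^{\NUMC}-(u-\pmf{S})^{\NUMC}\le\NUMC\sum_{t\in S}\pmf{t}\interim{t}$ in the variable $u=1-\pmf{t^*}p\in[1-\pmf{t^*},1]$ --- a degree-$(\NUMC-1)$ polynomial inequality, in contrast to the simple rational bound $h(S)$ that sufficed in the asymmetric single-winner case. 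Since $u\mapsto u^{\NUMC}-(u-\pmf{S})^{\NUMC}$ is monotone, each resulting threshold is well defined and can be isolated to polynomially many bits by bisection; taking the minimum over all $S$ and the inactivity threshold yields $p^*$, and hence the full dice description, in time polynomial in $\NUMC$ and $\NUMT$.
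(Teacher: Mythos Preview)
Your proposal is correct and follows essentially the same route as the paper: a symmetric analogue of \ConstructDice with the same \DecrementSymmetricDice update, barrier-set uniqueness, the slack identity $\slack_{f',\INTEP}(S)=\slack_{f,\INTE}(S)/(1-q^*)$ for $S\ni t^*$, and the $O(\NUMT^2)$ termination bound via level sets. Three small deviations worth flagging: (i) your Bernoulli-inequality observation that $\interim{t}<1$ for all $t$ lets you drop the paper's $|T^+|=1$ base case, which the paper instead handles with an explicit two-sided die; (ii) your barrier-uniqueness argument is algebraic, whereas the paper reuses the probabilistic ``two candidates simultaneously realize types in $A$ and $B$'' argument of Lemma~\ref{lem:minimal}; and (iii) your threshold inequality has the wrong sign---feasibility of $S\not\ni t^*$ is $u^{\NUMC}-(u-\pmf{S})^{\NUMC}\ge \NUMC\sum_{t\in S}\pmf{t}\interim{t}$, not $\le$, though your monotonicity/bisection conclusion is unaffected.
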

Algorithm~\ref{alg:dice-construction-iid},
which is similar to Algorithm~\ref{alg:dice-construction},
recursively constructs a dice-based implementation for the
symmetric interim rule \INTE.
\begin{algorithm}[h]
\DontPrintSemicolon
\Input{Number of candidates $\NUMC\geq 2$.}
\Input{PDF $f$ supported on $T$.}
\Input{Symmetric interim rule $\INTE: T \to [0,1]$ feasible for $f^\NUMC$.}
\Output{Set of dice $(\DIE[t])_{ t \in T}$.}
Let $T^+\gets \Set{t}{f(t)\interim{t}> 0}$ be the set of active types.\;
\If{$|T^+|\le 1$}{
   \textbf{for all} $t \in T\backslash T^+$, let \DIE[t] be a
   single-sided die with a $-1$ face.\;
   \textbf{for all} $t \in T^+$ (if any), let \DIE[t] be a
   die with two sides \SET{-1, 1} and probability 
   $\frac{1-\sqrt[\NUMC]{1 - \NUMC f(t) \interim{t}}}{f(t)}$
   of coming up 1.

}
\Else{ 
Let $T^* = \FindBarrierSetIID(\NUMC,f,\INTE)$ \label{findbarrieriid}.\;
Let $t^* \in T^*$ be arbitrary.\;
Let $(f',\INTEP) \gets \DecrementSymmetricDice(\NUMC,f,\INTE,t^*, q^*)$,
for the largest value $q^*\in [0, f(t^*)]$
such that \INTEP is feasible for $f'$. \label{decsymm}\;
$(\DIEP[t])_{t\in T} \gets \ConstructSymmetricDice (\NUMC,\PMFP,\INTEP)$.\;
Let $M$ be the maximum possible face of any \DIEP[t], and $M':=\max(M,0)+1$.\;
Let $\DIE[t] = \DIEP[t]$ for $t\ne t^*$.\;
Let \DIE[t^*] be the die which rolls $M'$ with probability $\frac{q^*}{f(t^*)}$, and rolls \DIEP[t^*] with probability $1-\frac{q^*}{f(t^*)}$.\;}
\Return{$(\DIE[t])_{ t\in T}$}.\;
\caption{$\ConstructSymmetricDice(\NUMC,\PMF,\INTE)$ \label{alg:dice-construction-iid}}
\end{algorithm}
\begin{algorithm}[h]
\DontPrintSemicolon
\tcc*[l]{As in the algorithm \Decrement, $q$ is the probability
  assigned to the highest face.
  Unlike in the algorithm \Decrement, due to symmetry, for any one
  candidate, the probability of winning thanks to this highest face is only
 $(1-(1-q)^\NUMC)/\NUMC $.}
\textbf{if} $\pmf{t^*} = q$, \textbf{then} let $\pmfp{t^*} \gets 0, \interimp{t^*} \gets 0$.\;
\textbf{else} let $\pmfp{t^*} \gets \frac{\pmf{t^*}-q}{1-q}$ and
$\interimp{t^*} \gets \frac{f(t^*)\interim{t^*} -
  (1-(1-q)^\NUMC)/\NUMC}{(f(t^*)-q)(1-q)^{\NUMC-1}}.$
\tcc*[r]{$q < 1$ because $|T^+| \geq 2$.}
\textbf{for all} $t \ne t^*$ \textbf{do}
let $\pmfp{t} \gets \frac{\pmf{t}}{1-q}$
and $\interimp{t} \gets \frac{\interim{t}}{(1-q)^{\NUMC-1}}$.\;
\Return{$(f',\INTEP)$}.\;
\caption{$\DecrementSymmetricDice (\NUMC, \PMF, \INTE, t^*, q)$}
\end{algorithm}

Interim rule feasibility is characterized by the symmetric Border
constraints (Inequality~\eqref{eqn:symmetric-border}).
We let $\slack_{\NUMC,f,\INTE}(S)$ denote the slack in the
symmetric Border constraint for $S \sse T$.
Similarly to Section~\ref{sec:interim-to-dice},
when \INTE is feasible for $(\NUMC,f)$,
we call a set $S \sse T$ \emph{tight} for $(\NUMC,f,\INTE)$
if $\slack_{\NUMC,f,\INTE}(S) = 0$.
Also similar to Section~\ref{sec:interim-to-dice},
we let $T^+= \set{t \in T: f(t) \interim{t} > 0}$ be the set of
\emph{active types},
and we define the \emph{barrier sets} to be the minimal non-empty
tight sets of active types if any non-empty tight sets exist;
otherwise, $T^+$ is the unique barrier set.
The algorithm \FindBarrierSetIID is essentially
identical to \FindBarrierSet from Section~\ref{sec:interim-to-dice};
hence, we omit it here.
The following lemma shows that the barrier set is unique in the
i.i.d.~setting, and that therefore, every tight set contains it.
\begin{lemma}\label{lem:barrierunique}
In the i.i.d.~setting, there is a unique barrier set.
\end{lemma}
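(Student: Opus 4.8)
The plan is to mimic the argument used for Lemma~\ref{lem:minimal} in the non-i.i.d.\ single-winner setting, but to exploit the extra symmetry to rule out even the case of multiple singleton barrier sets. Let $A$ and $B$ be any two barrier sets; since both are tight and the tight sets form a lattice (the family of minimizers of the submodular slack function $\slack_{\NUMC,f,\INTE}$), their intersection $A\cap B$ is also tight. By minimality of $A$ and $B$, either $A=B$ or $A\cap B=\emptyset$, so it suffices to rule out the existence of two disjoint barrier sets.

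First I would observe that, because the setting is i.i.d.\ and \INTE\ is symmetric, there is no distinction between candidates: every candidate has the common type set $T$, and the symmetric Border constraint \eqref{eqn:symmetric-border} for a set $S$ says $\NUMC\sum_{t\in S}\pmf{t}\interim{t}\le 1-(1-\pmf{S})^{\NUMC}$, where I write $\pmf{S}=\sum_{t\in S}\pmf{t}$. Tightness of $S$ has the same probabilistic meaning as before (cf.\ Remark~\ref{remark:tight}): whenever some candidate draws a type in $S$, a candidate with a type in $S$ must win. Now suppose $A$ and $B$ are disjoint tight sets. With positive probability one candidate draws a type in $A$ and a \emph{different} candidate draws a type in $B$ (here I use $\NUMC\ge 2$, which is part of the hypothesis of Theorem~\ref{thm:dice-iid}, and that $A,B$ are nonempty, hence contain active types showing up with positive probability). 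Tightness of $A$ forces the winner's type to lie in $A$, and tightness of $B$ forces it to lie in $B$, so the winner's type lies in $A\cap B=\emptyset$, a contradiction. Hence any two barrier sets coincide.

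The one remaining subtlety, which is where the i.i.d.\ case genuinely differs from Lemma~\ref{lem:minimal}, is the possibility that both $A$ and $B$ are singletons $\{t_a\}$ and $\{t_b\}$ with $t_a\neq t_b$ \emph{both active types of the common type set}. The argument above already covers this: with $\NUMC\ge 2$ candidates drawing i.i.d.\ from $f$, with positive probability one candidate draws $t_a$ and another draws $t_b$, and the tightness of each singleton forces the winner to have that singleton's type, again impossible. (Unlike the non-i.i.d.\ case, we cannot have $t_a$ and $t_b$ ``belong to the same candidate'': there is no candidate ownership of types here, so two distinct tight singletons simply cannot coexist.) I would also note, as in the proof of Lemma~\ref{lem:minimal}, that a tight singleton $\{t\}$ forces $\interim{t}=1$ (plugging $S=\{t\}$ into \eqref{eqn:symmetric-border} and using $1-(1-\pmf{t})^{\NUMC}\le \NUMC\pmf{t}$ with equality only in degenerate cases, or more directly from the probabilistic interpretation), which is a useful sanity check but not needed for the conclusion.

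The main obstacle is essentially bookkeeping rather than depth: one must be careful that the "barrier set" is defined as a minimal nonempty tight set \emph{of active types}, so that nonemptiness of $A$ and $B$ really does guarantee that their elements show up with positive probability, and one must correctly handle the fallback case in which no nonempty tight set of active types exists — but in that case $T^+$ is by definition the \emph{unique} barrier set, so there is nothing to prove. Thus the lemma reduces to the disjoint-tight-sets contradiction above, and the proof is short.
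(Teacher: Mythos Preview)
The proposal is correct and follows essentially the same argument as the paper: assume two distinct (hence disjoint, by minimality and the lattice property) barrier sets $A$ and $B$, use $\NUMC\geq 2$ to find an event where different candidates draw types in $A$ and $B$, and invoke tightness to force the winner's type into $A\cap B=\emptyset$. Your additional remarks about the singleton case and the fallback definition of the barrier set are more explicit than the paper's terse proof, but the core idea is identical.
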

\begin{proof}
The uniqueness of the barrier set can be shown by a proof very similar
to the proof of Lemma~\ref{lem:minimal}.
Assume for contradiction that there are two barrier sets $A$ and $B$.
Minimality implies that $A$ and $B$ are disjoint.
Since there are at least 2 candidates, with positive probability,
at least one candidate has a type in $A$ and at least one candidate
has a type in $B$.
In this event, tightness implies that the winner's type must lie in
both $A$ and $B$, contradicting their disjointness.
\end{proof}
We now sketch the proof of Theorem~\ref{thm:dice-iid},
which is very similar to that of Theorem~\ref{thm:dice-independent}.
\begin{extraproof}{Theorem~\ref{thm:dice-iid}}
First, we show correctness.
Assuming that the algorithm terminates,
we show that it outputs a set of dice implementing the symmetric rule
\INTE for $f^\NUMC$.

In the base case $|T^+| \leq 1$, our choice of dice for types 
$t \notin T^+$ ensures that they are never chosen as winner.
If there is a (unique) type $t \in T^+$, then the probability of its
die rolling 1 is
$\frac{1 - \sqrt[\NUMC]{1 - \NUMC f(t) \interim{t}}}{f(t)}$.
The probability that a particular candidate has type $t$ and rolls 1
is therefore $1 - \sqrt[\NUMC]{1 - \NUMC f(t) \interim{t}}$,
and the probability that \emph{at least one} candidate has type $t$
and rolls 1 is $\NUMC f(t) \interim{t}$.
Thus, the probability that \emph{some} candidate has type $t$
and wins is $\NUMC f(t) \interim{t}$;
by symmetry and the uniform tie breaking rule, the probability that
\emph{a specific} candidate has type $t$ and wins is 
$f(t) \interim{t}$.

For the inductive step, consider a recursive call to \ConstructSymmetricDice.
Assume by induction hypothesis that the dice collection
$(\DIEP[t])_{ t \in T}$ implements \INTEP for $(f')^\NUMC$,
and consider the winner selection rule implemented by the dice
$(\DIE[t])_{ t \in T}$ for the joint type distribution $f^\NUMC$.

The probability of at least one candidate rolling the highest face $M'$
is $1 - (1-q^*)^\NUMC$. 
Conditioned on no candidate rolling the highest face $M'$,
the conditional type distribution of each candidate is $f'$,
the conditional distribution of type $t$'s die is \DIEP[t],
and all these distributions are independent.
The probability of a specific candidate winning with type
$t \neq t^*$ is $(1-q^*)^\NUMC f'(t) \interimp{t} = f(t) \interim{t}$.

Next consider the winning probability for type $t^*$.
There are two ways in which a candidate with type $t^*$ could win:
by rolling the highest face $M'$ and winning the uniformly random tie
breaking if necessary,
or by having the highest face when no candidate rolls the face $M'$.
Similar to the argument in the base case,
the probability that \emph{some} candidate wins by rolling $M'$
is $(1-(1-q^*)^\NUMC)$, so the probability that a \emph{specific}
candidate wins by rolling $M'$ is $\frac{1}{\NUMC}(1-(1-q^*)^\NUMC)$.
The probability of winning without rolling the highest face is
$(1-q^*)^\NUMC f'(t^*) \interimp{t^*}$.
In total, the probability of a specific candidate winning with type $t^*$ is
$\frac{1}{\NUMC}(1-(1-q^*)^\NUMC) + (1-q^*)^\NUMC f'(t^*) \interimp{t^*}
= f(t^*) \interim{t^*}$,
so for all types $t$,
the interim winning probability of each type $t$ is \interim{t}, as claimed.

Next, we bound the number of iterations by $\NUMT^2$.
As before, we show that with each recursive call,
either the size of the unique barrier set decreases,
or the number of active types decreases.
The following equation captures the key part of the analogue of
Lemma~\ref{lem:decrement}, 
showing that invoking $\DecrementSymmetricDice(\NUMC,f,\INTE,t^*,q)$
preserves slackness and tightness of every set $S \ni t^*$.
\begin{align*}
\slack_{\NUMC,f', \INTEP}(S)
  & = 1 - \left( 1- \sum_{t \in S} f'(t) \right)^\NUMC
        - \NUMC \sum_{t\in S} f'(t)\interimp{t}\\
  & = 1- \left( 1 - \frac{\sum_{t \in S} f(t)-q}{1-q}\right)^\NUMC
        - \NUMC \cdot \frac{\sum_{t \in S} f(t)\interim{t} - (1-(1-q)^\NUMC)/\NUMC}{(1-q)^\NUMC}\\
  & = \frac{1}{(1-q)^\NUMC} \cdot
    \left( (1-q)^\NUMC - \left( 1 - \sum_{t \in S} f(t) \right)^\NUMC
        - \NUMC \sum_{t\in S} f(t)\interim{t} + (1-(1-q^\NUMC)) \right) \\
  & = \frac{\slack_{\NUMC,f,\INTE}(S)}{(1-q)^\NUMC}.
\end{align*}
Now consider a call to \ConstructSymmetricDice.
There are again two cases,
based on the cardinality of the barrier set $T^*$.
If $|T^*| > 1$, then either $q^*= f(t^*)$,
and the number of active types obviously decreases,
or $q^* < f(t^*)$,
and a new tight set is created in step~\eqref{decsymm}.
In the latter case, the size of the barrier set decreases in the next
recursive call;
as in the proof of Lemma~\ref{lem:iterations},
this follows from the uniqueness of the barrier set
(Lemma~\ref{lem:barrierunique}),
the fact that \DecrementSymmetricDice preserves tightness and
slackness of sets containing the type $t^*$
(including, in particular, the tight set $T^*$),
and the fact that tight sets are closed under intersection. 

If $T^*=\set{t^*}$ is a singleton, then tightness implies that
$\NUMC f(t^*) \interim{t^*} = 1-(1-f(t^*))^\NUMC$.
The following calculation then shows that invoking
$\DecrementSymmetricDice(\NUMC,f,\INTE,t^*,q)$ with $q = f(t^*)$
does not violate the feasibility of any set $S \not\ni t^*$. 
\begin{align*}
\slack_{\NUMC,f', \INTEP}(S)
  & = 1 - \left( 1 - \sum_{t \in S} f'(t) \right)^\NUMC
        - \NUMC \sum_{t\in S} f'(t)\interimp{t}\\
  & = 1 - \left( 1 -\frac{\sum_{t \in S} f(t)}{1-q}\right)^\NUMC
        - \NUMC\sum_{t\in S} \frac{f(t)\interim{t}}{(1-q)^\NUMC}\\
  & = \frac{1}{(1-q)^\NUMC} \cdot
    \left( (1-q)^\NUMC - \left( 1-q-\sum_{t \in S} f(t) \right)^\NUMC
                      - \NUMC\sum_{t\in S} f(t)\interim{t} \right) \\
  & = \frac{1}{(1-q)^\NUMC}
    \left( 1 - \left( 1-f(t^*)-\sum_{t \in S} f(t) \right)^\NUMC
    - \NUMC \sum_{t\in S} f(t)\interim{t}
    - \left(1- \left( 1-f(t^*) \right)^\NUMC \right) \right) \\
  & = \frac{\slack_{\NUMC,f,\INTE}(S\cup\SET{t^*})}{(1-q)^\NUMC}
  \; \ge \; 0.
\end{align*}
Because feasibility is also preserved for sets $S \ni t^*$ (as
mentioned above),
this implies that $q^*=f(t^*)$;
as a result, the number of active types decreases.

Finally, it remains to show that each recursive call can be
implemented in time polynomial in \NUMC and \NUMT.
For step~\eqref{findbarrieriid},
the proof is essentially identical to the analogous claim in
Section~\ref{sec:interim-to-dice} and is therefore omitted.
For computing $q^*$ in Step~\eqref{decsymm},
the proof is similar to the analogous claim in Section~\ref{sec:interim-to-dice}.
Let $(f_q,\INTE_q) = \DecrementSymmetricDice(\NUMC,f,\INTE,t^*,q)$;
as shown above, this operation preserves the feasibility of every
constraint corresponding to  $S \ni t^*$.
For sets $S \not\ni t^*$, there exists a threshold $h(S)$ for the
maximum value of $q \leq f(t^*) \interim{t^*}$ such that
$\slack_{\NUMC,f_q,\INTE_q}(S) \geq 0$,
and this threshold can be computed numerically.
As in the proof of Lemma~\ref{lem:runtime},
Theorem~\ref{thm:prefix} implies that it suffices to restrict
attention to \emph{level sets} of the form
$S(\alpha) = \set{ t  \neq t^* : x_q(t) > \alpha}$
for some $q \in [0, f(t^*)]$ and $\alpha \in [0,1]$.

In the i.i.d.~setting, there are at most $\NUMT-1$ such level sets,
since the relative order of types $t \neq t^*$ by  $x_q(t)$ does not
depend on $q$ (and hence, can be computed from \INTE).
As in the proof of Lemma~\ref{lem:runtime},
computing $q^*$ then reduces to minimizing $h(S)$
over these $\NUMT-1$ level sets.
\end{extraproof}
}{}

\section{From Winner-Selecting Dice to Interim Rules for Single-Winner Settings}
\label{sec:dice-to-interim}

Having shown how to compute winner-selecting dice implementing a given interim rule, 
we next show the easier converse direction:
how to compute the interim rule given winner-selecting dice
$\SET{\die[i]{t}}$ in single-winner environments.
As before, we denote the type set of candidate $i$ by $T_i$, and
assume without loss of generality that the type sets of different
candidates are disjoint.
For simplicity of exposition,
we assume that each die has a given finite support%
\footnote{Our approach extends easily to the case of continuously
  supported dice, so long as we can perform integration with respect
  to the  distributions of the various dice.},
and we write $U_i := \bigcup_{t \in T_i} \supp(\die[i]{t})$
for the combined support of candidate $i$'s dice.
We also assume that we can evaluate the probability
$\Prob{\die[i]{t} = u}$ of the face labeled with $u$,
for all candidates $i$, types $t$, and faces $u \in U_i$.

\begin{theorem} \label{thm:dice-to-interim}
Consider a single-winner selection environment with \NUMC candidates
and independent priors $\PMF[1], \ldots, \PMF[\NUMC]$,
where \PMF[i] is supported on $T_i$.
Given dice $\Set{\die[i]{t}}{i \in [\NUMC], t \in T_i}$
represented explicitly,
the interim rule of the corresponding dice-based winner selection rule
can be computed in time polynomial in \NUMC, $\NUMT = \sum_i |T_i|$,
and the total support size $\SetCard{\bigcup_i U_i}$ of all the dice. 
\end{theorem}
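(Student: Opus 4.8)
The plan is to compute the interim rule directly by a careful expansion of the winning probability, using the structure of the greedy (here trivial, since the matroid is $1$-uniform) selection rule. The key observation is that in a single-winner environment, candidate $i$ with type $t$ wins if and only if $\bar v_{t} > 0$ and $\bar v_t$ exceeds the (positive) roll of every other candidate, with ties broken uniformly at random. Thus $\interim[i]{t}$ is an expectation, over the roll $\bar v_t \sim \die[i]{t}$ of candidate $i$'s own die, of the probability that all other candidates roll strictly less than $\bar v_t$ (or less-than-or-equal with the appropriate tie-breaking correction). Since the other candidates' rolls are mutually independent, that inner probability factorizes over $j \neq i$.

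Concretely, first I would, for each candidate $j$ and each threshold value $u$ in the global support $\bigcup_i U_i$, precompute the quantity
\[
  G_j(u) \;=\; \ProbC{\bar v_{t_j} < u}{t_j \sim \PMF[j]} \;=\; \sum_{t \in T_j} \pmf[j]{t} \sum_{\substack{w \in \supp(\die[j]{t}) \\ w < u}} \Prob{\die[j]{t} = w},
\]
the probability that candidate $j$ (with a freshly drawn type) rolls strictly below $u$, and similarly $H_j(u)$ using $\le$ instead of $<$. Each $G_j(u)$ and $H_j(u)$ is a sum over at most $|T_j|$ types times the support sizes of their dice, so all of them together take time polynomial in $\NUMC$, $\NUMT$, and $\SetCard{\bigcup_i U_i}$; sorting the combined support once lets us compute these by a single sweep. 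To handle ties correctly one should, for each face value $u$, group the contributions by how many candidates could be tied at exactly $u$; the clean way is to condition on $i$'s own roll being exactly $u$, then for the remaining candidates use the fact that, conditioned on realized rolls, a uniform tie-break gives candidate $i$ a $1/(1+k)$ chance when $k$ others tie at $u$. Expanding $\Ex{1/(1+k)}$ over the independent events ``candidate $j$ rolls exactly $u$'' is awkward to do in closed form, so instead I would use the standard trick $\frac{1}{1+k} = \int_0^1 y^k \, dy$ together with independence to write the tie-broken probability as $\int_0^1 \prod_{j \neq i}\big(G_j(u) + y\, p_j(u)\big)\,dy$, where $p_j(u) = \ProbC{\bar v_{t_j} = u}{t_j \sim \PMF[j]}$; this integral is a polynomial in $y$ of degree $\NUMC-1$ and can be integrated term by term in polynomial time.

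Putting it together, for each candidate $i$ and type $t \in T_i$,
\[
  \interim[i]{t} \;=\; \sum_{\substack{u \in \supp(\die[i]{t}) \\ u > 0}} \Prob{\die[i]{t} = u} \cdot \int_0^1 \prod_{j \neq i} \Big( G_j(u) + y \, p_j(u) \Big)\, dy.
\]
Every ingredient on the right is precomputed or computable in polynomial time, and there are $\NUMT$ pairs $(i,t)$ and at most $\SetCard{\bigcup_i U_i}$ faces per die, so the total running time is polynomial in $\NUMC$, $\NUMT$, and $\SetCard{\bigcup_i U_i}$, as claimed. The main obstacle — and the only place requiring real care rather than bookkeeping — is the correct treatment of ties: one must make sure that the face value $u$ is excluded from $G_j$ but reinstated with the auxiliary variable $y$, that the candidate $i$'s own face is conditioned out of the product, and that the $u \le 0$ faces contribute zero. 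The $\int_0^1 y^k\,dy$ identity is what makes this manageable in closed form; without it one would face an exponential sum over which subsets of candidates are tied.
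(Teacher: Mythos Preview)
Your proposal is correct and follows the same overall structure as the paper: condition on candidate $i$'s own roll $u$, then exploit independence to factorize the ``beat everyone else'' probability over $j\neq i$. The one genuine difference is in how uniform tie-breaking is handled. The paper computes, for each face value $u$, the Poisson Binomial distribution $B_{i,u}$ of the number of other candidates tying at $u$ (via the standard $O(n^2)$ dynamic program) and then evaluates $\sum_{k} B_{i,u}(k)/(k+1)$. You instead use the identity $\frac{1}{1+k}=\int_0^1 y^k\,dy$ to collapse the sum over tie-sets into $\int_0^1 \prod_{j\neq i}\bigl(G_j(u)+y\,p_j(u)\bigr)\,dy$, which you expand as a degree-$(n-1)$ polynomial and integrate term by term. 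The two computations are in fact equivalent: the coefficients of your polynomial in $y$ are precisely the (unnormalized) Poisson Binomial probabilities, so expanding the product incrementally is the same $O(n^2)$ recurrence the paper's DP uses. Your integral trick is a slick repackaging that avoids naming the Poisson Binomial explicitly; the paper's version is slightly more modular in that the PMF $B_{i,u}$ could be reused for other purposes. Either way, the running time and correctness are the same. Your remark that ``without it one would face an exponential sum'' slightly undersells the alternative: the Poisson Binomial DP is the other standard way to avoid that exponential sum.
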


\begin{proof}
First, we can compute the probability mass function of each candidate
$i$'s (random) score \score{i}.
Given $u \in U_i$, we have $\Prob{\score{i} = u} = \sum_{t \in T_i} \pmf[i]{t} \cdot \Prob{\die[i]{t} = u}.$

From this probability mass function, we easily compute
$\Prob{\score{i} \leq u}$
for each $u \in U_i$ by the appropriate summation. 
When all dice faces are distinct, this is all we need;
since $\Prob{\score{i'} < u} = \Prob{\score{i'} \leq u}$
for $i' \neq i$ and $u \in U_i$,
the interim rule is given by the following simple equation:
\[
  \interim[i]{t}
   = \sum_{u \in \supp(\die[i]{t})} \Prob{\die[i]{t} = u}
   \cdot \prod_{i' \neq i} \Prob{\score{i'} \leq u}.
\]

When the dice's faces are not distinct,
recall that we break ties uniformly at random.
To account for the contribution of this tie-breaking rule,
we need the distribution of the number of other candidates that tie
candidate $i$'s score of $u$;
this is a Poisson Binomial distribution.
More, precisely, we need the Poisson Binomial distribution
with the $\NUMC-1$ parameters
$\left(\frac{\Prob{\score{i'} = u}}{\Prob{\score{i'} \leq u}} \right)_{i' \neq i}$;
we denote its probability mass function by $B_{i,u}$.
It is well known, and easy to verify, that a simple dynamic program
computes the probability mass function of a Poisson Binomial
distribution in time polynomial in its number of parameters.
Therefore, we can compute $B_{i,u}(k)$ for each
$i \in [\NUMC], u \in U_i$
and $k \in \SET{1, \ldots, \NUMC-1}$.
The interim rule is then given by the following equation:
\[
  \interim[i]{t}
  = \sum_{u \in \supp(\die[i]{t})} \Prob{\die[i]{t} = u} \cdot
    \left( \prod_{i' \neq i}\Prob{\score{i'} \leq u} \right)
       \cdot \sum_{k=0}^{\NUMC-1} \frac{B_{i,u}(k)}{k+1}.
\]

It is easy to verify that all the above computations satisfy the
claimed runtime.
\end{proof}

\section{Winner-Selecting Dice and Persuasion}
\label{sec:persuasion}

In this section, we investigate the existence of winner-selecting dice
for instances of Bayesian persuasion.
Our main result (Theorem~\ref{thm:nodice}) is to exhibit an instance
with independent non-identical actions for which there is no optimal
signaling scheme that can be implemented using winner-selecting dice.
This result is contrasted with Theorem~\ref{thm:iid-persuasion-dice},
which shows that when the actions' types are not just independent,
but identically distributed as well,
a dice-based implementation always \emph{does} exist.

\begin{theorem} \label{thm:nodice}
There is an instance of Bayesian persuasion (given in
Table~\ref{table:eg}) with independent actions
which does not admit a dice-based implementation of any
optimal signaling scheme.
Consequently, there exists a second-order interim rule which does not
admit a dice-based implementation.
\end{theorem}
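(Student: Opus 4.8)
The plan is to exhibit a small, explicit instance of Bayesian persuasion with independently but non-identically distributed actions (the instance recorded in Table~\ref{table:eg}), to determine its optimal signaling scheme, and to argue that the second-order interim rule induced by that scheme cannot be produced by any family of independent winner-selecting dice. The ``consequently'' clause is then immediate: a signaling scheme is itself a single-winner selection rule, so the optimal scheme implements its own second-order interim rule; that interim rule is therefore feasible, while by the main claim it admits no dice-based implementation. Since dice-implementability of a scheme depends only on its second-order interim rule, it suffices to work at the level of interim rules throughout.

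First I would set up and solve the persuasion linear program. By the standard revelation-principle reduction one may restrict to direct persuasive schemes that recommend an action; the sender's expected utility, as well as the obedience constraints, depend on such a scheme only through the joint law of (recommended action, realized payoff profile), which --- because the actions' types are independent --- is exactly the second-order interim rule $x_{i,j,t} = \Pr[\,i \text{ recommended} \mid t_j = t\,]$ together with the priors. Thus the optimal scheme is the optimum of a linear program over feasible second-order interim rules, and for the instance in Table~\ref{table:eg} I would write this program out and solve it. The first real task is to check that the optimum pins down the relevant entries of the second-order interim rule \emph{uniquely}, so that ``every optimal scheme'' is really just one interim rule; this is a finite but somewhat delicate verification.

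Next I would isolate the rigidity that independent dice impose. In a single-winner dice-based rule, conditioning on $t_j = t$ changes only the law of candidate $j$'s score, to $D_{j,t}$, and leaves all other candidates' scores --- and the uniform tie-break --- untouched. Let $g_{ij}(b)$ denote the probability (determined by the other candidates' dice) that candidate $i$ wins the single-winner game played among $i$, the candidates $k \notin \{i,j\}$, and one phantom candidate whose score is the fixed number $b$. Then
\[
  x_{i,j,t} \;=\; \int g_{ij}(b)\; dD_{j,t}(b),
\]
and $x_{j,j,t} = \int g_{jj}(b)\, dD_{j,t}(b)$ for an analogous $g_{jj}$. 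Crucially, $g_{ij}$ is non-increasing in $b$ and is \emph{constant} for $b \le 0$ (a non-positively-scored phantom is never selected, hence never matters), while $g_{jj}$ vanishes for $b \le 0$. The clean consequence I would exploit: if a type $t$ of action $j$ is never recommended, i.e.\ $x_{j,j,t}=0$, then $D_{j,t}$ is (a.s.) concentrated on scores that never beat $\max_{k\ne j} v_k$, and such scores are --- up to tie-breaking subtleties --- irrelevant to whether any particular $i\ne j$ wins; hence $x_{i,j,t}$ equals the single constant $\lim_{b\to-\infty} g_{ij}(b)$ for \emph{every} $i\ne j$ and every never-recommended type $t$ of $j$. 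More generally, for each $j$ the vectors $(x_{i,j,t})_i$ must all be mixtures of one monotone curve $b\mapsto(g_{ij}(b))_i$, and the curves for different $j$ come from a single common dice system.

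Finally, the contradiction and the main obstacle: the instance in Table~\ref{table:eg} is to be designed so that its (essentially unique) optimal scheme uses a payoff-irrelevant type of some action $j$ as a correlating device, namely action $j$ is recommended in neither of two distinct states $t, t'$, yet the optimal scheme recommends \emph{different} distributions over the remaining actions depending on which of $t, t'$ occurred, so that $x_{i,j,t}\ne x_{i,j,t'}$ for some $i\ne j$. By the previous paragraph this is impossible in any dice implementation, which proves the theorem. The hard part is the construction itself: building a small persuasion instance whose optimal scheme is provably unique, leaves some action unrecommended in two states, and nonetheless strictly benefits from correlating the recommendation with which of those two states was realized --- in other words, an instance where treating a payoff-irrelevant action's type as free randomness strictly helps the sender --- while keeping the instance small enough that both the LP optimum and the violated dice identity can be checked directly. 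Once such an instance and its optimal scheme are in hand, verifying feasibility of the interim rule and carrying out the final numerical check are routine.
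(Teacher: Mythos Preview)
Your high-level plan --- exhibit a concrete instance, pin down the optimal scheme via its second-order interim rule, and then show that rule violates a structural constraint that any independent-dice implementation must satisfy --- is sound, and your ``structural constraint'' is correct: if $x_{j,j,t}=0$ in a dice-based rule, then $D_{j,t}$ is supported on scores $b$ with $g_{jj}(b)=0$, and for every such $b$ (whether $b\le 0$, or $b>0$ with almost surely some other score strictly above it) the winner is determined entirely by the other candidates, so $g_{ij}(b)$ equals the constant $\Pr[i \text{ wins among } [n]\setminus\{j\}]$ and hence $x_{i,j,t}$ takes this same value for every never-recommended type $t$ of $j$.

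However, the specific contradiction you plan to derive does \emph{not} match the instance in Table~\ref{table:eg}. In that instance each action has only two types, and in every optimal scheme each action is recommended for at least one of its types: $A$ must be recommended with total probability $3/8$ (forcing $x_{A,A,1}>0$), $B$ with total probability $99/200$ (forcing $x_{B,B,1}>0$), and hence $C$ with the remaining positive probability. So there is no action $j$ with two distinct never-recommended types, and your ``correlating device'' shortcut is not available. Your more general monotone-curve remark might in principle be pushed further, but you do not indicate how, and it is not obvious.

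The paper's proof takes a rather different route. It does not establish uniqueness of the optimal second-order interim rule; instead it argues directly about any dice implementing \emph{any} optimal scheme. From the optimality conditions it extracts three dominance facts: (i) whenever $A$ has type $1$ and $C$ has type $1$, $A$ must be recommended with probability $1$, so every face of $D_{A,1}$ lies above every face of $D_{B,1}$ and $D_{B,2}$; (ii) consequently for profiles $(1,\ast,2)$ action $B$ is never recommended, so $C$ must beat $B$ with positive probability there, i.e.\ $D_{C,2}$ sometimes beats $D_{B,1}$; (iii) but then to attain the required probability $99/200$ for $B$, the scheme must recommend $B$ with probability $1$ on profiles $(2,1,\ast)$, forcing every face of $D_{B,1}$ above every face of $D_{C,1}$ and $D_{C,2}$. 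Facts (ii) and (iii) contradict each other. This ``cycle of dominance'' argument sidesteps both the uniqueness issue and the need for any action to have two null types; it exploits instead that the optimality constraints pin down certain \emph{conditional} winning probabilities to $0$ or $1$ on specific type profiles, which forces face-ordering relations that cannot be simultaneously satisfied.
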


\begin{theorem} \label{thm:iid-persuasion-dice}
Every Bayesian persuasion instance with i.i.d.~actions admits an
optimal dice-based signaling scheme.
Moreover, when the prior type distribution is given explicitly,
the corresponding dice can be computed in time polynomial in the
number of actions and types.
\end{theorem}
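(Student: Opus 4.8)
The plan is to reduce the theorem to our constructive result for symmetric first-order interim rules, Theorem~\ref{thm:dice-iid}, by exploiting the fact that in the i.i.d.\ setting all second-order interim information collapses to first-order interim information. The case $\NUMC = 1$ is trivial (recommend the unique action always), so assume $\NUMC \ge 2$. First I would invoke \citet{dughmi2016algorithmic} to restrict attention to signaling schemes that are \emph{symmetric} across the actions: starting from any optimal scheme, averaging over all permutations of the actions (with the corresponding measure-preserving permutation of the state, which is legitimate since the actions are identical with i.i.d.\ types) preserves both the sender's expected utility and persuasiveness --- the latter because the obedience constraints are linear in the scheme --- and the averaged scheme is symmetric.

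The crux is that for a symmetric scheme over i.i.d.\ actions, the sender's objective and all obedience constraints are linear functionals of the symmetric first-order interim rule $\INTE: T \to [0,1]$ alone, where $\pmf{t}\interim{t}$ is the probability that a fixed action is recommended and has type $t$. Assuming, as is standard, that some action is always recommended, a short exchangeability computation shows that, conditioned on an action $i$ being recommended, its own type is $t$ with probability $\NUMC\pmf{t}\interim{t}$, while every other action $j \ne i$ has type $t$ with probability $\frac{\NUMC}{\NUMC-1}\pmf{t}(1-\interim{t})$; both expressions are linear in $\INTE$ and, by symmetry, independent of $i$ and $j$. Since the actions are identical, the receiver's payoff from obeying a recommendation, her payoff from any deviation, and the sender's utility are therefore fixed linear functions of $\INTE$. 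Consequently, computing an optimal symmetric scheme reduces to maximizing a linear objective over the polytope of feasible symmetric interim rules cut out by the symmetric Border constraints~\eqref{eqn:symmetric-border} together with these linear obedience constraints; by \citet{dughmi2016algorithmic} (using Theorem~\ref{thm:prefix} and \citet{cai2012algorithmic,alaei12} for the Border side) this linear program can be solved in time polynomial in the number of actions and types. Let $\INTE^*$ be an optimal feasible symmetric interim rule so obtained; by the symmetrization step, its value equals the sender's optimal utility, and since every feasible symmetric interim rule is implementable by some symmetric scheme, there is a persuasive optimal symmetric scheme with first-order interim rule $\INTE^*$.

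Finally, I would apply Theorem~\ref{thm:dice-iid} to $\INTE^*$: being a feasible symmetric interim rule for the single-winner i.i.d.\ environment with prior $\PMF$ and $\NUMC$ candidates, it admits a symmetric dice-based implementation $\D$ computable in polynomial time. The rule $\D$ is itself a signaling scheme whose first-order interim rule is $\INTE^*$, and since all of its dice are identical and the types are i.i.d., $\D$ is symmetric across actions; hence by the previous paragraph its second-order interim rule --- and therefore the sender's utility and every obedience constraint --- agree with those of the optimal symmetric scheme. Thus $\D$ is an optimal, persuasive, dice-based scheme, and it is computed in polynomial time.

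I expect the main obstacle to be the middle step: carefully establishing the exchangeability identity for the conditional type distribution of a non-recommended action, and verifying that, with it, the persuasion problem genuinely becomes a polynomial-size linear program over symmetric interim rules so that the efficient-optimization machinery applies. Once this reduction is in hand, the passage from the optimal interim rule to optimal dice is an immediate application of Theorem~\ref{thm:dice-iid}, and optimality of the resulting scheme is automatic because every relevant quantity depends only on the shared first-order interim rule.
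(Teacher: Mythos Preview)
Your proposal is correct and takes essentially the same approach as the paper: reduce to symmetric schemes via \citet{dughmi2016algorithmic}, observe that a symmetric second-order interim rule is determined by its first-order component (your posterior-type identity is equivalent to the paper's $z_t=(1-y_t)/(\NUMC-1)$), and then apply the symmetric dice construction of Theorem~\ref{thm:dice-iid}. You are more explicit than the paper about the LP over symmetric Border constraints plus obedience constraints and about the final invocation of Theorem~\ref{thm:dice-iid}, but the route is the same.
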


The negative result of Theorem~\ref{thm:nodice} has
interesting implications.
Since second-order interim rules summarize all the attributes of a
winner selection rule relevant to persuasion,
second-order interim rules, unlike their first-order brethren,
can in general not be implemented by dice.
Most importantly, this result draws a sharp contrast between
persuasion and single-item auctions,
despite their superficial similarity:
it rules out a Myerson-like virtual-value characterization of optimal
persuasion schemes, and it joins the \#P-hardness result of
\cite{dughmi2016algorithmic} as
evidence of the intractability of optimal persuasion. 

\subsection{Basics of Bayesian Persuasion}
\label{prelim:persuasion}

In \emph{Bayesian persuasion},
the \NUMC candidates are \emph{actions} which a \emph{receiver} can
take.
Each action $i$ has a type \type{i}, drawn \emph{independently}%
\footnote{The draws are independent in this paper. In more general
  Bayesian persuasion models, they can be correlated.}
from the set $T_i$, according to a commonly known distribution \PMF[i].
Each type $t_i$ has associated payoffs
$s (i, \type{i})$ and $r (i, \type{i})$ for the sender and receiver,
respectively.
The \emph{sender} (or \emph{principal}) also has access to the
\emph{actual} draws $\TP = (\type{1}, \ldots, \type{\NUMC})$
of the types, and would like to use this leverage to persuade the
receiver to take an action favorable to him%
\footnote{To avoid ambiguities, we always
  use male pronouns for the sender and female ones for the receiver.}.

Thereto, the sender can commit to a (typically randomized) policy
\ALGO --- called a \emph{signaling scheme} ---
of revealing some of this information to the receiver.
It was shown by \citet{Kamenica11} that the sender can restrict
attention, without loss, to \emph{direct schemes}:
randomized functions \ALGO mapping type profiles to recommended
actions.
Naturally, the function must be \emph{persuasive}:
if action $i$ is recommended,
the receiver's posterior expected utility from action $i$ must be no
less than her posterior expected utility from any other action $i'$. 
In this sense, direct schemes can be viewed as winner selection rules
in which the actions are the candidates,
and persuasiveness constraints must be obeyed.

\subsection{Proof of Theorems}
\begin{extraproof}{Theorem~\ref{thm:nodice}}
The persuasion instance, shown in Table \ref{table:eg},
features three actions $\SET{A,B,C}$, each of which has two types $\SET{1,2}$.
The types of the different actions are distributed independently.
In the instance, the sender's utility from any particular action is a
constant, independent of the action's type. 
\begin{table}[h!]
  \centering
  \begin{tabular}{c|c|c}
    \diagbox{action}{type}& 1 & 2\\
    \hline
    $A$  & $0.5\times (100, 2)$ & $0.5\times (100, -\infty)$  \\
    \hline    
    $B$  & $0.99\times (1,3)$ & $0.01\times(1, -\infty)$ \\
    \hline
    $C$ & $0.5\times (0,0)$ & $0.5\times (0, 6)$ 
  \end{tabular}
  \caption{A Persuasion instance with no dice-based implementation.
    The notation $p \times (s, r)$ denotes that the type $(s,r)$
    (in which the sender and receiver payoffs are $s$ and $r$, respectively)
    has probability $p$.
    \label{table:eg}}
\end{table}

One (optimal, as we will show implicitly)
signaling scheme is the following.
(In writing a type vector, here and below,
we use $*$ to denote that the type of an action is irrelevant.)
\begin{itemize}[itemsep=0em, partopsep=0em,topsep=0.2em]
\item If the type vector is $(1,*,1)$, then recommend action $A$.
\item If the type vector is $(1,*,2)$, then recommend each of $A, C$
  with equal probability \half.
\item If the type vector is $(2,1,*)$, then recommend action $B$.
\item If the type vector is $(2,2,*)$, then recommend action $C$.
\end{itemize}
While this is not the unique optimal scheme,
we next prove that none of the optimal persuasion schemes admit a
dice-based implementation.

The given signaling scheme recommends action $A$
with probability $3/8$ overall,
action $B$ with probability $\frac{99}{200}$ overall,
and action $C$ with the remaining probability.
No persuasive signaling scheme can recommend $A$ with
probability strictly more than $3/8$, because conditioned on
receiving the recommendation $A$, action $C$ must be at least twice as
likely to be of type $1$ as of type $2$, in addition to action $A$ being of type $1$ with probability $1$.
Similarly, no persuasive scheme can recommend $B$ with probability
strictly more than $\frac{99}{200}$, because action $C$ must be at
least as likely to be of type $1$ as of type $2$ when $C$ is
recommended, in addition to action $B$ being of type $1$ with probability $1$.
Hence, any optimal signaling scheme must recommend $A$ with
probability $3/8$ and $B$ with probability 
$\frac{99}{200}$, and the given scheme is in fact optimal.

Suppose for a contradiction that there exist dice
$(\die[i]{j})_{i \in \SET{A,B,C}, j \in \SET{1,2}}$
implementing an optimal signaling scheme.
We gradually derive properties of these optimal signaling schemes,
eventually leading to a contradiction.

\begin{enumerate}[itemsep=0em]
\item Since action $A$ can never be recommended when it has type 2
  (the receiver would never follow the recommendation),
  it must be recommended with probability $\frac{3}{4}$
  conditioned on having type 1.
\item In particular, whenever the type profile is $(1,*,1)$, action
  $A$ must be recommended, regardless of the type of action $B$.
  This is because action $C$ must be at least twice as likely
    of type 1 as of type 2 for a recommendation of $A$ to be
    persuasive.
\item Therefore, all faces on \die[A]{1} must be larger than all faces
  on \die[B]{1} and on \die[B]{2}.
\item Because of this, action $B$ can never be recommended when the
  type profile is $(1,*,2)$. 
\item Thus, when the type profile is $(1,*,2)$, the signaling scheme
  has to recommend each of $A$ and $C$ with probability \half.
  (The recommendation could of course follow different distributions
  based on the type of $B$; such a correlation is immaterial for our
  argument.)
  \label{step:C-over-B}
\item Given that action $B$ cannot be recommended when action $A$ has
  type 1, or when action $B$ has type 2, it must \emph{always} be
  recommended for type vectors $(2,1,*)$.
\item This implies that all faces of \die[B]{1} must be larger than
  all faces on \die[C]{1} and on \die[C]{2}.
\item This is a contradiction to Step~\ref{step:C-over-B},
  which states that with positive probability,
  \die[C]{2} beats \die[B]{1}.
\end{enumerate}

Thus, we have proved that there is no dice-based implementation of any
optimal signaling scheme for the given instance.
\end{extraproof}

\begin{extraproof}{Theorem~\ref{thm:iid-persuasion-dice}}
When the actions' (or more generally: candidates')
type distributions are i.i.d., i.e.,
$T_i$ and \PMF[i] are the same for all candidates $i$,
\citet{dughmi2016algorithmic} have shown that there is an optimal
\emph{symmetric} signaling scheme,
or more generally a symmetric second-order interim rule \SINTE.
We show that any symmetric second-order interim rule \SINTE is
uniquely determined by its first-order component,
a fact implicit in \citep{dughmi2016algorithmic}.

For symmetric rules, \sinte{i}{i'}{t} depends only on whether $i=i'$
or $i \neq i'$,
but not on the identities of the candidates $i$ and $i'$.
Therefore, \SINTE can be equivalently described by two type-indexed vectors
$\vc{y}$ and $\vc{z}$,
where $y_t= \sinte{i}{i}{t}$ for all candidates $i$,
and $z_t= \sinte{i}{i'}{t}$ for all candidates $i$ and $i'$ with $i \neq i'$.
The vector $\vc{y}$ is a first-order interim rule,
and we refer to it as the \emph{first-order component} of \SINTE.
If \SINTE is feasible and implemented by \ALGO,
then $y_t = \sinte{i}{i}{t} = \ProbC{\ALGO(\TP) = i}{\type{i} = t}$
for all candidates $i$,
so $\vc{y}$ is the first-order interim rule implemented by \ALGO.
For every candidate $i$ and type $t$, we have
\[
  1 \; = \; \sum_{i'=1}^\NUMC \ProbC{\ALGO(\TP) = i'}{\type{i} = t}
  \; = \; (\NUMC-1) z_t + y_t.
\]

Therefore, $\vc{z} = \frac{\vc{1}-\vc{y}}{\NUMC-1}$,
and the first-order component of a symmetric second-order interim rule
suffices to fully describe it.
The second-order rule is also, by the preceding argument,
efficiently computable from its first-order component,
and is feasible if and only if its first-order component is a feasible
symmetric interim rule.
Moreover, by \cite{cai2012algorithmic}, 
feasibility of symmetric second-order interim rules can be checked in
time polynomial in the number of types and candidates,
and given a feasible symmetric second-order interim rule \SINTE,
a winner selection rule implementing \SINTE can be evaluated in
\camera{time polynomial in the number of types and candidates.}{%
  polynomial time.}
\end{extraproof}


\section{Directions for Future Work} \label{sec:conclusions}

We have begun an investigation of dice-based winner selection rules,
in which each of several candidates independently draws a ``score''
from a distribution (rolling a ``die''),
and a candidate set is selected to maximize the sum of scores,
  subject to a feasibility constraint.
We have shown that dice-based winner selection rules can implement all
first-order interim rules with matroid constraints,
but not all second-order interim rules;
in particular, there are instances of Bayesian persuasion in which no
optimal signaling scheme can be implemented using dice.


A natural direction for future work is to understand the limits of
dice-based winner selection rules.
While our existence proof uses matroid properties,
matroid constraints are not the limit of implementability by dice:
in \camera{the appendix}{the full version}, 
we show an example in which the feasible sets do not form a matroid,
yet every feasible interim rule within the environment is
implementable with dice.
This rules out a characterization of the form
``a feasibility constraint $\I$ has all feasible $(\INTE, \PMF)$
implementable by dice if and only if $\I$ is a matroid.''
In fact, we do not know of any feasibility constraint and
corresponding first-order feasible interim rule for which a dice-based
implementation can be ruled out,
though we strongly suspect that such examples exist.
A difficulty in verifying our conjecture is that we are not
aware of a useful general technique for proving the
\emph{non-existence} of a dice-based implementation for a given
interim rule.

Another direction is to find an efficient algorithm for matroid
environments.
To derive an efficient algorithm from our existential proof,
the functions $g_t$ and $h_t$ would have to be evaluated efficiently. 
Furthermore, even if a set of continuous dice is given, 
it is still unclear how to convert them to dice with finitely many
faces in polynomial time.





\bibliographystyle{plainnat}
\bibliography{names,conferences,ref}

\begin{thebibliography}{21}
\providecommand{\natexlab}[1]{#1}
\providecommand{\url}[1]{\texttt{#1}}
\expandafter\ifx\csname urlstyle\endcsname\relax
  \providecommand{\doi}[1]{doi: #1}\else
  \providecommand{\doi}{doi: \begingroup \urlstyle{rm}\Url}\fi

\bibitem[Aires et~al.(2002)Aires, Jonasson, and Nerman]{aires2002order}
Nibia Aires, Johan Jonasson, and Olle Nerman.
\newblock Order sampling design with prescribed inclusion probabilities.
\newblock \emph{Scandinavian journal of statistics}, 29\penalty0 (1):\penalty0
  183--187, 2002.

\bibitem[Alaei et~al.(2012)Alaei, Fu, Haghpanah, Hartline, and
  Malekian]{alaei12}
Saeed Alaei, Hu~Fu, Nima Haghpanah, Jason~D. Hartline, and Azarakhsh Malekian.
\newblock Bayesian optimal auctions via multi-to single-agent reduction.
\newblock In \emph{Proc. 13th ACM Conf. on Electronic Commerce}, page~17, 2012.

\bibitem[Border(1991)]{border}
Kim~C. Border.
\newblock Implementation of reduced form auctions: A geometric approach.
\newblock \emph{Econometrica}, 59\penalty0 (4):\penalty0 1175--1187, 1991.

\bibitem[Border(2007)]{border2007reduced}
Kim~C. Border.
\newblock Reduced form auctions revisited.
\newblock \emph{Economic Theory}, 31\penalty0 (1):\penalty0 167--181, 2007.

\bibitem[Cai et~al.(2012{\natexlab{a}})Cai, Daskalakis, and
  Weinberg]{cai2012algorithmic}
Yang Cai, Constantinos Daskalakis, and S.~Matthew Weinberg.
\newblock An algorithmic characterization of multi-dimensional mechanisms.
\newblock In \emph{Proc. 44th ACM Symp. on Theory of Computing}, pages
  459--478, 2012{\natexlab{a}}.

\bibitem[Cai et~al.(2012{\natexlab{b}})Cai, Daskalakis, and
  Weinberg]{cai2012optimal}
Yang Cai, Constantinos Daskalakis, and S.~Matthew Weinberg.
\newblock Optimal multi-dimensional mechanism design: Reducing revenue to
  welfare maximization.
\newblock In \emph{Proc. 53rd IEEE Symp. on Foundations of Computer Science},
  pages 130--139, 2012{\natexlab{b}}.

\bibitem[Dughmi(2017)]{shaddin_ISD_survey}
Shaddin Dughmi.
\newblock Algorithmic information structure design: a survey.
\newblock \emph{ACM SIGecom Exchanges}, 15\penalty0 (2):\penalty0 2--24, 2017.

\bibitem[Dughmi and Xu(2016)]{dughmi2016algorithmic}
Shaddin Dughmi and Haifeng Xu.
\newblock Algorithmic bayesian persuasion.
\newblock In \emph{Proc. 48th ACM Symp. on Theory of Computing}, pages
  412--425, 2016.

\bibitem[Gopalan et~al.(2015)Gopalan, Nisan, and Roughgarden]{gopalan}
Parikshit Gopalan, Noam Nisan, and Tim Roughgarden.
\newblock Public projects, boolean functions and the borders of {Border's
  Theorem}.
\newblock In \emph{Proc. 16th ACM Conf. on Economics and Computation}, page
  395, 2015.

\bibitem[Gr{\"o}tschel et~al.(2012)Gr{\"o}tschel, Lov{\'a}sz, and
  Schrijver]{GLSbook}
Martin Gr{\"o}tschel, L{\'a}szl{\'o} Lov{\'a}sz, and Alexander Schrijver.
\newblock \emph{Geometric algorithms and combinatorial optimization}, volume~2.
\newblock Springer Science \& Business Media, 2012.

\bibitem[Hartline(2013)]{hartlinebook}
Jason~D. Hartline.
\newblock \emph{Mechanism design and approximation}.
\newblock Now Publishers, 2013.

\bibitem[Kamenica and Gentzkow(2011)]{Kamenica11}
Emir Kamenica and Matthew Gentzkow.
\newblock Bayesian persuasion.
\newblock \emph{American Economic Review}, 101\penalty0 (6):\penalty0
  2590--2615, 2011.

\bibitem[Lasserre(2010)]{lasserre2010moments}
Jean-Bernard Lasserre.
\newblock \emph{Moments, positive polynomials and their applications},
  volume~1.
\newblock World Scientific, 2010.

\bibitem[Maskin and Riley(1984)]{maskin_riley_84}
Eric Maskin and John Riley.
\newblock Optimal auctions with risk averse buyers.
\newblock \emph{Econometrica}, 52\penalty0 (6):\penalty0 1473--1518, 1984.

\bibitem[Matthews(1984)]{matthews_84}
Steven~A. Matthews.
\newblock On the implementability of reduced form auctions.
\newblock \emph{Econometrica}, 52\penalty0 (6):\penalty0 1519--1522, 1984.

\bibitem[Mierendorff(2011)]{mierendorff2011asymmetric}
Konrad Mierendorff.
\newblock Asymmetric reduced form auctions.
\newblock \emph{Economics Letters}, 110\penalty0 (1):\penalty0 41--44, 2011.

\bibitem[Myerson(1981)]{myerson}
Roger~B. Myerson.
\newblock Optimal auction design.
\newblock \emph{Mathematics of Operations Research}, 6\penalty0 (1):\penalty0
  58--73, 1981.

\bibitem[Oxley(2006)]{oxley2006matroid}
James~G Oxley.
\newblock \emph{Matroid theory}, volume~3.
\newblock Oxford University Press, USA, 2006.

\bibitem[Pai and Vohra(2014)]{pai:vohra}
Mallesh~M. Pai and Rakesh Vohra.
\newblock Optimal auctions with financially constrained buyers.
\newblock \emph{Journal of Economic Theory}, 150:\penalty0 383--425, 2014.

\bibitem[Ros{\'e}n(1997)]{rosen1997asymptotic}
Bengt Ros{\'e}n.
\newblock Asymptotic theory for order sampling.
\newblock \emph{Journal of Statistical Planning and Inference}, 62\penalty0
  (2):\penalty0 135--158, 1997.

\bibitem[Whitney(1935)]{whitney1935abstract}
Hassler Whitney.
\newblock On the abstract properties of linear dependence.
\newblock \emph{American Journal of Mathematics}, 57\penalty0 (3):\penalty0
  509--533, 1935.

\end{thebibliography}
\newpage
\camera{%
\appendix
\section{An Interim Rule not Corresponding to any Optimal Auction}
According to \cite{myerson, border}, the linear program
which solves for the revenue-maximizing interim rule
is equivalent to the following LP:

\begin{LP}{Maximize}{%
\sum_{i=1}^\NUMC \sum_{t\in T_i} \pmf[i]{t} \interim[i]{t} \bar{v}_i(t)
}
\INTE \text{ is a feasible interim rule.}
\end{LP}

Here $\bar{v}_i$ is the (ironed) virtual value function.
When no type has virtual value 0, every optimal solution is at the
boundary of the polytope of feasible interim rules.
When some types have virtual value 0,
and we employ the convention of never awarding those types an item,
again we get an interim rule at the boundary.
Therefore, any interior point of this polytope cannot
arise as the interim rule of any optimal auction.
In particular, we have shown that there exist feasible interim
  rules not arising as allocation rules of optimal auctions.



\section{A Non-Matroid Example Implementable by Dice}
\label{sec:non-matroid}
In this section, we show a non-matroid example for which \emph{all}
feasible first-order interim rules are implementable by dice.
Notice that finding \emph{one} interim rule that is implementable by
dice within a given environment is easy:
one can calculate the interim winning probability of a collection of
trivial winner-selecting dice.
What we want is to show that dice exist for all feasible interim
rules of a non-matroid environment.  

Consider a winner-selection environment with $4$ candidates, numbered
$1, 2, 3, 4$.
The feasible winner sets are $\SET{1, 2}, \SET{3,4}$ and
their subsets, i.e., all singletons and the empty set.
Each candidate has a single type, so the type distribution is trivial.
We denote the unique type of candidate $i$ by $i$. 

Consider any ex-post rule, i.e., a distribution over the feasible sets
$\SET{1, 2}, \SET{3, 4}, \SET{1}, \SET{2}, \SET{3}, \SET{4}, \emptyset$.
Denote the ex-post rule's probability of selecting the winner set $S$ by $y_S$.
We have 
\begin{equation}\label{eq:expost-interim}
\begin{aligned}
  x(1) &= y_{\SET{1, 2}} + y_{\SET{1}}\\
  x(2) &= y_{\SET{1, 2}} + y_{\SET{2}}\\
  x(3) &= y_{\SET{3, 4}} + y_{\SET{3}}\\
  x(4) &=y_{\SET{3, 4}} + y_{\SET{4}}
\end{aligned}
\end{equation}
Notice that any feasible interim rule can be implemented by a
distribution $\vc{y}$ in this way.

Without loss of generality, assume that $x(1) \le x(2)$ and $x(3)\le x(4)$. 
Whenever \INTE is feasible, we define the following set of dice.
Each die has $-\infty$ on one side; the other side's value and
probability are given by Table~\ref{tab:face}. 

\begin{table}[ht]
  \centering
  \begin{tabular}{|c|c|c|c|c|}
  \hline
  Type $t$ & 1 & 2 & 3 & 4\\
  \hline
  positive face & 100 & 2 & 10 & 1\\
  \hline
  \makecell{probability of\\ positive face} & $x(1)$ & $\frac{x(2)-x(1)}{1-x(1) -x(3)}$ & $\frac{x(3)}{1-x(1)}$ & $\frac{x(4)-x(3)}{1-x(2)-x(3)}$\\
  \hline
\end{tabular}
\caption{Dice for \INTE \label{tab:face}}
\end{table}

One can easily verify that these dice implement the desired interim rule,
if the given probabilities for the positive face fall in the interval $[0, 1]$. 
Thus, we only need to verify that they do.
Substituting \eqref{eq:expost-interim} into the probabilities from
Table~\ref{tab:face}, we have 

\begin{table}[ht]
\begin{tabular}{|c|c|c|c|c|}
  \hline
  Type $t$ & 1 & 2 & 3 & 4\\
  \hline
  \makecell{probability of \\positive face}  & $y_{\SET{1, 2}} + y_{\SET{1}}$ & $\frac{y_{\SET{2}}-y_{\SET{1}}}{1-y_{\SET{1, 2}} - y_{\SET{1}} -y_{\SET{3, 4}}-y_{\SET{3}}}$ & $\frac{y_{\SET{3, 4}}+y_{\SET{3}}}{1-y_{\SET{1,2}}-y_{\SET{1}}}$ & $\frac{y_{\SET{4}}-y_{\SET{3}}}{1-y_{\SET{1,2}}-y_{\SET{2}}-y_{\SET{3, 4}}-y_{\SET{3}}}$\\
  \hline
\end{tabular}
\end{table}

Since $\sum_{S \subseteq \SET{1,2,3,4}} y_S =  1$ and
$x(1)\le x(2), x(3)\le x(4)$, all the probabilities indeed fall in the
interval $[0, 1]$.
Thus for any feasible interim rule \INTE in this winner-selection environment,
there is a collection of dice implementing \INTE. 

}{}
\end{document}